\documentclass[11pt]{article}

\usepackage[letterpaper, margin=1.1in] {geometry}
\usepackage{indentfirst}
\usepackage{amsfonts}
\usepackage{amsmath}
\usepackage{amsthm}
\usepackage{color}
\usepackage{eufrak}
\usepackage{verbatim}
\usepackage[makeroom]{cancel}

\usepackage{amssymb}
\usepackage{amsthm}
\usepackage{mathrsfs}
\usepackage{epsfig}
\usepackage{makeidx}
\usepackage{graphicx}
\usepackage{indentfirst}
\usepackage[numbers]{natbib}
\usepackage{color}
\usepackage[backref=page]{hyperref}
\usepackage{graphicx}
\usepackage{graphicx}
\usepackage{hyperref}
\usepackage{graphicx}

\definecolor{corlinks}{RGB}{200,0,40}
\definecolor{cormenu}{RGB}{200,0,40}
\definecolor{corurl}{RGB}{200,0,40}

\hypersetup{
colorlinks=true,
urlcolor=corlinks,
linkcolor=corlinks,
menucolor=cormenu,
citecolor=corlinks,
pdfborder= 0 0 0
}

\newtheorem{theorem}{Theorem}
\newtheorem{lemma}{Lemma}
\newtheorem{corollary}{Corollary}
\newtheorem{definition}{Definition}
\newtheorem{proposition}{Proposition}
\newtheorem{remark}{Remark}

\newcommand{\eqdef}{\stackrel{\rm def}{=}}

\DeclareMathOperator{\poly}{poly}

\DeclareMathOperator{\io}{i.o.}

\def\colorful{1}

\ifnum\colorful=1

\fi
\ifnum\colorful=0

\fi

\newcommand{\esymb}{\diamond}

\begin{document}

\title{Pseudodeterministic Constructions in Subexponential Time\vspace{0.4cm}}

\author{Igor C. Oliveira\footnote{ This work received support from CNPq grant 200252/2015-1.}\\
  \small{Charles University in Prague}\\
  \and Rahul Santhanam\footnote{\texttt{rahul.santhanam@cs.ox.ac.uk.} Supported by the European Research Council under the European Union's Seventh Framework Programme (FP7/2007-2014)/ERC Grant Agrement no. 615075.}\\
  \small{University of Oxford}\\~\\
}


\maketitle


\begin{abstract}
We study {\it pseudodeterministic constructions}, i.e., randomized algorithms which output the {\it same solution} on most computation paths. We establish unconditionally that there is an infinite sequence $\{p_n\}_{n \in \mathbb{N}}$ of increasing primes and a randomized algorithm $A$ running in expected sub-exponential time such that for each $n$, on input $1^{|p_n|}$, $A$ outputs $p_n$ with probability $1$. In other words, our result provides a pseudodeterministic construction of primes in sub-exponential time which works infinitely often.

\vspace{0.05cm}
This result follows from a much more general theorem about pseudodeterministic constructions. A property $Q \subseteq \{0,1\}^{*}$ is $\gamma$-dense if for large enough $n$, $|Q \cap \{0,1\}^n| \geq \gamma 2^n$. We show that for each $c > 0$ at least one of the following holds: (1) There is a pseudodeterministic polynomial time construction of a family $\{H_n\}$ of sets,  $H_n \subseteq \{0,1\}^n$, such that for each $(1/n^c)$-dense property $Q \in \mathsf{DTIME}(n^c)$ and every large enough $n$, $H_n \cap Q \neq \emptyset$; or (2) There is a deterministic sub-exponential time construction of a family $\{H'_n\}$ of sets, $H'_n \subseteq \{0,1\}^n$, such that for each $(1/n^c)$-dense property $Q \in \mathsf{DTIME}(n^c)$ and for infinitely many values of $n$, $H'_n \cap Q \neq \emptyset$.

\vspace{0.05cm}

We provide further algorithmic applications that might be of independent interest. Perhaps intriguingly, while our main results are unconditional, they have a non-constructive element, arising from a sequence of applications of the hardness versus randomness paradigm. 
\end{abstract}

\newpage

\section{Introduction}\label{s:introduction}

Number theory tells us that a significant fraction of $n$-bit integers are prime, but can we efficiently generate an $n$-bit prime deterministically? This is a fundamental question in computational number theory and cryptography. A naive solution is to test the $n$-bit numbers for primality in order, starting with $2^{n-1} + 1$, using the AKS primality test \citep{Agrawal02primesis}, until we find one that is prime. Known results about the distribution of prime numbers guarantee that this procedure works in time $2^{0.525n + o(n)}$ \citep{Baker-Harman-Pintz01}. Despite all the progress that has been made in understanding the behaviour of prime numbers, the best known deterministic algorithm for generating $n$-bit primes runs in time $2^{n/2+o(n)}$ \citep{DBLP:journals/jal/LagariasO87}, which is not significantly better.

If we are allowed randomness in our generating algorithm, the problem becomes almost trivial: we repeatedly guess an $n$-bit number at random and test it for primality, halting if a prime is found. Using the Prime Number Theorem, we will succeed with probability $1-o(1)$ after $\widetilde{O}(n)$ tests, each of which can be implemented in $\poly(n)$ time. Thus the efficient deterministic generation problem reduces to {\it derandomizing} this algorithm. Under a strong hardness hypothesis, such as the assumption that linear exponential time requires exponential size circuits almost everywhere, this derandomization can be performed using known results from the theory of hardness-randomness tradeoffs \citep{DBLP:conf/stoc/ImpagliazzoW97}. However, we appear to be very far from proving such   
strong circuit lower bounds.

A few years ago, the Polymath 4 project considered precisely this question of efficient generation of primes, with the goal of using the state of the art in complexity theory and in number theory to obtain better results. It was observed during the project that several famous number-theoretic conjectures, such as Cramer's conjecture and Schinzel's hypothesis H, imply better generation algorithms, as do circuit lower bound assumptions, as described above. All of these conjectures seem far beyond our reach at present. The Polymath 4 project established \citep{MR2869058} an improved algorithm for determining the {\it parity} of the number of primes in a large interval, but this has not yet yielded an unconditional improvement to the best known deterministic generation algorithm. The project posed the following question of interest to complexity theorists: does $\mathsf{BPP} = \mathsf{P}$ imply more efficient deterministic generation of primes? This is not clear because the randomized generation algorithm for primes is not an algorithm for a decision problem.

Given the difficulty of finding more efficient deterministic generation algorithms, a natural approach is to relax the question. We know that randomized generation is easy, and we do not know of good deterministic generation algorithms. Is there an intermediate notion that could perhaps be useful?

Such a notion of {\it pseudodeterministic algorithms} was defined by Goldwasser and Gat \citep{DBLP:journals/eccc/GatG11}, motivated by applications in cryptography and distributed computing, and further studied in \citep{DBLP:conf/innovations/GoldreichGR13, DBLP:journals/eccc/GoldwasserG15, DBLP:journals/eccc/Grossman15}. A pseudodeterministic algorithm is a randomized algorithm, but the algorithm generates the {\it same} output with high probability. Thus, while the algorithm is not deterministic, the output of the algorithm {\it looks} deterministic to a computationally bounded observer. In the context of generating primes, pseudodeterministic generation means that the {\it same} prime is output on most computation paths. Note that the naive random generation algorithm does not have this property.

Goldwasser and Gat  \citep{DBLP:journals/eccc/GatG11} describe the question of pseudodeterministic generation of primes as ``perhaps the most compelling challenge for finding a unique output''. To the best of our knowledge, it is not known how to do better for this problem than to use the fastest deterministic algorithm.

\subsection{Main Results}\label{ss:our_results}

The main application of our techniques is that there is unconditionally a subexponential-time pseudodeterministic algorithm for generating infinitely many primes. 

\begin{theorem}
\label{t:primes}
Let $\varepsilon > 0$ be any constant. There is an infinite sequence $\{p_n\}_{n \in \mathbb{N}}$ of increasing primes, and a randomized algorithm $A$ running in expected time $O(2^{m^{\varepsilon}})$ on inputs of length $m$ such that for each $n \in \mathbb{N}$, $A$ on input $1^{|p_n|}$ outputs $p_n$ with probability $1$.
\end{theorem}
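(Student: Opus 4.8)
The plan is to obtain Theorem~\ref{t:primes} as a consequence of the general dichotomy stated above, exploiting that primality is simultaneously a \emph{dense} property and an \emph{easy} one. Define $Q \subseteq \{0,1\}^{*}$ by $Q \cap \{0,1\}^n = \{x \in \{0,1\}^n : x \text{ has leading bit } 1 \text{ and encodes a prime}\}$, so that every element of $Q\cap\{0,1\}^n$, read as an integer, lies in $[2^{n-1},2^n)$ and hence has bit-length exactly $n$. By the Prime Number Theorem the number of primes in $[2^{n-1},2^n)$ is $\Theta(2^n/n)$, so $Q$ is $(1/n^c)$-dense for every $c\ge 2$; and by the AKS primality test $Q \in \mathsf{DTIME}(n^{c_0})$ for an absolute constant $c_0$. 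Fix $c = \max(2,c_0)$ and invoke the dichotomy for this $c$.

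First suppose alternative~(1) holds: there is a pseudodeterministic polynomial-time construction of a family $\{H_n\}$ with $H_n \cap Q' \neq \emptyset$ for every $(1/n^c)$-dense $Q' \in \mathsf{DTIME}(n^c)$ and every large $n$. Applying this with $Q' = Q$, the canonical set $H_n$ meets $Q$ for all large $n$; let $p_n$ be the lexicographically least element of $H_n \cap Q$, an $n$-bit prime. To generate $p_n$ one runs the construction, decodes the output set, and returns its lexicographically least element with leading bit $1$ that passes the AKS primality test; since primality is decided with zero error by AKS, one arranges that this outputs the fixed prime $p_n$ with probability $1$ and runs in expected polynomial time in $n=m$, which is in particular $O(2^{m^{\varepsilon}})$. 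Run on $1^{|p_n|}=1^n$ it outputs $p_n$; since the $p_n$ have strictly increasing bit-lengths, they form, after re-indexing by $\mathbb{N}$, an infinite increasing sequence of primes.

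Next suppose alternative~(2) holds: there is a deterministic sub-exponential-time construction of a family $\{H'_n\}$ with $H'_n \cap Q \neq \emptyset$ for infinitely many $n$. For each such good $n$ let $p_n$ be the lexicographically least element of $H'_n \cap Q$; it is computed deterministically in sub-exponential time (run the construction, then AKS on each element), so on input $1^{|p_n|}=1^n$ the algorithm outputs $p_n$ with probability $1$ trivially. Restricting to the infinite set of good $n$ and re-indexing by $\mathbb{N}$ produces the desired infinite increasing sequence of primes, and by choosing the parameters of the dichotomy appropriately the running time is $O(2^{m^{\varepsilon}})$ for the prescribed $\varepsilon$.

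Since the dichotomy guarantees that (1) or (2) holds, at least one of these two candidate algorithms $A$ satisfies the statement, so such an $A$ exists --- although we cannot tell which of the two it is, which is precisely the non-constructive element alluded to in the introduction. The genuine content is packaged in the dichotomy theorem, which is assumed here; within the present reduction the only point requiring care is securing probability exactly $1$ (rather than merely $1-o(1)$) in alternative~(1), and this is where the zero-error decidability of primality via AKS, together with the precise form of the pseudodeterministic construction furnished by the dichotomy, is used.
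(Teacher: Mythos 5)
Your proposal is correct and follows essentially the same route as the paper: apply the dichotomy with $Q=\mathsf{Primes}$ (dense by the Prime Number Theorem, easy by AKS), and in either branch output the lexicographically first prime of the hitting set, using the zero-error nature of the pseudodeterministic construction in branch~(1) to rerun until a non-$\perp$ output is obtained and thereby secure probability $1$ in expected polynomial time. The only minor imprecision is attributing the probability-$1$ guarantee partly to AKS; AKS's determinism just makes the selection from the hitting set canonical, while the essential ingredient is the zero-error guarantee on the hitting-set construction itself, which you do also invoke.
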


Note that the algorithm $A$ is \emph{zero-error}: on input $1^{|p_n|}$, it outputs the same prime $p_n$ on all computation paths. In fact, on any input $1^m$, the algorithm either operates deterministically and halts without output, or else it outputs the same $m$-bit prime on all computation paths.

Using the proof idea of Theorem \ref{t:primes}, we answer a variant of the question from Polymath 4: under the assumption that $\mathsf{ZPP} \subseteq \io \mathsf{DTIME}(2^{n^c})$ for some fixed $c > 0$ (which is much weaker than $\mathsf{BPP} = \mathsf{P}$), we show that there is a subexponential-time \emph{deterministic} algorithm for generating infinitely many primes (see Theorem \ref{t:cond_detprimes} in Section \ref{s:uncond_cons}).

Theorem \ref{t:primes} has some unusual features: we show that an algorithm {\it exists} satisfying the described properties, but we are not able to {\it explicitly} give such an algorithm. Similarly, while we show the existence of an infinite sequence of primes $\{p_n\}$, we are unable to bound $|p_{n+1}|$ as a function of $|p_n|$. These non-constructive features, which are surprising in the context of an algorithmic solution to a natural problem, arise because Theorem \ref{t:primes} is proved as a corollary of the following generic result about explicit constructions.

\begin{theorem}
\label{t:generic}
Call a property $Q \subseteq \{0,1\}^{*}$ $\gamma$-dense if for each large enough $n$, $|Q \,\cap\, \{0,1\}^n|/2^n \geq \gamma(n)$. For each $c > 0$ at least one of the following holds\emph{:}

\begin{enumerate}
\item[\emph{1.}] There is a deterministic sub-exponential time construction of a family $\{H_n\}$ of sets, $H_n \subseteq \{0,1\}^n$, such that for each $(1/n^c)$-dense property $Q \in \mathsf{DTIME}(n^c)$ and for infinitely many values of $n$, $H_n \cap Q \neq \emptyset$.

\item[\emph{2.}] There is a zero-error pseudodeterministic polynomial time algorithm outputting a family $\{H'_n\}$ of sets,  $H'_n \subseteq \{0,1\}^n$, such that for each $(1/n^c)$-dense property $Q \in \mathsf{DTIME}(n^c)$ and every large enough $n$, $H'_n \cap Q \neq \emptyset$.
\end{enumerate}

\end{theorem}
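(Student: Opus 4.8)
The plan is a win-win argument organized around the hardness-versus-randomness paradigm, with the case split driven by whether $\mathsf{BPP}$ (or a suitable promise-class variant) has subexponential-time deterministic simulations infinitely often. Concretely, I would first observe that the randomized construction of a hitting family is easy: given $c$, on input $1^n$ sample $\widetilde{O}(n^{c+1})$ strings uniformly from $\{0,1\}^n$; since every $(1/n^c)$-dense property $Q$ has density at least $1/n^c$, with overwhelming probability the sampled set $H'_n$ intersects $Q$, and this holds simultaneously for all $Q \in \mathsf{DTIME}(n^c)$ by a union bound over the (at most exponentially many relevant) machines of description size $O(\log n)$, or more carefully by observing there are only polynomially many ``distinct'' dense properties that matter at length $n$. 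This gives a randomized poly-time construction; the issue is that it is not pseudodeterministic, because different random seeds give different sets. The entire content of the theorem is upgrading "randomized" to either "deterministic i.o." or "zero-error pseudodeterministic a.e."

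The main step is a careful application of the Impagliazzo--Wigderson / Nisan--Wigderson style hardness-randomness tradeoff together with a "uniform" learning-or-hardness dichotomy in the spirit of Impagliazzo--Wigderson's "$\mathsf{BPP} = \mathsf{P}$ or ..." result and of Trevisan--Vadhan--style instance checkers. I would case on whether there is a language in $\mathsf{E}$ (or $\mathsf{DTIME}(2^{O(n)})$) that is hard on average for subexponential-size circuits infinitely often. If such hardness holds (for every $\varepsilon$, i.o.\ at the relevant input lengths), then the NW generator stretches $n^{\varepsilon}$ bits of hardness into $n^c$ pseudorandom bits, fooling all $\mathsf{DTIME}(n^c)$ tests; feeding all seeds of this generator and collecting the outputs yields, at infinitely many lengths $n$, a deterministic subexponential-time family $\{H_n\}$ that hits every $(1/n^c)$-dense $Q \in \mathsf{DTIME}(n^c)$ — this is alternative 1. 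If instead no such hard language exists, i.e.\ every language in $\mathsf{E}$ is, for some $\varepsilon$, easy for subexponential circuits at (almost) all lengths, then I would use this "easiness everywhere" to build — via the hardness-to-pseudorandomness converse and a search-to-decision / reconstruction argument — a zero-error pseudodeterministic polynomial-time procedure. The point is that the "easiness" assumption lets us, using $\mathsf{ZPP}$-style derandomization of the search problem "find a seed of the sampling algorithm whose output hits all dense $\mathsf{DTIME}(n^c)$ properties," canonically pin down a single set $H'_n$: the decision version ("does the candidate set hit all dense properties") becomes checkable in the right resource bound, and the canonical witness (e.g., lexicographically first good set, found by prefix search) is then computable pseudodeterministically — same output on all paths, a.e. $n$ — giving alternative 2.

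The delicate glue is that neither branch of the dichotomy is fully "constructive": the hardness/easiness assumption is a statement of existence over all $\varepsilon$ and all languages in $\mathsf{E}$, so we do not get to exhibit which case we are in, nor at which input lengths the i.o.\ guarantee in alternative 1 kicks in — this is exactly the non-constructive feature flagged in the introduction, and I would not try to remove it. I would be careful to set up the circuit-size parameters so that the NW generator's seed length is $n^{\varepsilon}$ (yielding genuine subexponential, not merely subexponential-in-a-smaller-parameter, running time), and to ensure the class of tests being fooled is closed under the operations needed for the union bound over $\mathsf{DTIME}(n^c)$ properties and for the instance-checker/reconstruction step. I expect the main obstacle to be the pseudodeterministic half: turning the $\mathsf{ZPP}$-type simulation into an algorithm with the strong "same output on every computation path" guarantee, rather than merely "same output with high probability," which requires that the verification of a candidate set be errorless — I would handle this by making the inner derandomization itself zero-error (using that under the easiness assumption one can deterministically enumerate a small set of seeds that is guaranteed to contain a good one, then verify deterministically), so that the overall construction either halts deterministically with the canonical $H'_n$ or halts without output, matching the behavior described after Theorem \ref{t:primes}.
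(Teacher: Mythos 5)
There is a genuine gap in your case analysis: the branch that is supposed to deliver the zero-error pseudodeterministic construction is hung on an \emph{easiness} hypothesis for $\mathsf{E}$, but easiness of $\mathsf{E}$ buys you nothing here. Derandomization (and in particular the ability to ``deterministically enumerate a small set of seeds guaranteed to contain a good one'') requires a \emph{hard} function; if every language in $\mathsf{E}$ has small circuits you get Karp--Lipton-style collapses, not generators. Moreover, your proposed verification step --- ``does the candidate set hit all $(1/n^c)$-dense properties in $\mathsf{DTIME}(n^c)$'' --- is not checkable in any reasonable resource bound: it quantifies over all clocked machines, and even certifying density of a single $Q$ at length $n$ costs roughly $2^n$ time. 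So the prefix-search/lexicographically-first-good-set plan has no effective decision oracle to drive it. Finally, note that if the elements you need really were deterministically enumerable under your easiness hypothesis, you would land back in alternative 1 (a deterministic construction), not in the pseudodeterministic alternative 2; branch 2 intrinsically needs randomness somewhere.

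The paper's proof resolves this by running the dichotomy on two \emph{explicit} candidate deterministic hitting-set families rather than on hardness of $\mathsf{E}$, and by extracting \emph{hardness consequences from their failure}. The family $\mathcal{H}^{\mathsf{easy}}$ consists of (prefixes of) truth tables of all circuits of size $n^{\delta}$; if it fails to hit some dense easy $Q$ almost everywhere, then every element of $Q_n$ is a certified-hard truth table, so one can sample a string, test membership in $Q$ (this is what makes the simulation zero-error), and plug it into the Impagliazzo--Wigderson generator, giving $\mathsf{BPP}\subseteq\mathsf{ZPP}$. The family $\mathcal{H}^{\mathsf{hard}}$ is the output of the Trevisan--Vadhan uniform generator; if it fails, $Q$'s polynomial-time decider is a distinguisher, and the uniform reconstruction gives $\mathsf{PSPACE}\subseteq\mathsf{BPP}$. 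If both fail, $\mathsf{PSPACE}=\mathsf{ZPP}$; diagonalization in $\mathsf{ESPACE}$ plus padding then places a $2^{\Omega(m)}$-circuit-hard function in $\mathsf{ZPE}$, and the key observation (your proposal is missing exactly this) is that a hard function in $\mathsf{ZPE}$ --- as opposed to $\mathsf{E}$ --- yields a \emph{zero-error pseudodeterministic} polynomial-time discrepancy/hitting set: the only randomness is used to compute the (canonical) truth table, so the same set is output on every non-aborting path. Your first branch (deterministic i.o.\ hitting sets from a genuinely hard $\mathsf{E}$ function) is fine as far as it goes, but without the ``failure implies hardness in $\mathsf{ZPE}$'' mechanism the second branch does not close.
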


We derive Theorem \ref{t:primes} by taking $Q$ to be the set of primes in Theorem \ref{t:generic}, and observing that the statement of Theorem \ref{t:primes} follows both from the first and second item of Theorem \ref{t:generic}. The non-constructivity comes from not knowing which of these two items holds, and we discuss this issue in more detail in Section \ref{s:discussion}.

Consider any property $Q$ as in the statement of Theorem \ref{t:generic}, namely with polynomial density and decidable in deterministic polynomial time. For any such property, there is a randomized generation algorithm in polynomial time analogous to the one we described for Primes: generate strings of length $n$ at random and test for membership in $Q$, halting if a string in $Q$ is found. What Theorem \ref{t:generic} says is that this randomized generation algorithm can {\it unconditionally} be made pseudodeterministic in a generic fashion, albeit at the cost of increasing the running time to subexponential, and only succeeding for infinitely many input lengths $n$.

Theorem \ref{t:generic} is a very general result, and the generality of our techniques enables us to apply them to give unconditional pseudodeterministic algorithms in other contexts. Even if Theorem \ref{t:primes} does not give an explicit algorithm for generating primes of every input length, it does provide evidence that such algorithms might not be out of reach of current techniques. More explicit and more efficient algorithms could perhaps be designed by tailoring them more closely to the property at hand.

Nevertheless, we are interested in the question of whether there is a fundamental bottleneck to constructivity in our proof technique for Theorem \ref{t:generic}. By refining some aspects of our approach, we are able to get {\it somewhat explicit} algorithms for generating primes as follows.

\begin{theorem}
\label{t:explicit}
For each $\varepsilon > 0$, there is a constant $k > 1$, an infinite sequence $\{q_n\}_{n \in \mathbb{N}}$ of increasing primes, and an explicit randomized algorithm $A$ halting in time $O(2^{m^{\varepsilon}})$ on inputs of length $m$ such that for each $n \in \mathbb{N}$, $A(1^{|q_n|})$ outputs $q_n$ with probability $1-o(1)$, and moreover $|q_{n+1}| < |q_n|^k$ for each $n \in \mathbb{N}$. 
\end{theorem}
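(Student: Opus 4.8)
The plan is to obtain Theorem~\ref{t:explicit} by making the proof of Theorem~\ref{t:primes} (and hence the dichotomy in Theorem~\ref{t:generic}) quantitatively explicit, paying attention to the \emph{gap} between consecutive input lengths on which the construction succeeds. Recall that Theorem~\ref{t:primes} is derived by applying Theorem~\ref{t:generic} with $Q = \mathsf{Primes}$: in Case~2 we already have a polynomial-time pseudodeterministic algorithm working for \emph{all} large $n$, so there the sequence $\{q_n\}$ can be taken with $|q_{n+1}| = |q_n|+1$ and there is nothing to do; the difficulty is entirely in Case~1, the deterministic subexponential-time infinitely-often construction, which is where the non-constructivity of the density of ``good'' input lengths enters. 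The first step is therefore to inspect where that sparseness comes from. It arises from a win-win argument based on the hardness-versus-randomness paradigm: either a suitable hard function exists (giving a PRG and hence derandomization at many lengths), or it does not, in which case one bootstraps a faster construction. I would trace each such case split and record, for each, an \emph{explicit bound} on how far apart the next usable length can be --- i.e.\ show that if length $n$ is good then some length $n' < n^{k_0}$ is also good, for an absolute constant $k_0$ depending only on the parameters of the hardness-randomness connectives and on $\varepsilon$.

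Concretely, I would carry out the following steps. First, fix $\varepsilon > 0$ and choose the constant $c$ in Theorem~\ref{t:generic} (equivalently, the density/time exponent for $Q = \mathsf{Primes}$, which is a fixed small constant since primality is in $\mathsf{P}$ and the primes are $(1/n)$-dense by the Prime Number Theorem). Second, unwind the iterated hardness-versus-randomness steps so that the ``infinitely often'' quantifier is replaced by an explicit recurrence: the set $S \subseteq \mathbb{N}$ of lengths at which the deterministic subexponential construction provably outputs a string in $Q$ satisfies, for every $n \in S$, $S \cap [n, n^{k_0}] \neq \emptyset$ for a constant $k_0 = k_0(\varepsilon)$. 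The key point is that at each level of the argument, the failure of a circuit lower bound at length $n$ yields a nontrivial algorithm usable at a \emph{polynomially related} length, not at an arbitrarily larger one; making this quantitative is just a matter of substituting concrete parameters into the Impagliazzo--Wigderson / Nisan--Wigderson style generator and the standard ``easy witness'' or win-win lemmas. Third, set $k$ slightly larger than $k_0$, enumerate $\{q_n\}$ as the primes produced at a subsequence of lengths witnessing $|q_{n+1}| < |q_n|^k$, and observe that the algorithm $A$ that runs the explicit subexponential-time construction on input $1^m$ (outputting the constructed string when $m \in S$ and halting otherwise) is itself explicit and runs in time $O(2^{m^\varepsilon})$; the success probability $1-o(1)$ rather than $1$ reflects that in the explicit version we may be forced to use a randomized PRG-based derandomization with a tiny error rather than a fully zero-error one. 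Finally, check that in Case~2 the trivial choice already beats the required bound, so the constructed $k$ works in both cases and the dichotomy is resolved uniformly.

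The main obstacle I expect is controlling the gap $k_0$ through the \emph{composition} of hardness-randomness steps: the proof of Theorem~\ref{t:generic} appears to apply the paradigm more than once (the abstract explicitly mentions ``a sequence of applications of the hardness versus randomness paradigm''), and a naive analysis would let the polynomial blow-up compound into a tower or an unbounded exponent. The real work is to argue that only a \emph{bounded} number of such applications is needed --- or that the blow-ups telescope --- so that $k_0$ remains an absolute constant (depending on $\varepsilon$ but not on $n$). A secondary, more technical obstacle is that the ``good'' lengths in Case~1 are defined by the failure of certain circuit lower bounds, and one must ensure that these failure events, while not constructively locatable, still satisfy the density property $S \cap [n, n^{k_0}] \neq \emptyset$ uniformly; this requires phrasing the underlying win-win lemma with an explicit length-gap from the outset rather than as a bare ``infinitely often'' statement, which is a cosmetic but necessary rewrite of the relevant lemmas from Section~\ref{s:uncond_cons}. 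Once the length-gap is explicit, assembling the algorithm $A$ and verifying its running time and correctness is routine.
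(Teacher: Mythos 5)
Your plan correctly identifies that the sparseness of good lengths must be traced to the ``failure on all large enough lengths'' hypothesis in the win--win argument, and that one wants a version where failure on a polynomially bounded range of lengths suffices. But there are two genuine gaps that parameter-chasing alone will not close. First, you treat Case~2 of Theorem~\ref{t:generic} as the trivial case (``nothing to do''), when it is in fact the source of the non-constructivity: the polynomial-time pseudodeterministic algorithm there is obtained from the collapse $\mathsf{PSPACE}=\mathsf{ZPP}$, which yields a hard function in $\mathsf{ZPE}$ only via an \emph{unknown} $\mathsf{ZPP}$ machine for a $\mathsf{PSPACE}$-complete problem; the resulting algorithm is not explicit, and no amount of quantitative bookkeeping in Case~1 repairs that. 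The paper's proof avoids the collapse entirely. It proves a refined, ``local'' uniform hardness--randomness tradeoff (Theorem~\ref{t:uniform_hard_quant}) in which a distinguisher for $G^\star_\ell$ on all $\ell$ in a polynomial range around $n$ yields an \emph{explicit} probabilistic $O(2^{n^{\delta}})$-time algorithm $B_\delta$ for the $\mathsf{PSPACE}$-complete language $L^\star$ at length $n$, by iteratively learning circuits for $L^\star$ starting from a brute-forced base length $n^{\delta'}$, with the distinguisher instantiated as the explicit AKS primality test. It then reduces the \emph{search} problem to $L^\star$ via the language $\mathsf{LexFirstPrime}\in\mathsf{PSPACE}$, so that the probabilistic branch outputs the bits of the lexicographically first $n$-bit prime directly, rather than passing through a non-explicit discrepancy set.

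Second, your algorithm ``outputs the constructed string when $m\in S$ and halts otherwise,'' but membership in $S$ is defined by distinguishing events over a whole polynomial range of lengths, which the algorithm cannot afford to test. The missing mechanism is the paper's observation that the algorithm \emph{can} deterministically check, in time $O(2^{n^{\varepsilon}})$, whether the single set $H_n$ contains a prime; if so it outputs the first one, and if not it behaves \emph{as if} the generator fails on the entire surrounding range and runs $B_\delta$. Correctness then follows from an interval argument: in each interval $(2^{C^{i-1}},2^{C^i}]$ at least one length must make one of the two phases succeed. Relatedly, the paper sidesteps your worry about composing hardness--randomness steps by dropping $\mathcal{H}^{\mathsf{easy}}$ altogether and doing a two-way rather than three-way analysis, at the cost of a bounded-error (not zero-error) guarantee; the $1-o(1)$ success probability in the statement comes from the learning algorithm $B_\delta$, not from ``a randomized PRG-based derandomization with a tiny error'' as you suggest.
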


The algorithm in Theorem \ref{t:explicit} is explicit, but we still cannot say for sure on which input lengths it will succeed. We do have the guarantee that the gaps between successive input lengths on which the algorithm succeeds are not superpolynomially large. Theorem \ref{t:explicit} does not strictly improve on Theorem \ref{t:primes} -- it is not necessarily true, for example, that the \emph{same} prime is output on \emph{all} computation paths. Namely, we get a \emph{bounded-error} pseudodeterministic algorithm rather than a zero-error one. However, the issue of non-constructivity is somewhat mitigated in Theorem \ref{t:explicit}.

Theorem \ref{t:generic} yields subexponential-time pseudodeterministic generation algorithms for properties that are both easy and dense. A more general context is polynomial-time {\it sampling}. Here we are given a sampler that either outputs a string (with high probability) or aborts, and we wish to design a new sampler that outputs a \emph{fixed} string with high probability, with the constraint that this fixed string should belong to the range of the original sampler. We show how to adapt our ideas to give an analogue of Theorem \ref{t:generic} for this problem. We refer to Section \ref{ss:samplable} for more details.

We also study the relationship between pseudodeterminism and derandomization. A natural problem associated with derandomization is the Circuit Acceptance Probability Problem ($\mathsf{CAPP}$): given a circuit $C$, output a small-error additive approximation to the acceptance probability of $C$. There is a simple randomized polynomial-time algorithm for this problem -- sample polynomially many inputs of $C$ at random, and output the fraction of these inputs on which $C$ accepts. This algorithm outputs different estimations of the acceptance probability of $C$ on different computation paths. We show how to get a pseudodeterministic algorithm as in Theorem \ref{t:generic} for infinitely many input lengths, but our algorithm is only guaranteed to succeed distributionally rather than in the worst case. (We refer to Section \ref{s:pseudo_derand} for a precise formulation of the result.)

\begin{theorem}
\label{t:capp}
For any $\varepsilon > 0$ and any polynomial-time samplable sequence $\mathfrak{D} = \{\mathcal{D}_n\}$ of distributions over Boolean circuits, $\mathsf{CAPP}$ is pseudodeterministically solvable for infinitely many input lengths $n$ in time $2^{O(n^{\varepsilon})}$ with high probability over $\mathcal{D}_n$.
\end{theorem}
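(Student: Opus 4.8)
The plan is to mimic the proof of Theorem \ref{t:generic}, but replacing the ``dense and easy'' property $Q$ by the sampling/averaging structure underlying $\mathsf{CAPP}$, and replacing worst-case success by success on a random draw from $\mathcal{D}_n$. First I would recall the high-level structure of the win-win argument behind Theorem \ref{t:generic}: one considers the hypothesis that sufficiently strong circuit lower bounds hold (so that a suitable hardness-to-randomness generator derandomizes the relevant $\mathsf{BPP}$-type algorithm in polynomial time, giving the pseudodeterministic construction that always works); in the complementary case the lower bounds fail, which means there is a nontrivial \emph{deterministic} simulation of the randomized procedure infinitely often, obtained by guessing and verifying a small circuit for the putatively hard function. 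In the $\mathsf{CAPP}$ setting, the object to be output is a rational approximation $\widetilde{p}$ to $\Pr_x[C(x)=1]$; the ``easy dense property'' role is played by the set of good approximations (an interval of width $2\varepsilon$ around the true value), which is dense and easy to test given oracle access to sampling from $C$. The key point is that, for pseudodeterminism, we must single out a \emph{canonical} good approximation: I would fix it by rounding to the nearest multiple of some fixed grid, say $\lfloor p/\delta \rceil \cdot \delta$, and then argue that all computation paths of the derandomized procedure agree on this canonical value.

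Next I would carry out the two cases explicitly. \emph{Case 1:} $\mathsf{E} \not\subseteq \mathsf{io\text{-}SIZE}(2^{o(n)})$ (or whatever exact hardness hypothesis the paper uses for the polynomial-time branch). Then the Impagliazzo--Wigderson generator, or the Nisan--Wigderson generator instantiated with the hard function, fools all polynomial-size circuits, in particular the circuit that, given the description of $C$ and a seed, runs the naive sampling estimator. Hence we can enumerate all seeds deterministically in polynomial time, compute the median (or rounded mean) estimate, and this is a fixed good approximation on every computation path; this yields the stronger ``every large enough $n$, worst case'' conclusion, which certainly implies the distributional infinitely-often statement. \emph{Case 2:} the hardness hypothesis fails. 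Then for infinitely many $n$ there is a subexponential-size circuit (equivalently, a subexponentially long deterministic advice/guess) computing the hard truth table; the pseudodeterministic algorithm on input length $n$ guesses this object, verifies it deterministically, uses it to build a generator with subexponential seed length, and then proceeds as in Case 1 but in time $2^{O(n^{\varepsilon})}$. The verification step and the fact that the generator is now only good on those infinitely many $n$ is exactly where the ``infinitely often'' restriction enters.

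The one genuinely new ingredient compared to Theorem \ref{t:generic} is handling the \emph{distributional} rather than worst-case guarantee, and this is where I expect the main obstacle. Unlike a fixed property $Q$, the ``target interval'' depends on the input circuit $C$, and different $C$'s in the support of $\mathcal{D}_n$ have different acceptance probabilities; moreover a pseudodeterministic generation result as in Theorem \ref{t:generic} only gives us, on input $1^n$, a single hitting set $H_n$, whereas here we need, for each sampled $C$, a value that hits the good-approximation interval \emph{for that $C$}. I would bridge this by viewing the estimator as a single randomized algorithm $R(C, r)$ and invoking the pseudodeterministic derandomization of $R$ \emph{with $C$ as part of the input}; the subtlety is that the hardness-versus-randomness machinery must fool circuits that take the (random) input $C$ as a nonuniform parameter, so we need the generator to be secure against all polynomial-size circuits, not merely against a fixed one — which is exactly what the standard generators provide, at the cost that in Case 2 the good input lengths are determined nonuniformly and we only get success with high probability over $\mathcal{D}_n$ because the deterministic simulation's correctness is only guaranteed on a $1-o(1)$ fraction of seeds, hence (by Markov/averaging) on a $1-o(1)$ fraction of circuits $C \sim \mathcal{D}_n$. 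Assembling these pieces, together with the canonicalization-by-rounding to guarantee a genuinely fixed output, gives the claimed $2^{O(n^\varepsilon)}$-time pseudodeterministic $\mathsf{CAPP}$ algorithm infinitely often.
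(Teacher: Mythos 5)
There is a genuine gap, and it lies at the heart of your win-win analysis. You branch on a \emph{nonuniform} hardness hypothesis for $\mathsf{E}$ (whether $\mathsf{E} \not\subseteq \mathsf{io\text{-}SIZE}(2^{o(n)})$), but this dichotomy cannot be exploited unconditionally. In your Case 2 the hypothesis fails, meaning the candidate hard function \emph{has} small circuits; you then propose to "guess this object, verify it deterministically, and use it to build a generator" --- but an easy function is useless as the seed of a Nisan--Wigderson/Impagliazzo--Wigderson generator, there is no deterministic way to verify hardness, and the mere existence of small circuits for $\mathsf{E}$ gives no algorithmic handle whatsoever. More fundamentally, neither of your two branches ever produces a \emph{randomized} object, so nothing in your argument explains where pseudodeterminism comes from: if your Case 1 hypothesis held you would simply have a deterministic worst-case algorithm, and we have no idea how to prove that it holds.

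The paper's proof instead runs the win-win on a \emph{uniform} hardness-randomness tradeoff (Lemma \ref{l:PRGUnif}, from Impagliazzo--Wigderson and Trevisan--Vadhan): either the explicit generator $\mathfrak{G}$ is an i.o.PRG on average over $\mathfrak{D}$ --- this is the case supplying both the "infinitely often" and the distributional restriction, since the samplable distribution over circuits is exactly the uniform adversary the tradeoff can handle --- or its failure yields the collapse $\mathsf{PSPACE} = \mathsf{BPP}$. The collapse is then converted, via diagonalization in $\mathsf{ESPACE}$ and padding, into the statement that $\mathsf{BPE}$ requires circuits of size $2^{\Omega(m)}$, and a hard function in the \emph{probabilistic} class $\mathsf{BPE}$ is precisely what makes the resulting discrepancy set pseudodeterministic: the truth table fed to the generator of Theorem \ref{t:IW_generator} is computed by a bounded-error machine that outputs the \emph{same} table with high probability (Lemma \ref{l:cond_discset} and Remark \ref{r:nonuniform}), so the CAPP estimate is automatically canonical and no rounding device is needed. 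This chain --- failure of a uniform PRG $\Rightarrow$ collapse $\Rightarrow$ circuit lower bounds for a probabilistic exponential-time class $\Rightarrow$ pseudodeterministic discrepancy sets --- is the missing idea; without it, your case analysis is a conditional derandomization statement rather than an unconditional pseudodeterministic one. You have also inverted which case contributes the "infinitely often" qualifier: in the paper the collapse case gives an almost-everywhere, worst-case, polynomial-time pseudodeterministic algorithm, and only the PRG-succeeds case is restricted to infinitely many lengths and to high probability over $\mathcal{D}_n$.
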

 
One of the main questions in the theory of derandomization concerns the relationship between ``white-box'' and ``black-box'' derandomization. Black-box derandomization refers to derandomization using a fixed pseudorandom generator, while in white-box derandomization, we are given a circuit and asked to approximate its acceptance probability deterministically. Black-box derandomization implies white-box derandomization of randomized algorithms, and white-box derandomization implies deterministic simulations of languages in $\mathsf{BPP}$. However, it is unknown whether these implications can be reversed in general (cf.~\citep{DBLP:conf/coco/Fortnow01, DBLP:books/sp/goldreich2011/Goldreich11g}), and separations are known in relativized worlds \citep{DBLP:conf/coco/Fortnow01}. 

We prove that these notions are all in fact {\it equivalent} in the setting of subexponential-time derandomization that works infinitely often on average over polynomial-time samplable distributions.  (While Theorem \ref{t:capp} provides a randomized algorithm, we stress that the algorithms postulated in Theorem \ref{t:equiv} below are all \emph{deterministic}. We refer to Section \ref{s:pseudo_derand} for definitions.)

\begin{theorem}
\label{t:equiv}
The following statements are equivalent\emph{:}
\begin{enumerate}
\item[\emph{1.}] For each polynomial-time samplable distribution $\mathfrak{D}$ of Boolean circuits and each $\varepsilon > 0$, there
is an \emph{i.o.PRG} $\mathfrak{G}$ on average over $\mathfrak{D}$ with seed length $n^{\varepsilon}$ that is computable in time $2^{O(n^{\varepsilon})}$.

\item[\emph{2.}] For each polynomial-time samplable distribution $\mathfrak{D}$ over Boolean circuits and each $\varepsilon > 0$, $\mathsf{CAPP}$ is solvable infinitely often in time $2^{O(n^{\varepsilon})}$ on average over $\mathfrak{D}$.

\item[\emph{3.}] For each polynomial-time samplable distribution $\mathfrak{D}$ over input strings and each $\varepsilon > 0$, $\mathsf{BPP}$ is solvable infinitely often in time $2^{O(n^{\varepsilon})}$ with $O(\log(n))$ bits of advice on average over $\mathfrak{D}$.

\item[\emph{4.}] For each $\varepsilon > 0$, $\mathsf{BPP}$ is solvable infinitely often in time $2^{O(n^{\varepsilon})}$ on average over $U_n$.
\end{enumerate}
\end{theorem}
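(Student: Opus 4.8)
The plan is to establish the cycle of implications $1 \Rightarrow 2 \Rightarrow 3 \Rightarrow 4 \Rightarrow 1$, since any single chain closing back on itself proves all four statements equivalent. The directions $1 \Rightarrow 2$ and $2 \Rightarrow 3$ are the ``easy'' hardness-versus-randomness bookkeeping: given an i.o.PRG on average over $\mathfrak{D}$ with seed length $n^{\varepsilon}$ computable in time $2^{O(n^{\varepsilon})}$, one solves $\mathsf{CAPP}$ on a circuit $C$ drawn from $\mathcal{D}_n$ by cycling over all $2^{n^{\varepsilon}}$ seeds, feeding each generator output to $C$, and reporting the empirical acceptance rate; correctness on a $1-o(1)$ fraction of circuits follows from the average-case PRG guarantee, and the time bound is $2^{O(n^{\varepsilon})}$. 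For $2 \Rightarrow 3$, a $\mathsf{BPP}$ language has, for each input $x$ of length $n$, an associated polynomial-size circuit $C_x$ whose acceptance probability is bounded away from $1/2$; one pushes a samplable distribution over inputs through $x \mapsto C_x$ to get a samplable distribution over circuits, applies item 2, and uses $O(\log n)$ bits of advice to record the (unknown) input length on which the i.o. guarantee fires and to handle the promise gap. The step $3 \Rightarrow 4$ just specializes the samplable distribution to the uniform distribution $U_n$ and absorbs the logarithmic advice, which for a single fixed language can be hardwired — here one must be slightly careful that the ``infinitely often'' set of good lengths is the same, which is why the advice is harmless.

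The main obstacle is the final implication $4 \Rightarrow 1$, which must manufacture an average-case pseudorandom generator out of a merely average-case, infinitely-often, subexponential-time simulation of $\mathsf{BPP}$ over the uniform distribution. The idea is to run the ``easy witness / learning'' style argument underlying Theorem \ref{t:generic} in reverse: either there is a pseudodeterministic polynomial-time construction of a hitting family $\{H'_n\}$ for all dense easy properties (item 2 of Theorem \ref{t:generic}), in which case one bootstraps this into a PRG by a standard Nisan--Wigderson construction applied to a hard truth table extracted from the construction, exploiting that the hitting set avoids the dense property ``truth tables of small circuits''; or else we are in item 1 of Theorem \ref{t:generic} and already have a deterministic subexponential i.o. construction, which likewise yields the generator. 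The delicate point is matching the quantifiers: the generator we need is \emph{infinitely often} and only \emph{on average over $\mathfrak{D}$}, which is exactly the weakening that makes the reverse direction possible — the hardness we extract need only fool circuits that arise with non-negligible probability under $\mathfrak{D}$, so we can afford a generator that is correct only on the (infinitely many) input lengths supplied by the construction of Theorem \ref{t:generic}.

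I expect the bulk of the technical work to be in this $4 \Rightarrow 1$ step: verifying that the uniform-distribution, average-case, i.o. simulation of $\mathsf{BPP}$ suffices to drive the win-win of Theorem \ref{t:generic} (in particular that the relevant ``hard'' language is in a class the hypothesis can simulate, and that the advice and average-case errors compose without blowing past $2^{O(n^{\varepsilon})}$ or past a $1-o(1)$ success rate), and in checking that the distribution transformations $x \mapsto C_x$ and input-length-padding preserve polynomial-time samplability and the infinitely-often structure. The remaining implications are routine once the definitions in Section \ref{s:pseudo_derand} are unwound, so I would present them compactly and devote the main argument to $4 \Rightarrow 1$.
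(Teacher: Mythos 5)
Your cycle $1 \Rightarrow 2 \Rightarrow 3 \Rightarrow 4 \Rightarrow 1$ matches the paper's decomposition, and your treatments of $1 \Rightarrow 2$ and $2 \Rightarrow 3$ are essentially the paper's. However, there are two genuine gaps in the remaining steps.

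For $3 \Rightarrow 4$, you claim the $O(\log n)$ bits of advice ``can be hardwired'' for a single fixed language. They cannot: the advice is a different string for each input length, so it is non-uniform information that a single deterministic algorithm does not possess, and there is no way to verify which of the $\mathrm{poly}(n)$ candidate advice strings is correct when simulating a $\mathsf{BPP}$ language. The paper removes the advice non-trivially: from item (3) over $U_n$ one first derives $\mathsf{EXP} \neq \mathsf{BPP}$ by diagonalization, and then invokes Corollaries 7 and 9 of Impagliazzo--Wigderson \citep{DBLP:journals/jcss/ImpagliazzoW01} (the uniform hardness-randomness tradeoff) to obtain an advice-free infinitely-often average-case simulation. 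Without some such argument your step does not go through.

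For $4 \Rightarrow 1$, your proposed route through the dichotomy of Theorem \ref{t:generic} has two problems. First, it never uses hypothesis (4): the win-win of Theorem \ref{t:generic} is unconditional, so if your argument worked it would establish item (1) unconditionally, which would collapse the whole equivalence and is far stronger than what is being claimed. Second, in the \textbf{PSEUDO} branch the hitting-set family is produced by a \emph{randomized} (pseudodeterministic) algorithm, so any ``hard truth table extracted from the construction'' is only computable with randomness, and the resulting generator would not be the deterministic, $2^{O(n^{\varepsilon})}$-time-computable object that item (1) demands. The paper's actual argument is different: (4) gives $\mathsf{EXP} \neq \mathsf{BPP}$ by diagonalization; by a Karp--Lipton argument either $\mathsf{EXP} \nsubseteq \mathsf{P}/\mathsf{poly}$ (in which case \citep{DBLP:journals/cc/BabaiFNW93} gives a worst-case i.o.PRG) or $\mathsf{PSPACE} \neq \mathsf{BPP}$ (in which case the contrapositive of Lemma \ref{l:PRGUnif}, i.e.\ the Trevisan--Vadhan uniform tradeoff, gives an average-case i.o.PRG against every samplable distribution). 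You need some version of this conditional win-win rather than the unconditional one from Theorem \ref{t:generic}.
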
 

\vspace{0.2cm}

Therefore, in order to establish all these items it is necessary and sufficient to turn the (unconditional) pseudodeterministic algorithm from Theorem \ref{t:capp} into a deterministic algorithm.

\subsection{Related Work}\label{ss:related_work}

There has been a lot of work on explicitly constructing combinatorial objects that can be shown to exist by the probabilistic method. Vadhan \citep{TCS-010} surveys known unconditional constructions of such objects, and connections to the theory of pseudorandomness. There are important properties, such as the Ramsey property, for which optimal explicit constructions are still not known, though there has been much recent progress (cf.~\citep{DBLP:conf/stoc/ChattopadhyayZ16, DBLP:conf/stoc/Cohen16}). However, in many cases, such constructions do exist if the notion of explicitness is weakened or if a sufficiently strong derandomization hypothesis is made \citep{MR2275721, DBLP:journals/mst/Santhanam12}. 

The techniques used to show explicit constructions of combinatorial objects do not seem to be directly relevant to constructions of algebraic objects such as irreducible polynomials or number-theoretic objects such as primes. Shoup \citep{Shoup90} shows how to deterministically construct an irreducible polynomial of degree $n$ over the field $\mathbb{F}_p$ in time polynomial in $n$ and $p$. This is useful for constructions of irreducible polynomials over fields of small characteristic, but the large-characteristic case is still open -- ideally we would like the construction algorithm to operate in time $\poly(n, \log(p))$. 

For primes, known results are even weaker. The fastest known algorithm generates $n$-bit primes in time $2^{(\gamma + o(1)) n}$, for $\gamma = 1/2$ \citep{DBLP:journals/jal/LagariasO87}. There are algorithms that achieve an arbitrarily small constant $\gamma > 0$, but assume certain number-theoretic conjectures. Faster running times can be obtained under stronger conjectures, and we refer to the introduction of \citep{MR2869058} for more information. 

Assuming standard (but hard to prove) derandomization hypotheses, stronger explicit constructions are known. 
It is folklore that the existence of hitting sets against a class $\mathfrak{C}$ of algorithms  yields explicit deterministic constructions for every dense property computed by $\mathfrak{C}$-algorithms. References that discuss this include \citep{MR2275721, DBLP:journals/mst/Santhanam12, DBLP:books/sp/goldreich2011/Goldreich11g}). As an example, the assumption that $\mathsf{E}$ requires exponential-size Boolean circuits implies polynomial-time explicit constructions of irreducible polynomials and of primes. However, in this work, we are interested in {\it unconditional}, albeit pseudodeterministic, constructions for properties decidable in polynomial time. (While there are unconditional pseudorandom generators against restricted classes of algorithms such as small-depth polynomial size circuits, it is known that such circuits cannot decide several properties of interest, such as prime numbers \citep{DBLP:journals/jcss/AllenderSS01}.)

Perhaps the most closely related result is a certain unconditional derandomization of one-sided polynomial time randomized algorithms in subexponential time established in \citep{DBLP:journals/jcss/Kabanets01}. However, the focus there is on decision problems, and as remarked above, such results do not immediately imply pseudodeterministic constructions. The {\it easy witness} method from \citep{DBLP:journals/jcss/Kabanets01} is an important tool in some of our proofs.

The investigation of the power and limitations of pseudodeterministic algorithms is fairly recent. A sequence of works \citep{DBLP:journals/eccc/GatG11, DBLP:conf/innovations/GoldreichGR13, DBLP:journals/eccc/GoldwasserG15, DBLP:journals/eccc/Grossman15} has started the development of a more general theory of pseudodeterministic computation. These papers give pseudodeterministic algorithms that are faster than their known deterministic counterparts for various concrete search problems of interest, and also give some structural results.

Connections between black-box derandomization, white-box derandomization and deterministic simulation of $\mathsf{BPP}$ are explored by Fortnow \citep{DBLP:conf/coco/Fortnow01}, who shows relativized separations between these notions. Goldreich \citep{DBLP:books/sp/goldreich2011/Goldreich11g} shows an equivalence between white-box derandomization and black-box derandomization in the average-case setting; however he consider pseudo-random generators that work against uniform algorithms, rather than against non-uniform circuits. Implagliazzo, Kabanets and Wigderson \citep{DBLP:journals/jcss/ImpagliazzoKW02} give an equivalence between white-box and black-box derandomization in the setting where we allow the derandomization to be done {\it non-deterministically} with a small amount of advice, and the derandomization is only required to be correct for infinitely many input lengths.

\subsection{Techniques}\label{ss:techniques}

Suppose we wish to pseudodeterministically generate strings satisfying some property $Q$ that is dense and is decidable in polynomial time, such as $\mathsf{Primes}$.\footnote{For concreteness, we let $\mathtt{Primes} \eqdef \{w \in \{0,1\}^* \mid w = w_{n-1} w_{n-2} \ldots w_0,~w_{n-1} = 1,~\text{and}~\sum_{i = 0}^{n-1}2^i w_i~\text{is prime}\,\}$.} It is well-known that using standard hardness-randomness tradeoffs, efficient deterministic generation can be done under a strong enough circuit lower bound assumption for $\mathsf{E} \eqdef \mathsf{DTIME}(2^{O(n)})$ (linear exponential time).

Our first observation is that efficient pseudodeterministic generation follows from {\it weaker} circuit lower bound assumptions, namely circuit lower bounds for the probabilistic classes $\mathsf{BPE} = \mathsf{BPTIME}(2^{O(n)})$ (bounded-error linear exponential time) and $\mathsf{ZPE} = \mathsf{ZPTIME}(2^{O(n)})$ (zero-error linear exponential time). Pseudodeterministic algorithms come in two flavours: \emph{zero-error}, where the algorithm outputs a fixed string (with high probability) or else aborts, and \emph{bounded-error}, where it outputs a fixed string on most computation paths but might output other strings on exceptional computation paths. It turns out that strong enough circuit lower bounds for $\mathsf{BPE}$ imply efficient bounded-error pseudodeterministic generation, and strong enough circuit lower bounds for $\mathsf{ZPE}$ imply efficient zero-error pseudodeterministic generation. This is because a hard function in $\mathsf{BPE}$ (resp.~a hard function in $\mathsf{ZPE}$) yields a bounded-error (resp.~zero-error) pseudodeterministic construction of a \emph{discrepancy set} (see Section \ref{s:preliminaries}). In turn, once such a discrepancy set is obtained, its elements can be checked in some fixed order for membership in $Q$, and the first such element belonging to $Q$ can then be output pseudodeterministically.

Note that circuit lower bounds for $\mathsf{BPE}$ do not seem useful for decision problems: using a sufficiently hard function in $\mathsf{BPE}$ in a pseudorandom generator would merely yield the tautologous inclusion $\mathsf{BPP} \subseteq \mathsf{BPP}$. However, such circuit lower bounds {\it are} useful for pseudo-derandomization of search problems.\footnote{It is natural to wonder if there are techniques tailored to $\mathsf{BPE}$ and $\mathsf{ZPE}$ lower bounds. We refer the interested reader to the approaches outlined in \citep{1611.01190}.}

To turn our conditional pseudodeterministic generation algorithms into unconditional ones, we use the uniform hardness-randomness tradeoffs of \citep{DBLP:journals/jcss/ImpagliazzoW01, DBLP:journals/cc/TrevisanV07} and win-win analysis, showing that both the success and the failure of certain pseudorandom generators can be exploited to give non-trivial pseudodeterministic constructions. The specific win-win analysis we use depends on the intended application. In order to make some constructions zero-error, we further combine the arguments with the easy witness method \citep{DBLP:journals/jcss/Kabanets01}.

We describe next the main conceptual ideas behind the proof of each main result stated in Section \ref{ss:our_results}.\\
\\
{\noindent \bf Theorem \ref{t:primes}} (Non-constructive zero-error pseudo-deterministic algorithm for $\mathtt{Primes}$).~Recall that Theorem \ref{t:primes} is a straightforward consequence of Theorem \ref{t:generic}. In each case, it follows that there is a pseudodeterministic subexponential time algorithm which simply outputs the lexicographic first $n$-bit prime in the hitting set. The non-constructive aspect of Theorem \ref{t:primes} arises because we do not know which of items (1) and (2) holds in Theorem \ref{t:generic}.\\
\\
{\noindent \bf Theorem \ref{t:generic}} (Unconditional pseudo-deterministic hitting set generators).~Here we use a win-win-win analysis. We try two different candidate hitting set generators $\mathcal{H}^{\mathsf{easy}}$ and $\mathcal{H}^{\mathsf{hard}}$, both computable in deterministic subexponential time. The first generator is based on the easy witness method of Kabanets \cite{DBLP:journals/jcss/Kabanets01}, and the second is the generator of Trevisan-Vadhan \cite{DBLP:journals/cc/TrevisanV07}. If either of the corresponding hitting set families intersects $Q \cap \{0,1\}^n$ infinitely often, we have that the first item of Theorem \ref{t:generic} holds. If not, then we show that we have the complexity collapse $\mathsf{PSPACE} = \mathsf{ZPP}$. We exploit this collapse to derive  strong circuit lower bounds in $\mathsf{ZPE}$, and are then able to use our conditional argument for efficient pseudodeterministic generation to conclude the second item.\\
\\
{\noindent \bf Theorem \ref{t:explicit}} (Constructive bounded-error pseudo-deterministic algorithm for $\mathtt{Primes}$).~The proof for Theorem \ref{t:generic} gives no information on the sparsity of input lengths for which the construction succeeds, nor does it give us an explicit algorithm. An important reason for this is that we prove item (2) to hold in Theorem \ref{t:generic} only if the candidate hitting set generators in item (1) fail almost everywhere. We first show a refinement of the main technical lemma which allows us to derive a consequence for item (2) from a failure of item (1) for all input lengths in a polynomially large range of input lengths. We then crucially use the fact that item (1) gives deterministically generatable hitting sets which can be checked for correctness somewhat efficiently to guarantee that either item (1) or item (2) must hold in every large enough polynomial range of input lengths.\footnote{The possible sparsity of the sequence of primes produced by our algorithm is an artifact of our complexity-theoretic techniques. In contrast, constructions using standard number-theoretic techniques typically use analytic methods and succeed for every large enough input length (James Maynard, personal communication).} The details are somewhat technical, and we provide a more intuitive explanation in Section \ref{s:explicit}.\\
\\
{\noindent \bf Theorem \ref{t:capp}} (Pseudodeterministic algorithm for $\mathsf{CAPP}$).~We use a win-win analysis once again, relying on the fact that pseudo-random generators can be employed to approximate the acceptance probability of circuits in a deterministic fashion. If the candidate pseudo-random generator we use succeeds infinitely often, we can solve $\mathsf{CAPP}$ deterministically infinitely often; if it fails in an average-case sense, we adapt the arguments used in the proof of Theorem \ref{t:generic} to show how to solve $\mathsf{CAPP}$ pseudodeterministically in polynomial time.\\
\\
{\noindent \bf Theorem \ref{t:equiv}} (Equivalences for deterministic algorithms in the average-case ).~The most interesting implications rely on results from previous works. The simulation in (3) assuming (2) introduces advice, and this advice is eliminated in item (4) as follows.   Under assumption (3), it follows by diagonalization that $\mathsf{EXP} \neq \mathsf{BPP}$. In turn, \citep{DBLP:journals/jcss/ImpagliazzoW01} showed that this hypothesis provides sub-exponential time simulations of $\mathsf{BPP}$, as required in (4). Perhaps a more surprising connection is that (4) implies (1). Our argument is based on another win-win analysis, and relies on  results from \citep{DBLP:journals/cc/BabaiFNW93, DBLP:journals/jcss/ImpagliazzoW01, DBLP:journals/cc/TrevisanV07}. Under (4), it can be shown that $\mathsf{EXP} \neq \mathsf{BPP}$, and that this implies that either $\mathsf{EXP} \nsubseteq \mathsf{P}/\mathsf{poly}$, or $\mathsf{PSPACE} \neq \mathsf{BPP}$. In each case, we argue that any samplable distribution admits an average-case pseudo-random generator satisfying the conditions in (1).

\subsection{Constructivity: PSEUDO versus SPARSE}\label{s:discussion}

Note the quantification in the statement of Theorem \ref{t:generic}: there is a pseudodeterministic polynomial time algorithm producing a hitting set family which works for {\it every} sufficiently dense easy property, or there is a deterministic subexponential time algorithm producing hitting set families which work infinitely often for {\it every} sufficiently dense easy property. Thus we provably live in one of two worlds:\footnote{Inspired by the worlds of Impagliazzo \citep{Impagliazzo95}.} a world {\bf PSEUDO} where there is a generic hitting set family for sufficiently dense easy properties that is computable in zero-error pseudodeterministic polynomial time, or a world {\bf SPARSE} where generic hitting set families are computable deterministically but only in subexponential time, and moreover the hitting sets only work infinitely often. These two worlds could co-exist. Indeed many complexity theorists believe that linear exponential time requires exponential size circuits on almost all input lengths, which would imply the co-existence of the two worlds. However, given the available techniques at this point in time, we are only able to establish that we live in one of the two worlds.

The fact that we can show that we live in one of these two worlds, but don't know which one, leads to a certain {\it non-constructivity} in our results, which is ironic given that we are motivated by explicit constructions! We can show that we live in one of the worlds {\bf PSEUDO} or {\bf SPARSE}, and in the latter case, our proof provides an explicit algorithm witnessing that we live in that world, namely the algorithm producing the union of the hitting set families $\mathcal{H}^{\mathsf{easy}}$ and $\mathcal{H}^{\mathsf{hard}}$. However, in the former case, the proof does not provide an explicit algorithm witnessing that we live in that world. This is because the proof relies on the existence of an easy dense property which acts as distinguisher for the hitting set families, and we do not know a priori which property this is. 

A further element of non-constructivity in Theorem \ref{t:generic} is that in case we live in the world {\bf SPARSE}, we have no information on the set of input lengths for which the hitting set works, except that this set is infinite. This set could be arbitrarily sparse. The reason is that we can only show the second item based on the hitting set families $\mathcal{H}^{\mathsf{easy}}$ and $\mathcal{H}^{\mathsf{hard}}$ failing to work on all large enough input lengths.

We partially address both issues in Section \ref{s:explicit}, and we refer to that section for further discussions on {\bf PSEUDO} and {\bf SPARSE}.

\section{Preliminaries}\label{s:preliminaries}

\subsection{Basic Definitions and Notation}\label{ss:basic_def_notation}

We refer to algorithms in a sufficiently strong computational model, such as the 2-tape Turing Machine over the alphabet $\{0,1, \perp\}$. We use standard encodings of objects such as numbers and finite sets by strings. For convenience, we let $\mathbb{N} \eqdef \{1,2, \ldots\}$. Given a string $u \in \{0,1\}^{\geq n}$, we let $\mathsf{left}_{n}(u)$ denote its leftmost $n$ bits.\\

\noindent \textbf{Properties and Density.} A {\it property}\footnote{We use the term ``property'' rather than ``language'' or ``Boolean function'' to highlight that in our setting, we are more interested in the {\it efficient constructibility} of positive instances than in efficient recognizability.} is an arbitrary set $Q \subseteq \{0,1\}^*$. Let $Q_n \eqdef Q \,\cap\, \{0,1\}^n$, and $Q_{\leq n} \eqdef \bigcup_{m\leq n} Q_m$. Given a function $\gamma \colon \mathbb{N} \to \mathbb{R}$, we say that $Q$ is $\gamma$-dense if $|Q_n|/2^n \geq \gamma(n)$ for every large enough $n$. We say that $Q$ is {\it dense} if there is $c > 0$ such that $Q$ is $\gamma$-dense with $\gamma(n) \geq n^{-c}$ for every sufficiently large $n$. We say that $Q$ is {\it easy} if $Q$ is decidable in deterministic polynomial time. \\
\\
\noindent \textbf{Hitting Sets and Discrepancy Sets.} Let $\mathcal{H} = \{H_n\}_{n \in \mathbb{N}}$ be a sequence, where $H_n \subseteq \{0,1\}^n$ for each $n \in \mathbb{N}$, and let $Q \subseteq \{0,1\}^*$ be a property. We say that $\mathcal{H}$ is a {\it hitting set} for $Q$ at length $n$ if $H_n \cap Q_n \neq \emptyset$. Given a set $S \subseteq \mathbb{N}$ of input lengths, we say that $\mathcal{H}$ is a {\it hitting set family} for $Q$ on $S$ if it is a hitting set for $Q$ at length $n$ for each $n \in S$. We say that $\mathcal{H}$ is a hitting set family for $Q$ in case it is a hitting set family for $Q$ on some co-finite set $S$, and we say that $\mathcal{H}$ is an {\it i.o.~hitting set family} for $Q$ if it is a hitting set for $Q$ on some infinite set $S$. 

Let $\gamma \colon \mathbb{N} \to \mathbb{R}$ be an arbitrary function. We say that a sequence $\mathcal{H}$ of (multi-) sets $H_n$ is a {\it $\gamma$-discrepancy set} for $Q$ at length $n$ if $\left| |Q_n|/2^n - |H_n \cap Q_n|/|H_n| \right| < \gamma(n)$. The notions of being a $\gamma$-discrepancy set family at a set $S \subseteq \mathbb{N}$ of input lengths, an {\it i.o.~$\gamma$-discrepancy set family} and a $\gamma$-discrepancy set family are defined in analogy to the previous paragraph. Note that if $Q$ is $\gamma$-dense, then a $\gamma$-discrepancy set family $\mathcal{H}$ for $Q$ is also a hitting set family for $Q$. \\
\\
\noindent \textbf{Pseudodeterministic Algorithms.} By default, we consider randomized algorithms whose input and output are both strings. Let $p$ be a real-valued parameter. A randomized algorithm $A(\cdot)$ is said to be a {\it bounded-error pseudodeterministic} algorithm with success probability $p$ if there is a function $f\colon \{0,1\}^{*} \to \{0,1\}^{*}$ such that $A(x) = f(x)$ with probability at least $p$ over the random choices of $A$. We call such a function $f$ a {\it $p$-canonical function} for $A$, and an output $f(x)$ of the function a {\it $p$-canonical output} of $A(x)$. A randomized algorithm $A(\cdot)$ is said to be a {\it zero-error pseudodeterministic} algorithm with success probability $p$ if there is a function $f\colon \{0,1\}^{*} \to \{0,1\}^{*}$ such that on each computation path, $A(x)$ outputs either $f(x)$ or $\perp$, and $A(x) = f(x)$ with probability at least $p$ over the random choices of $A$. 

Given a set $S$ of inputs, we say that an algorithm is {\it pseudodeterministic on $S$} if the pseudodeterministic promise is only required to hold for inputs in $S$. When the algorithm takes a unary input, we sometimes identify subsets of inputs with subsets of $\mathbb{N}$, in the natural way.

For any bounded-error pseudodeterministic algorithm with success probability $> 1/2$ or any zero-error pseudodeterministic algorithm with success probability $> 0$, there is a {\it unique} $p$-canonical function, defined in the former case by the unique output produced with probability $> 1/2$, and in the latter case by the unique non-$\perp$ output. In both cases, we simply call the function and the corresponding output {\it canonical}.

Let $Q$ be a property, $p$ a real-valued parameter, and $T\colon\mathbb{N} \rightarrow \mathbb{N}$ be a time function. We say that there is a {\it bounded-error pseudodeterministic construction} with success probability $p$ for $Q$ in time $T$ if there is a bounded-error pseudodeterministic algorithm $A$ with success probability $p$  such that for all large enough $n$, $A(1^n)$ halts in time $T(n)$ and has a $p$-canonical output in $Q_n$. Similarly, we say that there is a {\it zero-error pseudodeterministic construction} with success probability $p$ for $Q$ in time $T$ if there is a zero-error pseudodeterministic algorithm $A$ with success probability $p$, such that for all large enough $n$, $A(1^n)$ halts in time $T(n)$ and has a $p$-canonical output in $Q_n$.

As with hitting sets and discrepancy sets, given a set $S \subseteq \mathbb{N}$, we can generalize the notion of a pseudodeterministic construction to the notion of a pseudodeterministic construction on $S$, which means that $A(1^n)$ has a $p$-canonical output in $Q_n$ for each $n \in S$. An {\it i.o.~pseudodeterministic construction} is a pseudodeterministic construction on some infinite set $S$.

When the parameter $p = p(n)$ for a pseudodeterministic algorithm is omitted, it is taken by default to be $1-o(1)$. For such a pseudodeterministic algorithm $A$ on input $x$ of length $n$, for a large enough $n$, we will sometimes abuse notation and use $A(x)$ to refer to the canonical output of $A$ on $x$. Indeed, as observed next, bounded-error pseudodeterministic constructions and zero-error pseudodeterministic constructions are largely robust to variations in the success probability. 

\begin{proposition}
\label{p:pseudorobust}
Let $Q \subseteq \{0,1\}^*$ be a property, and $T\colon \mathbb{N} \rightarrow \mathbb{N}$ be a time function. 
\begin{itemize}
\item[\emph{(}i\emph{)}] There exists a bounded-error pseudodeterministic construction for $Q$ with success probability $\geq 1/2+1/\poly(n)$ in time $T \cdot \poly(n)$ if and only if there is a bounded-error pseudodeterministic construction for $Q$ with success probability $\geq 1-2^{-n}$  in time $T \cdot \poly(n)$. 

\item[\emph{(}ii\emph{)}] There exists a zero-error pseudodeterministic construction for $Q$ with success probability $\geq 1/\poly(n)$ in time $T \cdot \poly(n)$ if and only if there is a zero-error pseudodeterministic construction for $Q$ with success probability $\geq 1-2^{-n}$ in time $T \cdot \poly(n)$.
\end{itemize}

\end{proposition}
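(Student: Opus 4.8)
The plan is to observe that in both (i) and (ii) the ``if'' direction is immediate --- a success probability of $1 - 2^{-n}$ exceeds both $1/2 + 1/\poly(n)$ and $1/\poly(n)$ for all large $n$, and the stated running time does not change --- so the content of the proposition is a standard error-reduction argument: run the given algorithm independently many times on $1^n$ and aggregate the outputs. The only point that needs care is that the canonical output is not known a priori, so the aggregation rule must be chosen so that it recovers the canonical output with overwhelming probability while preserving the relevant promise structure (zero-error or bounded-error).

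For (i), let $A$ be a bounded-error pseudodeterministic construction for $Q$ with success probability $\geq 1/2 + 1/q(n)$ for a polynomial $q$, running in time $T(n)\cdot\poly(n)$, and let $f$ be its (unique) canonical function, so $f(1^n) \in Q_n$ for all large $n$. I would define $A'(1^n)$ to run $A(1^n)$ independently $k(n) \eqdef n\cdot q(n)^2$ times, tally the outputs, and return the string that occurs most frequently (ties broken arbitrarily). Since each run equals $f(1^n)$ independently with probability $\geq 1/2 + 1/q(n)$, a Chernoff/Hoeffding bound shows that the number of runs returning $f(1^n)$ exceeds $k(n)/2$ except with probability at most $2^{-n}$; on that event $f(1^n)$ is the strict majority and hence the unique mode, so $A'(1^n) = f(1^n)$. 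Thus $A'$ is a bounded-error pseudodeterministic construction for $Q$ with the same canonical function and success probability $\geq 1 - 2^{-n}$. Its running time is $k(n)$ invocations of $A$ plus the cost of tallying $k(n)$ strings each of length at most $T(n)$, which is $T(n)\cdot\poly(n)$.

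For (ii), let $A$ be a zero-error pseudodeterministic construction for $Q$ with success probability $\geq 1/q(n)$, running in time $T(n)\cdot\poly(n)$, with canonical function $f$; by the zero-error promise every computation path of $A(1^n)$ outputs $f(1^n)$ or $\perp$. I would define $A'(1^n)$ to run $A(1^n)$ independently $k(n) \eqdef n\cdot q(n)$ times and output any non-$\perp$ value returned, or $\perp$ if all runs return $\perp$. Every path of $A'(1^n)$ outputs $f(1^n)$ or $\perp$, so $A'$ is zero-error, and $A'(1^n)$ fails to output $f(1^n)$ only if all $k(n)$ runs return $\perp$, which occurs with probability at most $(1 - 1/q(n))^{k(n)} \leq e^{-n} \leq 2^{-n}$. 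Hence $A'$ is a zero-error pseudodeterministic construction for $Q$ with success probability $\geq 1 - 2^{-n}$, again running in time $T(n)\cdot\poly(n)$.

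I do not expect a genuine obstacle: this is the usual ``repeat and aggregate'' amplification. The one step requiring attention is matching the aggregation rule to the promise --- plurality voting in the bounded-error case (safe because, after amplification, the canonical output has a true majority) and ``take any non-$\perp$ answer'' in the zero-error case (safe because no computation path of the original algorithm ever emits an incorrect non-$\perp$ value). A minor bookkeeping point is confirming that tallying the repeated outputs costs only $T(n)\cdot\poly(n)$, which holds since each output has length at most $T(n)$.
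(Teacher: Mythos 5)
Your proposal is correct and matches the paper's own argument: both cases are handled by independent repetition, with plurality voting in the bounded-error case and ``output any non-$\perp$ value'' in the zero-error case, exactly as in the paper's sketch. Your version merely spells out the repetition counts and the tallying cost that the paper leaves implicit.
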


\begin{proof}
The proof is a simple adaptation of the proof of error amplification for bounded-error and zero-error algorithms which recognize languages. Hence we just give a sketch.

We show the equivalence for bounded-error constructions first. We only need to show the forward implication, as the backward implication is trivial. By assumption, there is a bounded-error pseudodeterministic algorithm $A$ with success probability $1/2+1/\poly(n)$ such that the canonical output of $A(1^n)$ belongs to $Q_n$ for $n$ large enough. We amplify the success probability of $A$ by running it $\poly(n)$ times independently, and outputting the string that is produced most often as an output among these runs, breaking ties arbitrarily. Using standard concentration bounds, it is easy to show that this new algorithm is bounded-error pseudodeterministic with success probability at least $1-2^{-n}$, and has the same canonical output as $A$.

The proof is similar for zero-error constructions. Given a zero-error pseudodeterministic algorithm $A$ with success probability $1/\poly(n)$, we amplify the success probability by running $A$ $\poly(n)$ times independently where the number of runs is chosen large enough depending on the success probability, outputting the canonical output of $A$ if it is produced on any of the runs, and $\perp$ otherwise. Simply by independence of the runs, the canonical output of $A$ is produced by this new algorithm with success probability $1-2^{-n}$, and the new algorithm is still zero-error pseudodeterministic.
\end{proof}

As opposed to Theorem \ref{t:primes}, the algorithm from Theorem \ref{t:explicit} might generate different primes on inputs $1^\ell$ for which it is not the case that $\ell = |q_n|$ for some $n$. It is not excluded that on some inputs of this form each prime is generated with approximately equal and non-negligible probability. This can be partially addressed using the following simple proposition.

\begin{proposition}[Purifying infinitely-often pseudo-deterministic constructors] \label{p:purifying} Let $S \subseteq \mathbb{N}$ be arbitrary, and let $A$ be a bounded-error pseudodeterministic construction for $Q$ with success probability $\geq 2/3$ on every input $1^m$ with $m \in S$. Then there is a bounded-error pseudodeterministic construction $A'$ for $Q$ for which the following holds\emph{:}
\begin{itemize}
\item[\emph{(}i\emph{)}] On every input $1^m$ with $m \in S$, $A'$ outputs the same canonical solution $A(1^m)$ with high probability. 

\item[\emph{(}ii\emph{)}] On an arbitrary input $1^n$ with $n \in \mathbb{N}$, there exists a set $V'_n \subseteq \{0,1\}^n \cup \{\perp\}$ satisfying $|V'_n \cap \{0,1\}^n| \leq 1$ such that $A'(1^n) \in V'_n$ with high probability. 

\item[\emph{(}iii\emph{)}] The running time of $A'$ is at most a polynomial factor larger than the running time of $A$.
\end{itemize}
\end{proposition}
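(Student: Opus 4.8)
The plan is to obtain $A'$ from $A$ by a single \emph{amplify-and-threshold} step, and — crucially — to let the sets $V'_n$ depend on the (fixed) output distribution of $A$ on $1^n$, which is legitimate since the proposition only asserts the \emph{existence} of such sets. Concretely, fix a repetition count $t = t(n) = \Theta(n)$ with a sufficiently large hidden constant, and define $A'(1^n)$ as follows: run $A(1^n)$ independently $t$ times, obtaining outputs $y_1,\dots,y_t \in \{0,1\}^n \cup \{\perp\}$; let $y$ be a most frequently occurring value among them and $c$ its multiplicity; if $c > (3/5)\,t$ then output $y$, and otherwise output $\perp$. The key numerical fact driving the construction is $2/3 > 3/5 > 1/2$. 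Sorting the $t$ strings of length $n$ and scanning for the mode costs $\poly(n)$ time, so the running time of $A'$ on $1^n$ is $t \cdot (\text{running time of } A(1^n)) + \poly(n)$, which is at most a polynomial factor larger than the running time of $A$; this gives (iii).

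For (i), fix $m \in S$ and let $s = A(1^m)$ denote the canonical output, so $\Pr[A(1^m) = s] \geq 2/3$ and $s \in Q_m$. Since $2/3 - 3/5 = 1/15 > 0$, a Hoeffding bound shows that the multiplicity of $s$ among the $t$ runs exceeds $(3/5)\,t$ except with probability $e^{-\Omega(t)} = 2^{-\Omega(n)}$. On that event, $s$ has multiplicity exceeding $t/2$, hence is the unique mode, so $A'(1^m)$ outputs $s = A(1^m)$. Thus $A'$ is a bounded-error pseudodeterministic construction for $Q$ on $S$ with the same canonical outputs as $A$.

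For (ii), fix an arbitrary $n$ and write $p_s \eqdef \Pr[A(1^n) = s]$ for $s \in \{0,1\}^n$. If there is a string $s^{*}$ with $p_{s^{*}} > 1/2$ — there is at most one, since the $p_s$ sum to at most $1$ — set $V'_n \eqdef \{s^{*},\perp\}$; otherwise set $V'_n \eqdef \{\perp\}$. In either case $|V'_n \cap \{0,1\}^n| \leq 1$. To see $A'(1^n) \in V'_n$ with high probability, observe that for $A'$ to output a string $s \in \{0,1\}^n$ with $p_s \leq 1/2$, its multiplicity among the $t$ runs must exceed $(3/5)\,t$; since $3/5 - 1/2 = 1/10 > 0$, Hoeffding gives $\Pr[\operatorname{mult}(s) > (3/5)\,t] \leq e^{-t/50}$ for each such $s$. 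Union-bounding over the at most $2^n$ strings $s$ with $p_s \leq 1/2$, and choosing the constant in $t = \Theta(n)$ so that $2^n e^{-t/50} = 2^{-\Omega(n)}$, we conclude that with high probability $A'(1^n)$ outputs either $\perp$ or (when it exists) the unique string of probability $> 1/2$, i.e.\ $A'(1^n) \in V'_n$.

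The one point that must be handled with care is the choice of $t$: because the output alphabet has size $2^n$, a constant (or even $O(\log n)$) number of repetitions does \emph{not} suffice for the union bound in (ii); we genuinely need $t = \Omega(n)$ repetitions so that the per-string deviation probability $e^{-\Omega(t)}$ beats the number $2^n$ of candidate outputs. Beyond that, everything is a routine Hoeffding estimate. A minor subtlety worth recording is that $A'$ need not be pseudodeterministic in the strict sense on inputs $1^n$ with $n \notin S$ — if the dominant value $s^{*}$ has probability near the threshold $3/5$, $A'$ may output $s^{*}$ on some computation paths and $\perp$ on others — which is precisely why (ii) is phrased in terms of the two-element set $V'_n$ rather than a single canonical value.
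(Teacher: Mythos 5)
Your proposal is correct and follows essentially the same route as the paper's proof: run $A$ polynomially many times, output the (necessarily unique) string appearing more than a $0.6$ fraction of the time or else $\perp$, define $V'_n$ via the at most one string of probability $>1/2$, and conclude (ii) by a Chernoff-plus-union-bound over candidate outputs (the paper takes $t=n^2$ rather than $\Theta(n)$, which is immaterial). The only remark worth adding is that your claim that $t=\Omega(n)$ is \emph{genuinely needed} is an artifact of the crude union bound over all $2^n$ strings — a slightly sharper tail estimate shows any $t=\omega(1)$ would do — but this does not affect correctness.
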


\begin{proof}
On input $1^n$, $A'$ simulates $t = n^2$ executions of $A$ on input $1^n$. Let $a_1, \ldots, a_t$ be random variables supported over $\{0,1\}^n \cup \{\perp\}$ representing the output of $A$ on each such computation. $A'$ outputs the lexicographic first string in $\{0,1\}^n$ that appears more than $0.6t$ times among $a_1, \ldots, a_t$. If no such string exists, $A'$ outputs $\perp$.

Clearly, $A'$ satisfies (\emph{iii}). It is easy to see that it also satisfies (\emph{i}) by a standard concentration bound, since on each such input $1^m$  algorithm $A$ outputs a fixed solution with probability $\geq 2/3$.  To establish (\emph{ii}), note that on an arbitrary input $1^n$, there is \emph{at most one} element $b \in \{0,1\}^n \cup \{\perp\}$ such that $\Pr[A(1^n) = b] > 1/2$. Take $V'_n = \{b, \perp\}$ if such element exists, and set $V'_n = \{\perp\}$ otherwise. In the latter case, the probability that any fixed string in $\{0,1\}^n$ appears $\geq 0.6t$ times among $a_1, \ldots, a_t$ is exponentially small, and by a union bound with high probability $A'$ outputs $\perp\,\in V'_n$. Similarly, in the former case no string $b' \neq b$ that belongs to $\{0,1\}^n$ will appear more than $\geq 0.6t$ times among $a_1, \ldots, a_t$, except with exponentially small probability. By a union bound, the output of $A'$ must be in $V'_n$ with high probability.
\end{proof}

\vspace{0.3cm}

\noindent \textbf{Complexity Classes.} We use the standard definitions for classes such as $\mathsf{P}$, $\mathsf{BPP}$, $\mathsf{RP}$, $\mathsf{ZPP}$, and $\mathsf{PSPACE}$. We will also use $\mathsf{BPE} \eqdef \mathsf{BPTIME}(2^{O(n)})$, $\mathsf{ZPE} \eqdef \mathsf{ZPTIME}(2^{O(n)})$ and $\mathsf{ESPACE} \eqdef \mathsf{DSPACE}(2^{O(n)})$. We refer to standard textbooks such as \citep{Arora-Barak09, DBLP:books/daglib/0019967} for more information about complexity theory.\\

\noindent \textbf{Boolean Circuits and Pseudorandom Generators.} We consider Boolean circuits consisting of fan-in two gates over the standard $\mathsf{B}_2$ basis, and measure circuit size by the number of wires in the circuit. The precise details will not be important here, and we refer to \citep{DBLP:books/daglib/0028687} for more background in circuit complexity. Given a size function $s\colon \mathbb{N} \rightarrow \mathbb{N}$, $\mathsf{SIZE}(s)$ is the class of properties $Q$ such that $Q_n$ has Boolean circuits of size $s(n)$, for each $n \in \mathbb{N}$.

We say that a function $G \colon \{0,1\}^\ell \to \{0,1\}^{m}$ $\varepsilon$-\emph{fools} algorithms (circuits) of running time (circuit size) $t$ if for every such algorithm (circuit) $D$,
$$
\left | \Pr_{w \sim \{0,1\}^\ell}[D(G(w)) = 1] - \Pr_{u \sim \{0,1\}^{m}} [D(u) = 1]  \right | <  \varepsilon.
$$
Otherwise, we say that $D$ $\varepsilon$-\emph{distinguishes} $G$ (on seed length $\ell$). 

We will consider a sequence $\{G_\ell\}_{\ell \in \mathbb{N}}$ of generators $G_\ell \colon \{0,1\}^{\ell} \to \{0,1\}^{m(\ell)}$, which we also view as a single function $G \colon \{0,1\}^* \to \{0,1\}^*$. The function $m(\cdot)$ is the \emph{stretch} of $G$. We say that a generator $G$ with stretch $m(\cdot)$ is \emph{quick} if on every input $w \in \{0,1\}^\ell$, $G(w)$ can be computed in time $O(m(\ell)^d)$, where $d$ is a fixed constant independent of the remaining parameters. Finally, we say that an algorithm $D$ is a $t(m)$-\emph{distinguisher} for $G$ if $D$ runs in time at most $t(m)$, and on every sufficiently large $\ell$, $D$ $1/t(m)$-distinguishes $G_\ell$.

\subsection{Technical Results}\label{ss:prelim_results}

We make use of the following ``hardness versus randomness'' results.

\begin{theorem}[Impagliazzo and Wigderson \citep{DBLP:conf/stoc/ImpagliazzoW97}]\label{t:IW_generator}
There is a function $F \colon \{0,1\}^* \times \{0,1\}^* \to \{0,1\}^*$, computable in polynomial time, for which the following holds. For every $\gamma > 0$, there exist constants $a, b \geq 1$ such that for every $r \in \mathbb{N}$, 
$$
F(\{0,1\}^{r^{a}} \times \{0,1\}^{b \log r}) \subseteq \{0,1\}^r,
$$ 
and if $T_h \in \{0,1\}^{r^a}$ is the truth-table of a function $h \colon \{0,1\}^{a \log r} \to \{0,1\}$ of circuit complexity $\geq r^{\gamma a}$, then 
$$
G_h \colon \{0,1\}^{b \log r} \to \{0,1\}^r, \;\; \text{given by}\;\; G_h(w) \eqdef F(T_h,w),
$$ 
is a \emph{(}quick\emph{)} pseudorandom generator that $(1/r)$-fools size-$r$ circuits.
\end{theorem}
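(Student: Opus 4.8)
The plan is to prove this exactly as Impagliazzo and Wigderson do: realize $F$ as the composition of a worst-case-to-average-case hardness amplification followed by the Nisan--Wigderson pseudorandom generator, where each stage is a fixed polynomial-time transformation of the truth table $T_h$ together with a short auxiliary input. First I would fix $\gamma > 0$ and defer the choice of the constants $a$ (governing the input length $a\log r$ of $h$ and its hardness threshold $r^{\gamma a} = 2^{\gamma a\log r}$) and $b$ (governing the final seed length $b\log r$) until all parameter constraints are visible. The goal is then: given the truth table of a function $h$ on $a\log r$ bits of circuit complexity at least $r^{\gamma a}$, produce in time $\poly(r)$ a quick generator $G_h \colon \{0,1\}^{b\log r} \to \{0,1\}^r$ that $(1/r)$-fools size-$r$ circuits, with $G_h(\{0,1\}^{b\log r}) \subseteq \{0,1\}^r$.

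Next I would perform the hardness amplification in two sub-steps. (a) \emph{Worst-case to mildly average-case:} encode $h$ by a low-degree multivariate polynomial over a field of size $\poly(r)$ (a Reed--Muller style encoding), obtaining a function $h_1$ on $O(a\log r)$ bits; by local list-decoding, any circuit of size $r^{\Omega(\gamma a)}$ agreeing with $h_1$ on a $1 - 1/\poly(r)$ fraction of inputs can be converted into a circuit of size $r^{O(\gamma a)}$ computing $h$ exactly, contradicting the assumed worst-case hardness, so $h_1$ is mildly average-case hard. (b) \emph{Mild to strong average-case:} apply the \emph{derandomized XOR lemma} of \citep{DBLP:conf/stoc/ImpagliazzoW97} --- taking the parity of $h_1$ along a random walk on a constant-degree expander, composed recursively with the NW generator, instead of on many independent inputs --- to obtain a function $h_2$ on still only $O(a\log r)$ bits such that no circuit of size $r^{\Omega(\gamma a)}$ predicts $h_2$ with advantage more than $1/r^2$. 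Keeping the input length of $h_2$ \emph{linear} in $a\log r$, rather than $\poly(a\log r)$, is the entire point of the derandomization and is what forces the next step's seed length down to $O(\log r)$.

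Then I would instantiate the NW generator with $h_2$. Fix a combinatorial design $S_1, \dots, S_r \subseteq [b\log r]$ with $|S_i|$ equal to the input length of $h_2$ and $|S_i \cap S_j| \le O(\log r)$ for $i \neq j$; such a design exists over a universe of size $b\log r$ for a suitable constant $b = O(a^2)$ and is computable in time $\poly(r)$. Set $G_h(w) = (h_2(w|_{S_1}), \dots, h_2(w|_{S_r}))$. The standard hybrid argument converts any size-$r$ circuit that $(1/r)$-distinguishes $G_h$ into a next-bit predictor for $h_2$ on some coordinate $j$; hardwiring the fixed bits of the $j$-th hybrid together with the at most $r$ restricted copies of $h_2$, each now a function of the $\le O(\log r)$ overlap bits and hence of circuit size $\poly(r)$, yields a circuit of size $\poly(r)$ predicting $h_2$ with advantage $\gtrsim 1/r^2$ --- contradicting (b) once $a$ is chosen large enough relative to $1/\gamma$ to beat the $\poly(r)$ blow-up. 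Hence $G_h$ is quick and $(1/r)$-fools size-$r$ circuits; and since every stage is a uniform $\poly(|T_h|)$ procedure, the combined map $F(T_h, w)$ is polynomial-time computable.

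The hard part will be sub-step (b): amplifying mild average-case hardness to advantage $1/r^2$ while blowing up the circuit-size parameter by at most a polynomial and, crucially, the input length by at most a constant factor. The plain XOR lemma fails on both counts --- it needs $\Theta(\log r)$ independent copies of $h_1$, hence input length $\Theta((a\log r)^2)$, which would force an NW seed of length $\Theta((\log r)^2)$ rather than $O(\log r)$ --- so one must use the derandomized direct-product/XOR construction based on expander walks recursively composed with NW, and verify both that the recursion terminates with the stated parameters and that $F$ remains polynomial-time computable in $|T_h|$ throughout.
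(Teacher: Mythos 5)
The paper does not prove this statement: it is imported verbatim as a known result of Impagliazzo and Wigderson \citep{DBLP:conf/stoc/ImpagliazzoW97} and used as a black box (in Lemma \ref{l:cond_discset}, Lemma \ref{l:A1_fails}, and Theorem \ref{t:cond_detprimes}). Your outline is a faithful reconstruction of the original argument --- low-degree encoding for worst-case to mildly average-case hardness, the derandomized XOR lemma to amplify to advantage $1/\poly(r)$ while keeping the input length linear in $a\log r$, and the Nisan--Wigderson generator over a design with universe $O(a^2\log r)$ --- and you correctly locate the one genuinely hard step in the derandomized XOR lemma, which is exactly the technical contribution of that paper. Two minor remarks: for the mild-hardness step (agreement $1-1/\poly(r)$) unique local decoding / self-correction of the polynomial encoding suffices, so invoking list-decoding is unnecessary (though harmless); and a complete writeup would still have to carry out the expander-walk recursion of step (b), which your sketch defers. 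As a review of the paper's treatment, there is nothing to compare against beyond the citation.
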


The next generator is based on a downward-self-reducible and self-correctable $\mathsf{PSPACE}$-complete language $L^\star$ described in \citep{DBLP:journals/cc/TrevisanV07}, and the corresponding result can be seem as a uniform ``low-end'' version of Theorem \ref{t:IW_generator}. The following formulation is sufficient for our purposes.

\begin{theorem}[Impagliazzo and Wigderson \citep{DBLP:journals/jcss/ImpagliazzoW01}, Trevisan and Vadhan \citep{DBLP:journals/cc/TrevisanV07}] \label{t:uniform_hard_rand}
For every integers $b, c \geq 1$, there exists an integer $d \geq 1$ and a function $G^\star \colon \{0,1\}^* \to \{0,1\}^*$ with restrictions 
$$G_\ell^\star \colon \{0,1\}^\ell \to \{0,1\}^{m(\ell)},\quad \text{where}\;\;\;m(\ell) = \ell^b,$$ such that $G^\star$ can be computed in time $O(m(\ell)^d) = \mathsf{poly}(\ell)$ when given oracle access to $L^\star_{\leq \ell}$, and the following holds.  If the output of $G^\star_\ell$ can be $(1/m(\ell)^c)$-distinguished from random for every large enough $\ell \in \mathbb{N}$ by an algorithm running in time $O(m(\ell)^c)$, then $\mathsf{PSPACE} \subseteq \mathsf{BPP}$.
\end{theorem}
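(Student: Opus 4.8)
The plan is to follow the uniform hardness-versus-randomness paradigm: instantiate a Nisan--Wigderson generator with a carefully chosen $\mathsf{PSPACE}$-complete language $L^\star$, and then show that a \emph{uniform} distinguisher beating the generator at \emph{every} large seed length can be bootstrapped, one input length at a time, into a $\mathsf{BPP}$ algorithm for $L^\star$. The language $L^\star$ I would use is the arithmetized version of $\mathsf{TQBF}$ (as in Shamir's $\mathsf{IP}=\mathsf{PSPACE}$ argument, or the explicit construction of Trevisan and Vadhan), which besides being $\mathsf{PSPACE}$-complete is (i) \emph{downward self-reducible} --- $L^\star$ on inputs of length $n$ reduces in $\poly(n)$ time to $L^\star$ on inputs of length $< n$, thanks to the round-by-round recursive structure of the underlying interactive protocol --- and (ii) \emph{self-correctable}: from its low-degree (Reed--Muller) structure, if an oracle agrees with $L^\star_n$ on a $1/2 + 1/\poly(n)$ fraction of inputs, then $L^\star_n$ can be computed correctly on \emph{every} input with high probability using $\poly(n)$ oracle queries (via local list-decoding of low-degree polynomials plus an instance checker to pick out the correct codeword, or equivalently by taking $L^\star$ to be ``worst-case to rare-case'' hard by construction).

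For the generator itself, fix a combinatorial design $S_1,\dots,S_{m(\ell)}\subseteq[\ell]$ with $|S_i|=n$, $n=\Theta(\sqrt{\ell})$, and pairwise intersections $O(\log m(\ell))$ --- such designs exist with these parameters because the target stretch $m(\ell)=\ell^b$ is only polynomial --- and let $G^\star_\ell(s)$ have $i$-th output bit $L^\star(s|_{S_i})$. Each output bit is a single query to $L^\star$ on a string of length $n\le\ell$, so given oracle access to $L^\star_{\le\ell}$ the whole output is computable in time $m(\ell)\cdot\poly(\ell)=O(m(\ell)^d)$ for a suitable constant $d$; this establishes the ``computability'' half of the statement, and the seed-length and exponent bookkeeping is routine.

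For the hard direction, suppose $D$ runs in time $O(m^c)$ and $(1/m^c)$-distinguishes $G^\star_\ell$ for every large $\ell$. Fixing such an $\ell$ and running the standard Nisan--Wigderson hybrid/prediction argument yields a hybrid index $i$, a ``good'' fixing $\rho$ of the seed bits outside $S_i$, and, for each earlier output bit, a lookup table of size $2^{O(\log m)}=\poly(\ell)$ whose entries are values of $L^\star$ at specific $n$-bit points, such that the resulting $D$-oracle procedure predicts $L^\star_n$ with advantage $1/\poly(\ell)$; self-correction then upgrades this to a randomized $\poly(\ell)$-time $D$-oracle algorithm computing $L^\star_n$ everywhere with high probability --- \emph{provided} the lookup-table entries are available. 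Those entries are values of $L^\star_n$, which by downward self-reducibility are computable in $\poly(n)$ time from a correct-everywhere algorithm for $L^\star_{n-1}$. This gives an induction on input length: maintain advice strings $\alpha_n$ (hybrid index, seed fixing, and lookup tables for length $n$) of size $\poly(n)$, where $\alpha_n$ is computed from $\alpha_{n-1}$ using $\poly(n)$ calls to the length-$(n-1)$ evaluation procedure and to the downward self-reduction, and the good $i$ and $\rho$ are found by estimating $D$'s acceptance probabilities through sampling (the samples generated with the help of $\alpha_{n-1}$). Because $\alpha_{n-1}$ is \emph{memoized} rather than re-expanded, computing $\alpha_n$ costs only $\poly(n)$, so $\alpha_1,\dots,\alpha_N$ are all obtained in total time $\poly(N)$, and then $L^\star_N(x)$ is computed in $\poly(N)$ time from $\alpha_N$ and oracle calls to the polynomial-time algorithm $D$. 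Hence $L^\star\in\mathsf{BPP}$, and since $L^\star$ is $\mathsf{PSPACE}$-complete, $\mathsf{PSPACE}\subseteq\mathsf{BPP}$.

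I expect the main obstacle to be making the bootstrapping uniform and polynomial-time \emph{simultaneously}: the naive recursion (evaluate $L^\star_n$ by recursing into $L^\star_{n-1}$, thence into $L^\star_{n-2}$, and so on) incurs a $\poly(n)^n = 2^{\poly(n)}$ blow-up, and avoiding it requires threading a single memoized advice string through the levels while verifying that \emph{every} quantity needed at level $n$ --- lookup-table entries, the hybrid index, the good seed fixing, and the self-correction queries --- depends only on $L^\star$ at length $n-1$ and never circularly on length $n$. This is precisely the delicate part of the Impagliazzo--Wigderson and Trevisan--Vadhan arguments. A secondary technical point is the worst-case-to-average-case step: the hybrid argument delivers only an inverse-polynomial prediction advantage, so one must either build hardness amplification into $L^\star$ or invoke local list-decoding of low-degree polynomials, and then check that the list-decoder itself fits inside the same memoized, uniform framework.
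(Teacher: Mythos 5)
Your proposal is correct and follows essentially the same route the paper relies on: it cites this result to Impagliazzo--Wigderson and Trevisan--Vadhan, and the sketch it later gives (in the proof of its refinement, Theorem~\ref{t:uniform_hard_quant}) describes exactly your argument --- a Nisan--Wigderson generator built from a downward self-reducible, self-correctable $\mathsf{PSPACE}$-complete $L^\star$, with the distinguisher converted into a learner whose length-$n$ lookup tables are filled in via downward self-reducibility, and the learned circuits/advice bootstrapped iteratively from length $1$ up to $n$. You also correctly flag the two delicate points (avoiding the exponential recursion by memoizing one level at a time, and the worst-case-to-average-case step via amplification plus an instance checker to prune the list-decoding candidates), so nothing further is needed.
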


\section{Pseudodeterministic Constructions}
\label{s:proof_theorems}

\subsection{Conditional Constructions}
\label{s:cond_cons}

\begin{lemma}
\label{l:cond_discset}
If there is an $\varepsilon > 0$ and a Boolean function $h \in \mathsf{BPE}$ \emph{(}resp. $h \in \mathsf{ZPE}$\emph{)} that requires circuits of size $2^{\varepsilon m}$ on all large enough input lengths $m \in \mathbb{N}$, then for each constant $c > 0$, there is a bounded-error \emph{(}resp. zero-error\emph{)} pseudodeterministic algorithm $A$ running in polynomial time such that $\{A(1^n)\}$ is a $(1/n^c)$-discrepancy set family for every property $Q$ decidable in deterministic time $n^c$.  
\end{lemma}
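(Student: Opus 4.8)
The plan is to instantiate the Impagliazzo--Wigderson generator of Theorem~\ref{t:IW_generator} with the hard function $h$ and turn its output into a short multiset of $n$-bit strings; this multiset is automatically a discrepancy set, since the generator fools every small circuit and in particular the circuit deciding $Q_n$. Fix $c>0$. Let $d=d(c)$ be a positive integer large enough that (i) every $Q\in\mathsf{DTIME}(n^c)$ is computed, on inputs of length $n$ and for all sufficiently large $n$, by a Boolean circuit of size at most $n^d$ (such $d$ exists by the standard simulation of a time-$t$ Turing machine by a circuit of size $O(t^2)$), and (ii) $n^d>n^c$. Put $r=r(n)\eqdef n^d$ and apply Theorem~\ref{t:IW_generator} with $\gamma\eqdef\varepsilon$, obtaining constants $a,b\ge 1$. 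We will use $h$ only at the input length $m(n)\eqdef a\log r=ad\log n$. For $n$ large enough, $m(n)$ exceeds the threshold above which $h_{m}$ requires circuits of size $2^{\varepsilon m}=r^{\varepsilon a}=r^{\gamma a}$, so by Theorem~\ref{t:IW_generator} the map $G_h\colon\{0,1\}^{b\log r}\to\{0,1\}^r$, $G_h(w)\eqdef F(T_h,w)$, is a quick PRG that $(1/r)$-fools size-$r$ circuits, where $T_h\in\{0,1\}^{r^a}$ is the truth table of $h$ restricted to length $m(n)$.

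The string $T_h$ can be produced pseudodeterministically and efficiently: since $h\in\mathsf{BPE}$ (resp.\ $h\in\mathsf{ZPE}$) and $m(n)=O(\log n)$, each of its $r^a=\poly(n)$ bits is computable by a bounded-error (resp.\ zero-error) randomized algorithm in time $2^{O(m(n))}=\poly(n)$; amplifying each such computation to error probability $2^{-2n}$ and taking a union bound, with probability $\ge 1-2^{-n}$ we obtain the exact string $T_h$ in time $\poly(n)$ (in the zero-error case we output $\perp$ whenever some bit-computation returns $\perp$). Given $T_h$ the remainder is deterministic: enumerate the $2^{b\log r}=\poly(n)$ seeds $w$, compute each $F(T_h,w)\in\{0,1\}^r$ in time $\poly(|T_h|)=\poly(n)$ (as $F$ is polynomial-time), and let $H_n$ be the multiset $\{\,\mathsf{left}_n(F(T_h,w)) : w\in\{0,1\}^{b\log r}\,\}$, which has size $2^{b\log r}=\poly(n)$. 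The algorithm $A(1^n)$ outputs an encoding of $H_n$. Since its output is a deterministic function of the \emph{canonical} truth table of $h$, $A$ is a bounded-error (resp.\ zero-error) pseudodeterministic algorithm running in polynomial time.

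It remains to check the discrepancy bound. Fix $Q\in\mathsf{DTIME}(n^c)$; for all large enough $n$ the set $Q_n$ is decided by a circuit $C_n$ of size at most $n^d=r$ on $n$-bit inputs. Regarding $C_n$ as a circuit on $r$ input bits that depends only on the first $n$ of them (this does not change its size, so it is still fooled), the $(1/r)$-fooling property gives
$$
\Big|\Pr_{u\sim\{0,1\}^r}\big[C_n(u)=1\big]\;-\;\Pr_{w\sim\{0,1\}^{b\log r}}\big[C_n(F(T_h,w))=1\big]\Big|<\frac1r .
$$
The first probability equals $|Q_n|/2^n$, and the second equals $|H_n\cap Q_n|/|H_n|$ (counting with multiplicity in the multiset $H_n$). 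As $1/r\le 1/n^c$, we conclude $\big|\,|Q_n|/2^n-|H_n\cap Q_n|/|H_n|\,\big|<1/n^c$ for every large enough $n$, i.e.\ $\{H_n\}$ is a $(1/n^c)$-discrepancy set family for $Q$; since $Q$ was arbitrary, the lemma follows.

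The only delicate point is the choice of parameters. The generator of Theorem~\ref{t:IW_generator} fools circuits of size exactly equal to its output length $r$, so $r$ must be taken polynomially large in $n$ --- large enough both to dominate the circuit complexity of $Q_n$ and to force $1/r<1/n^c$ --- and the $n$-bit strings must then be extracted from the $r$-bit output by \emph{truncation}; one cannot instead split the output into $r/n$ blocks of length $n$, since the natural counting circuit needed to analyse such a construction has size roughly $(r/n)\cdot\poly(n^c)\gg r$ and would not be fooled. Note also that we use the hardness of $h$ at one length $m(n)$ per $n$, so the hypothesis that $h$ is hard on \emph{all} large input lengths (rather than infinitely often) is exactly what makes $A$ succeed at every large $n$. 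The remaining ingredients --- error amplification, the union bound to recover $T_h$ exactly, and the fact that pseudodeterminism is inherited from the uniqueness of the canonical truth table --- are routine.
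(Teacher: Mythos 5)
Your proposal is correct and follows essentially the same route as the paper's proof: instantiate the Impagliazzo--Wigderson generator with the truth table of $h$ at a logarithmically small input length, recover that truth table pseudodeterministically by amplification and a union bound, truncate the $r$-bit outputs to $n$ bits, and fool the Cook--Levin circuit for $Q$ padded to $r$ input bits (the paper fixes $r=n^{2c}$, matching your choice of $d$). The only detail you elide is the rounding when $a\log r$ is not an integer, which the paper handles by taking $r$ to be a power of two.
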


\begin{proof}
We give the proof for the case that $h \in \mathsf{BPE}$; the case that $h \in \mathsf{ZPE}$ is exactly analogous. Assume without loss of generality that the probabilistic exponential-time machine $N$ computing $h$ has failure probability $< 2^{-2m}$ on any input of length $m$.

We define the behaviour of pseudodeterministic algorithm $A$ on input $1^n$ using Theorem \ref{t:IW_generator} as follows. We apply this result with constants $a$ and $b$ corresponding to the choice $\gamma = \varepsilon$. Let $r = n^{2c}$. The algorithm $A$ first computes the truth table $T_h$ of the function $h$ on input size $a \log r$ by evaluating $h$ on every input of size $a \log r$ using the probabilistic exponential-time machine $N$. (We assume here that $r$ is a power of $2$; if not, the same argument works by setting $r$ to be the smallest power of $2$ larger than $n^{2c}$.) Note that by a union bound, $A$ computes the entire truth table correctly with probability $1-o(1)$, and runs in time $\poly(n)$. $A$ then computes $F(T_h, y)$ for each string $y \in \{0,1\}^{b \log r}$, using the polynomial-time algorithm for $F$ guaranteed by Theorem \ref{t:IW_generator}, and forms a multi-set $H'_n \subseteq \{0,1\}^{n^{2c}}$ composed of the union of these strings. It then forms the (multi-) set $H_n \eqdef \{\mathsf{left}_{n}(u) \mid u \in H'_n\}$, and outputs $H_n$. 

Note that the same set $H_n$ is output whenever the correct truth table $T_h$ is computed, which happens with probability $1-o(1)$ as argued before. Hence $A$ is a pseudodeterministic algorithm with success probability $1-o(1)$. Let $H_n$ be the canonical output of $A(1^n)$.

It remains to argue that $H_n$ is a $1/n^c$-discrepancy set for $Q_n$ for large enough $n$, where $Q$ is any property decidable in time $n^c$ for $n$ large enough. Let $M$ be a deterministic machine deciding $Q$ with a running time of this form, and let $D_n$ be the Boolean circuit of size $n^{2c}$ obtained by translating $M$'s computation on inputs of length $n$ into a circuit, as in the proof of the Cook-Levin theorem. Let $C_n$ be the circuit of size $n^{2c}$ on input $x$ of size $n^{2c}$, which simulates $D_n$ on $\mathsf{left}_n(x)$ and outputs the answer. Finally, let $n$ be large enough that $h$ requires circuits of size at least $m^{2ac \varepsilon}$ on inputs of size $2ac \log m$ for any $m \geq n$ that is a power of two. By Theorem \ref{t:IW_generator}, $G_h\colon \{0,1\}^{b \log r} \to \{0,1\}^r$ $1/n^{2c}$-fools $C_n$. It follows that $H_n$ is a $1/n^c$-discrepancy set for $Q_n$ for $n$ large enough, using the facts that $C_n$ accepts a string $u$ of length $n^{2c}$ if and only if $D_n$ accepts $\mathsf{left}_n(u)$, and that $D_n$ accepts a string $z \in \{0,1\}^n$ if and only if $z \in Q_n$.
\end{proof}

\begin{remark}\label{r:nonuniform} We note that the conclusion of Lemma \emph{\ref{l:cond_discset}} holds even for properties decidable by Boolean circuits of size $\leq n^c$, since we do not take advantage of uniformity in the argument above.
\end{remark}

\begin{corollary}
\label{c:cond_construction}
Let $Q$ be any property that is easy and dense. If there is an $\varepsilon > 0$ and a Boolean function $f \in \mathsf{BPE}$ \emph{(}resp. $f \in \mathsf{ZPE}$\emph{)} that requires circuits of size $2^{\varepsilon m}$ on all large enough input lengths $m \in \mathbb{N}$, then there is a bounded-error \emph{(}resp. zero-error\emph{)} pseudodeterministic construction for $Q$ in polynomial time.
\end{corollary}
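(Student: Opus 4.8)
The plan is to derive Corollary~\ref{c:cond_construction} directly from Lemma~\ref{l:cond_discset} by combining the discrepancy-set family it produces with a brute-force search for an element of $Q$ inside the family. First I would apply Lemma~\ref{l:cond_discset} with the constant $c$ chosen so that $Q$ is decidable in deterministic time $n^c$ \emph{and} $Q$ is $(1/n^c)$-dense; since $Q$ is easy and dense, such a $c$ exists (take the maximum of the two constants witnessing easiness and density). This yields a bounded-error (resp.\ zero-error) pseudodeterministic polynomial-time algorithm $A$ such that $\{A(1^n)\}$ is a $(1/n^c)$-discrepancy set family for $Q$. By the remark following the preliminaries that a $\gamma$-discrepancy set family for a $\gamma$-dense property is automatically a hitting set family, for all large enough $n$ the canonical output $H_n$ of $A(1^n)$ satisfies $H_n \cap Q_n \neq \emptyset$.

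Next I would define the construction algorithm $A'$ for $Q$: on input $1^n$, run $A(1^n)$ to obtain a (multi-)set $H_n \subseteq \{0,1\}^n$, then enumerate the elements of $H_n$ in lexicographic order, testing each for membership in $Q$ using the deterministic time-$n^c$ decision procedure for $Q$, and output the first element found to lie in $Q_n$ (in the zero-error case, output $\perp$ if $H_n$ happens to be the abort symbol or contains no element of $Q_n$). Since $|H_n| \le \poly(n)$ and each membership test takes $\poly(n)$ time, and $A$ itself runs in polynomial time, $A'$ runs in polynomial time. The key point for pseudodeterminism is that $A'$ is a deterministic post-processing of $A$'s output: on any computation path where $A(1^n)$ produces its canonical value $H_n$, the algorithm $A'$ produces the deterministic function of $H_n$ described above, which is a fixed string. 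Hence $A'$ inherits the success probability $1-o(1)$ of $A$, and in the zero-error case inherits the property that every non-$\perp$ output equals this canonical string (on a path where $A$ outputs $\perp$, $A'$ outputs $\perp$). By the robustness of pseudodeterministic constructions to the success-probability parameter (Proposition~\ref{p:pseudorobust}), working with success probability $1-o(1)$ loses nothing.

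Finally I would check correctness of the canonical output: for all large enough $n$, the canonical $H_n$ is a hitting set for $Q$ at length $n$, so $H_n \cap Q_n \neq \emptyset$ and therefore $A'$ outputs some element of $Q_n$ on the canonical path; this is exactly the requirement that $A'$ has a canonical output in $Q_n$ for all large enough $n$. This gives a bounded-error (resp.\ zero-error) pseudodeterministic construction for $Q$ in polynomial time, as claimed.

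The argument is essentially bookkeeping, so there is no serious obstacle; the one point requiring mild care is the zero-error case, where one must confirm that the deterministic search over $H_n$ never introduces a spurious second non-$\perp$ output — this holds because the search is a deterministic function of $H_n$ and $A$ already guarantees a unique non-$\perp$ value of $H_n$ — and that the case where the hitting property fails (only finitely many $n$) is harmless since constructions are only required to succeed for all large enough $n$.
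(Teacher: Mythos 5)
Your proposal is correct and follows essentially the same route as the paper: invoke Lemma~\ref{l:cond_discset} with $c$ large enough to witness both the easiness and density of $Q$, note that the resulting $(1/n^c)$-discrepancy set family is a hitting set family, and then deterministically post-process the canonical output by lexicographic search for the first element of $Q_n$. The handling of the zero-error case and the inheritance of the success probability match the paper's argument.
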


\begin{proof}

Again we give the proof for the case that $f \in \mathsf{BPE}$; the case that $f \in \mathsf{ZPE}$ is exactly analogous. Let $Q$ be any property that is easy and dense. Let $M$ be a polynomial-time Turing machine deciding $Q$. Let $c$ be a constant chosen large enough such that $Q_n$ is $1/n^c$-dense and $M$ runs in time at most $n^c$ for $n$ large enough. Using Lemma \ref{l:cond_discset}, we have that there is a bounded-error pseudodeterministic algorithm $A$ running in polynomial time such that the canonical output $H_n$ of $A(1^n)$ is a $1/n^c$-discrepancy set for $Q_n$. Hence, by the density condition on $Q$, we have that $H_n$ is a hitting set for $Q_n$ for $n$ large enough.

We define a bounded-error pseudodeterministic algorithm $A_Q$ running in polynomial time, whose canonical output on input $1^n$ belongs to $Q_n$ for $n$ large enough.  $A_Q(1^n)$ first simulates $A(1^n)$ to obtain a subset $S_n$ of $\{0,1\}^n$; if the output of $A(1^n)$ is not such a subset, it outputs an arbitrary string. With probability $1-o(1)$, $S_n$ is the canonical output of $A(1^n)$, which is the hitting set $H_n \subseteq \{0,1\}^n$ of size $\poly(n)$. $A_Q$ then orders the strings in $S_n$ in lexicographic order to obtain a list $y_1, y_2, \ldots , y_m$, where $m = \poly(n)$, and each $y_i$, $ i = 1, \ldots , m$, is an $n$-bit string. $A_Q$ simulates $M$ on each of the $y_i$'s, in order, until it finds a $y_i$ on which $M$ accepts. If it finds such a $y_i$, it outputs $y_i$, otherwise it rejects.

For $n$ large enough, let $z_n$ be the smallest element of $H_n$ in lexicographic order which belongs to $Q_n$. Since $H_n$ is a hitting set for $Q_n$ for $n$ large enough, such a string $z_n$ exists. It is easy to see that $A_Q$ outputs $z_n$ with probability $1-o(1)$, and hence that $z_n$ is the canonical output of $A_Q(1^n)$. This is because $S_n = H_n$ with probability $1-o(1)$, and whenever $S_n = H_n$, the string $z_n$ is output by $A_Q$. Clearly $A_Q$ runs in time $\poly(n)$ and is a pseudodeterministic algorithm with success probability $1-o(1)$ whose canonical output is in $Q_n$ for $n$ large enough.
\end{proof}

The proof of Corollary \ref{c:cond_construction} can easily be adapted to give a bounded-error pseudodeterministic construction in polynomial time even when the condition on easiness of $Q$ is relaxed to $Q \in \mathsf{BPP}$, and a zero-error pseudodeterministic construction in polynomial time even when the condition on easiness of $Q$ is relaxed to $Q \in \mathsf{ZPP}$. There is also a natural tradeoff between the hardness assumption and the running time of the corresponding pseudodeterministic construction, which can be obtained by using a generalization of Theorem \ref{t:IW_generator} giving a hardness-randomness tradeoff based on circuit lower bounds varying from superpolynomial to exponential.

Note that the easiness assumption on $Q$ is used twice to obtain Corollary \ref{c:cond_construction}, the first time in the proof of Lemma \ref{l:cond_discset} to obtain pseudodeterministic algorithms outputting discrepancy sets based on the hardness assumption, and the second time in the proof of Corollary \ref{c:cond_construction} to obtain a pseudodeterministic construction for $Q$ from the discrepancy set. The first use of the assumption can be eliminated by using a stronger hardness assumption with respects to circuits that have oracle access to $Q$, but the second use seems essential.

\subsection{Unconditional Constructions that Work Infinitely Often}
\label{s:uncond_cons}

By using in addition uniform hardness-randomness tradeoffs, we obtain {\it unconditionally} that there are pseudodeterministic constructions for easy dense sets in polynomial time, or else there are deterministic constructions for easy dense sets in subexponential time which work infinitely often. In fact, we obtain the following stronger result stating that in either case, there is a {\it generic} hitting set family that works simultaneously for {\it all} easy dense sets where the easiness and density parameters are fixed in advance.

\begin{theorem}[Restatement of Theorem \ref{t:generic}]
\label{t:uncond_hittingset}
Let $c > 0$ be any constant. At least one of the following is true\emph{:}
\begin{enumerate}

\item[\emph{1.}] There is a pseudodeterministic algorithm $A$ running in polynomial time such that $\{A(1^n)\}$ is a hitting set family for every $1/n^c$-dense property $Q$ decidable in time $n^c$.

\item[\emph{2.}] For each $\varepsilon > 0$, there is a deterministic algorithm $B_{\varepsilon}$ running in time $O(2^{n^{\varepsilon}})$, such that $\{B_{\varepsilon}(1^n)\}$ is an i.o.~hitting set family for every $1/n^c$-dense property $Q$ decidable in time $n^c$.
\end{enumerate}
\end{theorem}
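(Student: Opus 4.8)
\medskip
\noindent\textbf{Proof proposal.} The plan is a ``win--win--win'' argument built around two candidate hitting set families, each deterministically computable in time $2^{n^{\varepsilon}}$ for any prescribed $\varepsilon>0$. The first, $\mathcal{H}^{\mathsf{easy}}$, is the easy-witness family in the spirit of Kabanets \citep{DBLP:journals/jcss/Kabanets01}: $\mathcal{H}^{\mathsf{easy}}_n$ consists of all truth tables, padded to length $n$, of Boolean circuits of size at most $n^{\delta}$ on $\lceil\log_2 n\rceil$ inputs; there are $2^{\tilde{O}(n^{\delta})}$ of these and they are enumerable within that time bound. The second, $\mathcal{H}^{\mathsf{hard}}$, comes from the Trevisan--Vadhan generator $G^{\star}$ of Theorem~\ref{t:uniform_hard_rand}: with stretch $m(\ell)=\ell^{b}$ and seed length $\ell=\lceil n^{1/b}\rceil$, set $\mathcal{H}^{\mathsf{hard}}_n=\{\,\mathsf{left}_n(G^{\star}_{\ell}(s)) : s\in\{0,1\}^{\ell}\,\}$; since $L^{\star}_{\le \ell}\in\mathsf{PSPACE}$, each element is computable in time $2^{\poly(\ell)}=2^{n^{O(1/b)}}$, so for $b$ large enough (and $\delta<\varepsilon$) both families, and hence their union $B_{\varepsilon}$, run in time $2^{n^{\varepsilon}}$.

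Now the dichotomy. If $B_{\varepsilon}$ is an i.o.\ hitting set family for every $(1/n^c)$-dense $Q\in\mathsf{DTIME}(n^c)$ for \emph{every} $\varepsilon>0$, then item~2 holds directly. Otherwise there is an $\varepsilon$ and a fixed $(1/n^c)$-dense $Q^{\star}\in\mathsf{DTIME}(n^c)$ with $\mathcal{H}^{\mathsf{easy}}_n\cap Q^{\star}_n=\emptyset$ and $\mathcal{H}^{\mathsf{hard}}_n\cap Q^{\star}_n=\emptyset$ for all large $n$. From the second disjointness: the $n^c$-time decider for $Q^{\star}$ composed with $u\mapsto \mathsf{left}_n(u)$ is a uniform distinguisher for $G^{\star}$ --- by density it accepts at least a $1/n^c\ge 1/m(\ell)^c$ fraction of uniform strings but none of the generator's outputs --- so Theorem~\ref{t:uniform_hard_rand} gives $\mathsf{PSPACE}\subseteq\mathsf{BPP}$, hence $\mathsf{PSPACE}=\mathsf{BPP}$ and (by Adleman) $L^{\star}\in\mathsf{P}/\mathsf{poly}$. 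From the first disjointness: every string of $Q^{\star}_n$, read as a function on $\lceil\log_2 n\rceil$ bits, has circuit complexity $>n^{\delta}$, so $Q^{\star}$ is a \emph{certifiable} reservoir of exponentially hard truth tables --- certifiable precisely because $Q^{\star}\in\mathsf{P}$.

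Next I would upgrade $\mathsf{BPP}$ to $\mathsf{ZPP}$ using this reservoir, which is the crucial easy-witness step. To decide the $\mathsf{PSPACE}$-complete language $L^{\star}$ with zero error: draw random strings of a suitable polynomial length $N$ until one lands in $Q^{\star}_N$ (Las Vegas; fast by density and verifiable in $\mathsf{P}$). That string is \emph{guaranteed} to compute a function of circuit complexity $>N^{\delta}\ge\poly(n)$, so feeding it into the Impagliazzo--Wigderson generator of Theorem~\ref{t:IW_generator} yields a deterministic derandomization of the $\mathsf{BPP}$-algorithm for $L^{\star}$ with \emph{no further error}; hence $\mathsf{PSPACE}=\mathsf{ZPP}$. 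A padding argument then gives $\mathsf{ESPACE}\subseteq\mathsf{ZPE}$, and since for any fixed $\varepsilon_0\in(0,1)$ the class $\mathsf{ESPACE}$ unconditionally contains a language requiring circuits of size $2^{\varepsilon_0 m}$ on all large $m$ (a counting argument: brute-force search in space $2^{O(m)}$ over the $2^{2^m}$ truth tables of length $2^m$), we obtain a function $h\in\mathsf{ZPE}$ with $2^{\varepsilon_0 m}$ circuit lower bounds on all large $m$. Lemma~\ref{l:cond_discset} applied to $h$ (with $\varepsilon=\varepsilon_0$) then produces a zero-error pseudodeterministic polynomial-time $(1/n^c)$-discrepancy set family for every $Q\in\mathsf{DTIME}(n^c)$, which for $(1/n^c)$-dense $Q$ is exactly the generic hitting set family demanded by item~1.

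The hard part will be the middle step: extracting $\mathsf{PSPACE}=\mathsf{ZPP}$ rather than merely $\mathsf{PSPACE}=\mathsf{BPP}$ (which is all the Trevisan--Vadhan side gives on its own) from the combination ``$\mathsf{PSPACE}=\mathsf{BPP}$ together with a dense \emph{easy} set of high-complexity strings.'' This is exactly where the two candidates must fail \emph{simultaneously}, and where the easy-witness method is indispensable: it lets a Las Vegas procedure obtain a \emph{provably} hard truth table (membership in $Q^{\star}$ is efficiently checkable and forces hardness) rather than a merely likely-hard one, which is what turns a bounded-error derandomization into a zero-error one. A secondary, purely bookkeeping obstacle is keeping all seed lengths, stretches and the $\{0,1\}^{2^m}$-versus-$\{0,1\}^{n}$ truth-table identifications mutually consistent, so that $B_{\varepsilon}$ stays within time $2^{n^{\varepsilon}}$ for every $\varepsilon$ and the hardness of the sampled element of $Q^{\star}_N$ comfortably exceeds the circuit size the generator for $L^{\star}$ must fool; I expect to handle this by choosing $N$ a sufficiently large fixed power of $n$ depending on the constants $a$ from Theorem~\ref{t:IW_generator}.
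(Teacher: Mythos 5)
Your proposal is correct and follows essentially the same route as the paper: the same two candidate families $\mathcal{H}^{\mathsf{easy}}$ (truth tables of small circuits) and $\mathcal{H}^{\mathsf{hard}}$ (the Trevisan--Vadhan generator), the same collapse consequences ($\mathsf{PSPACE}\subseteq\mathsf{BPP}$ from the hard family failing, and the Las Vegas sampling of a certifiably hard truth table from $Q^{\star}$ to upgrade to $\mathsf{ZPP}$), and the same finish via diagonalization in $\mathsf{ESPACE}$, padding into $\mathsf{ZPE}$, and Lemma~\ref{l:cond_discset}. The only cosmetic difference is that you package the two families into a single union $B_{\varepsilon}$ failing on one property $Q^{\star}$, whereas the paper treats the two families (and possibly two different failing properties) separately; this changes nothing in substance.
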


We now give the proof of Theorem \ref{t:uncond_hittingset}. Informally, the proof is a ``win-win-win'' argument using uniform hardness-randomness tradeoffs. We will describe two candidate hitting set families $\mathcal{H}^{\mathsf{easy}}$ and $\mathcal{H}^{\mathsf{hard}}$ computable in sub-exponential time, the first based on the ``easy witness'' method of Kabanets \citep{DBLP:journals/jcss/Kabanets01}, and the second based on the uniform hardness-randomness tradeoffs of Impagliazzo-Wigderson and Trevisan-Vadhan \citep{DBLP:journals/jcss/ImpagliazzoW01, DBLP:journals/cc/TrevisanV07}. We will show that if there is a sufficiently easy dense property $Q$ such that $\mathcal{H}^{\mathsf{easy}}$ is not an i.o.~hitting set family for $Q$, then $\mathsf{BPP} = \mathsf{ZPP}$. We will also show that if there is a sufficiently easy dense property $Q$ such that $\mathcal{H}^{\mathsf{hard}}$ is not an i.o.~hitting set family for $Q$, then $\mathsf{PSPACE} = \mathsf{BPP}$. Thus, in each of these cases, we either ``win'' for each sufficiently easy dense property $Q$ by computing an i.o.~hitting set family in sub-exponential time, or we have a complexity collapse. Finally, we show how to ``win'' in the case that both complexity collapses occur, by using Lemma \ref{l:cond_discset}. The win in this case is in the form of a polynomial-time pseudodeterministic algorithm which outputs a hitting set family for each sufficiently easy dense property. When either of the first two wins occur, item 2 of Theorem \ref{t:uncond_hittingset} holds, and when the third win occurs, item 1 holds.

Fix a $c > 0$ as in the statement of Theorem \ref{t:uncond_hittingset}, and let $\varepsilon > 0$ be any constant. The candidate hitting set families $\mathcal{H}^{\mathsf{easy}}$ and $\mathcal{H}^{\mathsf{hard}}$ depend on $\varepsilon$, but we work with a fixed arbitrarily small $\varepsilon > 0$ through the proof, and therefore do not need to formalize this dependence.

We first define the candidate hitting set family $\mathcal{H}^{\mathsf{easy}}$. Consider the set $\mathcal{C}_n$ of all Boolean circuits $C$ on $\lceil \log n \rceil$ input variables and of size at most $n^\delta$, where $\delta \eqdef \varepsilon/10$. Each circuit $C \in \mathcal{C}_n$ computes a Boolean function $f_C \colon \{0,1\}^{\lceil \log(n) \rceil} \to \{0,1\}$, and its truth-table $\mathsf{tt}(f_C)$ is a string of length $\geq n$. We consider the following  family $\mathcal{H}^{\mathsf{easy}} = \{H_n^{\mathsf{easy}}\}$  obtained from $\mathcal{C}_n$:
$$
H_n^{\mathsf{easy}} \eqdef \{\mathsf{left}_n(\mathsf{tt}(f_C)) \mid C \in \mathcal{C}_n\} \subseteq \{0,1\}^n.
$$

By our choice of $\delta = \varepsilon/10$ and a standard upper bound on the number of small circuits (cf. \citep{Arora-Barak09}), there are at most $O(2^{n^{2\delta}}) \leq O(2^{n^{\varepsilon/2}})$ strings in $\mathcal{H}_n^\mathsf{easy}$. Furthermore, an ordered list containing all such strings can be printed in time $O(2^{n^{\varepsilon}})$. 

The following lemma is key to our analysis of $\mathcal{H}^{\mathsf{easy}}$.

\begin{lemma}\label{l:A1_fails}
If there is a $1/n^c$-dense property $Q$ decidable in deterministic time $n^c$ such that $\mathcal{H}^\mathsf{easy}$ is not an i.o.~hitting set family for $Q$, then $\mathsf{BPP} \subseteq \mathsf{ZPP}$.
\end{lemma}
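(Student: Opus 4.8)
The plan is to contrapositively exploit the failure of $\mathcal{H}^{\mathsf{easy}}$ as an i.o.\ hitting set family for some easy dense $Q$ to show that $\mathsf{NP}$ (and more) has ``easy witnesses,'' in the sense of Kabanets, and then invoke the known consequence that this collapses $\mathsf{BPP}$ to $\mathsf{ZPP}$. Concretely, suppose $Q \in \mathsf{DTIME}(n^c)$ is $1/n^c$-dense, but for all large $n$ we have $H_n^{\mathsf{easy}} \cap Q_n = \emptyset$. The key observation is that $H_n^{\mathsf{easy}}$ consists of (the length-$n$ prefixes of) truth tables of \emph{all} Boolean circuits of size $\le n^{\delta}$ on $\lceil \log n\rceil$ inputs, $\delta = \varepsilon/10$. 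So the hypothesis says: no string in $Q_n$ is ``compressible'' to a circuit of size $n^{\delta}$ (via the prefix encoding). In other words, every element of $Q_n$ has high circuit complexity as a truth table. Since $Q$ is dense and easy, this gives us a \emph{dense, polynomial-time-recognizable} set of strings, each of high circuit complexity, for infinitely many — indeed all large — $n$.

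Next I would turn this into a circuit lower bound usable by a hardness-randomness tradeoff. A dense easy set all of whose members have circuit complexity $> n^{\delta}$ yields a hard function computable with a small amount of nondeterminism / in a level of the exponential-time hierarchy: by searching (in $\mathsf{NE}$ or similar) for the lexicographically first string of length $n$ in $Q_n$, one obtains a function whose truth table has circuit complexity $n^{\Omega(1)}$ on all large input lengths. The relevant point, following Kabanets \citep{DBLP:journals/jcss/Kabanets01}, is that the \emph{absence} of easy witnesses for a verifier implies that $\mathsf{BPP}$ (in fact $\mathsf{MA}$) can be derandomized into $\mathsf{ZPP}$ — because one can, in zero-error randomized polynomial time, either find a good witness by sampling or else use the guaranteed hard function to run the Impagliazzo–Wigderson generator and derandomize. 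I would phrase $Q$ as exactly such a verifier: on input $1^n$, the ``witness'' is an $n$-bit string, and being a good witness means membership in $Q_n$; easy witnesses for this verifier are precisely elements of $H_n^{\mathsf{easy}} \cap Q_n$. Their absence for all large $n$ means the verifier has \emph{no} easy witnesses anywhere, which is the strong hypothesis that powers the collapse.

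The actual chain I would write is: (i) no easy witnesses for the $Q$-verifier $\Rightarrow$ for every polynomial $p$ and all large $n$, every $n$-bit string in $Q_n$ has circuit complexity $> n^{\delta} \ge (\log|tt|)^{\omega(1)}$-type hardness, so the ``canonical hard string'' function $g$ (lex-first element of $Q_n$, padded appropriately) is a function on $\Theta(\log N)$ bits of circuit complexity $N^{\Omega(1)}$ computable in, say, $\mathsf{E}$ \emph{relative to an $\mathsf{NP}$ oracle} or directly in $\mathsf{ZPE}$ using the randomized search; (ii) plug $g$ into Theorem \ref{t:IW_generator} to get a quick PRG fooling polynomial-size circuits, computable in zero-error randomized polynomial time (since finding $g$ only needs to succeed with high probability and can be verified using that $Q$ is easy); (iii) this derandomizes $\mathsf{BPP}$, and because the generator's seed can be found with zero error (verify candidate hard strings via the easy test for $Q$, abort otherwise), we land in $\mathsf{ZPP}$ rather than merely $\mathsf{P}/\mathsf{poly}$ or $\mathsf{MA}$; hence $\mathsf{BPP} \subseteq \mathsf{ZPP}$.

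The main obstacle I anticipate is step (ii)–(iii): making the hard function genuinely \emph{usable} for a \emph{zero-error} derandomization. Getting a hard function is easy; getting one in $\mathsf{ZPE}$ (so that the IW generator is computable with zero error, giving $\mathsf{ZPP}$ and not just $\mathsf{BPP}=\mathsf{P}/\mathsf{poly}$ or $\mathsf{AM}$) is the delicate part, and this is exactly where Kabanets's easy-witness argument does its work: the zero-error search over witness candidates either finds a bona fide witness of $Q$ (verifiable, since $Q$ is easy) or certifies that all short-circuit-describable strings fail, in which case some genuinely hard string is available to seed the generator. I would cite the easy-witness lemma of \citep{DBLP:journals/jcss/Kabanets01} for this pivot rather than reprove it, and devote the bulk of the proof to verifying that the $Q$-verifier meets its hypotheses (density guarantees enough witnesses exist; easiness guarantees witnesses are checkable; the structure of $H_n^{\mathsf{easy}}$ guarantees ``easy witness'' means exactly ``small-circuit truth-table prefix'').
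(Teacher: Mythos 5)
Your proposal is correct and follows essentially the same route as the paper: since $H_m^{\mathsf{easy}} \cap Q_m = \emptyset$ for all large $m$, every string in $Q_m$ is the prefix of a hard truth table, and one can be obtained with zero error by random sampling plus the polynomial-time test for $Q$ (aborting if no sample lands in $Q$, which happens rarely by density), then fed into the Impagliazzo--Wigderson generator of Theorem \ref{t:IW_generator} to decide any given $\mathsf{BPP}$ language by enumerating seeds and taking a majority vote. The only superfluous element is the ``lex-first element of $Q_n$'' / $\mathsf{NE}$-oracle detour: no canonical hard function is needed, because any sampled element of $Q_m$ yields the same correct answer for the fixed $\mathsf{BPP}$ language, which is precisely why the simulation is zero-error rather than merely bounded-error.
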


\begin{proof}
Let $L \in \mathsf{BPP}$, and $V_L$ be a polynomial time verifier that decides $L$ in time $n^{c_L/2}$, where $c_L \geq 1$. In other words, if $x \in L$ then $\Pr_{y}[V_L(x,y) = 1] \geq 2/3$, while if $x \notin L$ we have $\Pr_{y}[V_L(x,y) = 1] \leq 1/3$. By assumption, there is a $1/n^c$-dense property $Q$ decidable in deterministic time $n^c$ such that $H_n^\mathsf{easy} \cap Q_n = \emptyset$ whenever $n \geq n_0$, where $n_0$ is a fixed constant. 
Let $V$ be a deterministic Turing machine running in time $n^c$ for $n$ large enough, and deciding $Q$.

The following zero-error algorithm $B_1$ decides $L$ in polynomial time. Let $x \in \{0,1\}^n$ be the input string of $B_1$. If $n < n_0$, $B_1$ outputs the correct answer by storing it on its code. Otherwise, consider the Boolean circuit $D_x$ of size at most $n^{c_L}$ obtained from $V_L$ by fixing its first input to $x$. In order to decide $L$ on $x$ it is enough to estimate the acceptance probability of $D_x$ with additive error at most $1/10$.

Let $m \eqdef n^{c_L}$, $\gamma \eqdef \delta$, and assume from now on that $n \geq n_0$. Let $F$ be the polynomial time computable function from Theorem \ref{t:IW_generator}, and $a,b \geq 1$ be the respective constants for our choice of $\gamma$. Finally, set $\ell \eqdef a \cdot \log(r) = a \cdot c_L \cdot \log(n)$, and $m \eqdef 2^\ell = n^{a \cdot c_L} \leq \poly(n)$.

Algorithm $B_1$ samples $k \eqdef m^{10c}$ independent uniformly distributed strings $z_1, \ldots, z_k \sim \{0,1\}^{m}$. Let $z_i$ be the first string on this list such that $V(z_i) = 1$, if such string exists. Otherwise, $B_1$ aborts its computation. Using the hypothesis that $Q$ is $n^{-c}$-dense, we get that whenever $n$ is sufficiently large, algorithm $B_1$ succeeds with high probability in finding a string $z_i$ of this form. By redefining $n_0$, we can assume without loss of generality that $B_1$ succeeds whenever $n \geq n_0$.

Since $m \geq n \geq n_0$ and $z_i \in Q$, it follows from our previous discussion that $z_i \notin H_{m}^\mathsf{easy}$. In other words, $z_i$ is \emph{not} the leftmost segment of the truth-table of a Boolean function $f \colon \{0,1\}^{a \cdot \log r} \to \{0,1\}$ of circuit complexity at most $m^{\delta} = r^{\gamma a}$. 

Let $h \colon \{0,1\}^{a \log m} \to \{0,1\}$ be the Boolean function encoded by the string $z_i$ completed with zeroes until the next power of two. The previous paragraph implies that $h$ has circuit complexity greater than $r^{\gamma a}$.  It follows from Theorem \ref{t:IW_generator} that $G_h \colon \{0,1\}^{b \log r} \to \{0,1\}^r$ is a generator that $(1/10)$-fools circuits of size at most $r = n^{c_L}$. Since $F$ and $h$ are efficiently computable in $n$, $B_1$ can compute $G_h(w) = F(z_i,w)$ on every input $w \in \{0,1\}^{a \log r}$ in time $\mathsf{poly}(n)$. Moreover, the seed length of $G_h$ is $O(\log n)$. By standard methods, i.e. by trying all possible seeds of $G_h$ and taking a majority vote using $D_x$, it follows that $B_1$ can efficiently decide whether $x$ is in $L$. In addition, it is clear that $B_1$ decides $L$ with zero-error, since it aborts (with a small probability) when a good string $z_i$ is not found. Consequently, $L \in \mathsf{ZPP}$, which completes the proof of Lemma \ref{l:A1_fails}. 
\end{proof}

\vspace{0.2cm}

Next we define the candidate hitting set family $\mathcal{H}^{\mathsf{hard}}$. Assume that the $\mathsf{PSPACE}$-complete language $L^\star$ from Theorem \ref{t:uniform_hard_rand} can be decided in space $O(n^a)$, where $a \in \mathbb{N}$. Let $\delta \eqdef \varepsilon/(10a)$. Consider the function $G^\star$ obtained from an application of Theorem \ref{t:uniform_hard_rand} with stretch parameter $b = 1/\delta$ and the parameter $c$ set as in the statement of Theorem \ref{c:uncond_cons}. This gives a sequence $\{G^\star_\ell\}_{\ell \in \mathbb{N}}$ of functions computable with oracle access to $L^\star_{\leq \ell}$ in time $\mathsf{poly}(\ell)$, where each $G^\star_\ell \colon \{0,1\}^\ell \to \{0,1\}^{\ell^b}$. Consider a seed function $s(n) \eqdef \lceil n^\delta \rceil$, and let $G_{n} \eqdef G^\star_{s(n)}$. Since $b = 1/\delta$, we have 
$$
G_{n} \colon \{0,1\}^{\lceil n^\delta \rceil} \to \{0,1\}^{ \geq n}.
$$
In addition, $G_n$ can be computed in time $O(2^{n^{\varepsilon/5}})$ without access to an oracle. This is because the oracle answers to $L_{\leq n^\delta}$ can be computed within this running time due to our choice of parameters and the fact that bounded-space algorithms run in time at most exponential. 

Each set $H_n^{\mathsf{hard}}$ is obtained from $G_n$ as follows:
$$
H_n^{\mathsf{hard}} \eqdef \{\mathsf{left}_n(G_n(w)) \mid w \in \{0,1\}^{s(n)}\}.
$$

The following lemma is key to our analysis of $\mathcal{H}^{\mathsf{hard}}$.

\begin{lemma}\label{l:A2_fails}
If there is a $1/n^c$-dense property $Q$ decidable in deterministic time $n^c$ such that $\mathcal{H}^\mathsf{hard}$ is not an i.o.~hitting set family for $Q$, then $\mathsf{PSPACE} \subseteq \mathsf{BPP}$.
\end{lemma}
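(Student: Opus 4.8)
The plan is to run the exact dual of the argument for Lemma~\ref{l:A1_fails}, feeding the failure of the hitting-set property into the \emph{uniform} hardness--randomness tradeoff of Theorem~\ref{t:uniform_hard_rand} in place of the Impagliazzo--Wigderson generator. Recall that $\mathcal{H}^{\mathsf{hard}}$ failing to be an i.o.\ hitting set family for $Q$ means precisely that $H_n^{\mathsf{hard}} \cap Q_n = \emptyset$ for all sufficiently large $n$, where $H_n^{\mathsf{hard}} = \{\mathsf{left}_n(G_n(w)) \mid w \in \{0,1\}^{s(n)}\}$ and $G_n = G^\star_{s(n)}$ with $s(n) = \lceil n^\delta\rceil$. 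So it suffices to manufacture, out of the decision procedure for $Q$, a single algorithm $D$ that $(1/m(\ell)^c)$-distinguishes the output of $G^\star_\ell$ from uniform for \emph{every} large enough $\ell$ while running in time $O(m(\ell)^c)$; Theorem~\ref{t:uniform_hard_rand} then delivers $\mathsf{PSPACE} \subseteq \mathsf{BPP}$. (As usual we may assume $c \ge 1$, since otherwise the time bound $n^c$ in the hypothesis is vacuous.)

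The distinguisher is the obvious ``is the prefix in $Q$?'' test. On an input $u$ of length $M$, $D$ recovers $\ell$ from $M = m(\ell) = \ell^{b}$ (rejecting if $M$ is not a valid stretch value), then computes $n \eqdef n_\ell$, the largest integer with $s(n) = \ell$, and accepts iff $\mathsf{left}_n(u) \in Q_n$, using the deterministic $n^c$-time decision procedure for $Q$. Since the leftmost $n$ bits of a uniform $M$-bit string are uniform over $\{0,1\}^n$, we get $\Pr_{u}[D(u)=1] = |Q_n|/2^n \ge n^{-c}$ by $(1/n^c)$-density, for $n$ (hence $\ell$) large. On the other hand $G_n = G^\star_{s(n)} = G^\star_\ell$, and every string $\mathsf{left}_n(G_n(w))$ belongs to $H_n^{\mathsf{hard}}$, which is disjoint from $Q_n$ once $n \ge n_0$; hence $\Pr_{w}[D(G^\star_\ell(w))=1] = 0$. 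The distinguishing gap is therefore $\ge n^{-c}$, and since $n = n_\ell \le \ell^{b} = m(\ell)$ this is $\ge m(\ell)^{-c}$. The running time is $O(m(\ell))$ for bookkeeping (reading $u$, extracting the prefix, recovering $\ell$ and $n_\ell$) plus $n^c \le m(\ell)^c$ for the $Q$-test, i.e.\ $O(m(\ell)^c)$ when $c \ge 1$. Thus $D$ is an $O(m^c)$-distinguisher for $G^\star$, and Theorem~\ref{t:uniform_hard_rand} finishes the proof.

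The only point that needs genuine care — and the one I expect to be the main obstacle in writing it up cleanly — is quantifier management: the failure hypothesis is a statement about all large \emph{$n$}, while Theorem~\ref{t:uniform_hard_rand} requires a fixed $D$ behaving correctly for every large \emph{$\ell$}. This is resolved by noting that $n \mapsto s(n) = \lceil n^\delta\rceil$ is nondecreasing, tends to infinity, and (as $\delta < 1$) has consecutive increments in $\{0,1\}$ for large $n$, so its range is cofinite in $\mathbb{N}$; hence $n_\ell$ is well defined for all large $\ell$, it tends to infinity, and $D$ can compute it from the input length alone. The remaining bookkeeping — that the stretch parameter $b$ (an integer, $\approx 1/\delta$) and the instantiation constant $c$ of $G^\star$ are consistent with the gap $n^{-c}$ and runtime $n^c$ we obtain — all follows from the two inequalities $n_\ell \le m(\ell) \le \mathrm{poly}(n_\ell)$ together with $c \ge 1$. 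No error amplification or win--win analysis is needed internally to this lemma; it is a clean contrapositive reduction, the mirror image of Lemma~\ref{l:A1_fails}, and the two lemmas are combined only later, in the proof of Theorem~\ref{t:uncond_hittingset}.
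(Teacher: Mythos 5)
Your proposal is correct and follows essentially the same route as the paper's proof: use the deterministic decider for $Q$ as the $O(m(\ell)^c)$-time distinguisher for $G^\star_\ell$ (with distinguishing gap at least $n^{-c}\ge m(\ell)^{-c}$ from density versus the empty intersection with $H_n^{\mathsf{hard}}$), and invoke Theorem~\ref{t:uniform_hard_rand}. If anything, your handling of the prefix/length bookkeeping (testing $\mathsf{left}_{n_\ell}(u)\in Q_{n_\ell}$ and noting that the range of $s$ is cofinite) is more careful than the paper's, which informally identifies the output length $\ell^b$ with $n$.
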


\begin{proof}
If $\mathcal{H}^{\mathsf{hard}}$ is not an i.o.~hitting set family for $Q$, it must be the case that $H_n^{\mathsf{hard}} \cap Q_n = \emptyset$ for every large enough $n$. Equivalently, $\mathsf{left}_n(G^\star_{s(n)}(\{0,1\}^{s(n)}))\, \cap \, Q_n = \emptyset$, where $s(n) = n^\delta$. Since $s(\cdot)$ is surjective as a function in $\mathbb{N} \to \mathbb{N}$, for every large enough $\ell$, 
$$
G^\star_\ell(\{0,1\}^\ell)\, \cap \, Q_{\ell^b} = \emptyset.
$$
On the other hand, $Q$ is a $n^{-c}$-dense property, and for large enough $\ell$,
$$
\Pr_{y \sim \{0,1\}^{\ell^b}}[y \in E_{\ell^b}]\;=\;\Pr_{y \sim \{0,1\}^{\ell^b}}[V(y)=1] \;\geq \; \ell^{-bc} \; = \; m^{-c},
$$
using $m = \ell^b$ as in Theorem \ref{t:uniform_hard_rand}. Furthermore, by assumption there is a deterministic Turing machine $V$ running in time at most $m^c$ on inputs of length $m$ for large enough $m$, and deciding $Q$. Therefore, it can be used as a distinguisher against $G^\star$, matching the parameters of Theorem \ref{t:uniform_hard_rand}. Thus $\mathsf{PSPACE} \subseteq \mathsf{BPP}$, completing the proof of Lemma \ref{l:A2_fails}.
\end{proof}

\vspace{0.2cm}

We complete the proof of Theorem \ref{t:uncond_hittingset} using a third and final application of the hardness versus randomness paradigm. If $\mathcal{H}^{\mathsf{easy}}$ is an i.o. hitting set family for every $n^{-c}$-dense property $Q$ decidable in deterministic time $n^c$, the second item of Theorem \ref{t:uncond_hittingset} holds by letting $B_{\varepsilon}$ be the deterministic algorithm which, on input $1^n$, outputs $H_n^{\mathsf{easy}}$ in time $2^{O(n^{\varepsilon})}$. If not, then we have $\mathsf{BPP} = \mathsf{ZPP}$ by Lemma \ref{l:A1_fails}. If $\mathcal{H}^{\mathsf{hard}}$ is an i.o.~hitting set family for every $n^{-c}$-dense property $Q$ decidable in deterministic time $n^c$, then again, the second item of Theorem \ref{t:uncond_hittingset} holds by letting $B_{\varepsilon}$ be the deterministic algorithm which, on input $1^n$, outputs $H_n^{\mathsf{hard}}$ in time $2^{O(n^{\varepsilon})}$. If not, then we have $\mathsf{PSPACE} = \mathsf{BPP}$ by Lemma \ref{l:A2_fails}. Thus, if the second item of Theorem \ref{t:uncond_hittingset} fails to hold, we have the complexity collapse $\mathsf{PSPACE} = \mathsf{ZPP}$. This facilitates a zero-error pseudodeterministic construction of an unconditional hitting set in polynomial time as follows.

First, it follows by direct diagonalization that there is a language computed in $\mathsf{DSPACE}(2^{O(m)})$ that requires circuits of size $\geq 2^{m/2}$ for every large $m$. From $\mathsf{PSPACE} \subseteq \mathsf{ZPP}$, a standard padding argument implies that this language can be computed in $\mathsf{ZPTIME}(2^{O(m)})$. In other words, there is a function $h\colon \{0,1\}^* \to \{0,1\}$ in $\mathsf{ZPTIME}(2^{O(m)})$ that for every large $m$ cannot be computed by circuits of size $\leq 2^{m/2}$. 

Now, by setting $\varepsilon = 1/2$ in Lemma \ref{l:cond_discset}, we have that there is a pseudodeterministic algorithm $A$ running in polynomial time such that $\{A(1^n)\}$ is a $1/n^c$-discrepancy set family for every property $Q$ decidable in deterministic time $n^c$, and consequently a hitting set family for every $1/n^c$-dense property $Q$ decidable in deterministic time $n^c$. This completes the proof of Theorem \ref{t:uncond_hittingset}. \qed 
   
\vspace{0.2cm}   

\begin{remark}
Observe that the argument presented above provides a stronger \emph{discrepancy set} in the first case of Theorem \emph{\ref{t:uncond_hittingset}}. While this is not needed in our applications, it might be helpful elsewhere.
\end{remark}

The generic Theorem \ref{t:uncond_hittingset} can be used to show unconditionally the existence of certain kinds of explicit constructions for easy dense properties.
 
\begin{corollary}
\label{c:uncond_cons}
Let $Q \subseteq \{0,1\}^*$ be any easy dense property. Then for each $\varepsilon > 0$, there is an i.o.~zero-error pseudodeterministic construction for $Q$ in time $O(2^{n^{\varepsilon}})$. 
\end{corollary}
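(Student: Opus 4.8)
Looking at this, Corollary~\ref{c:uncond_cons} should follow from Theorem~\ref{t:uncond_hittingset} much in the way that Corollary~\ref{c:cond_construction} follows from Lemma~\ref{l:cond_discset}: once we have a hitting set family for $Q$, we turn it into a construction algorithm for $Q$ by scanning the hitting set in lexicographic order and testing membership in $Q$. The twist is that Theorem~\ref{t:uncond_hittingset} gives us one of two possible ``worlds,'' and we must produce a single algorithm that works in both.

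The plan is as follows. Fix $Q$ easy and dense, so there is a polynomial-time machine $M_Q$ deciding $Q$ and a constant $c$ with $M_Q$ running in time $n^c$ and $Q$ being $(1/n^c)$-dense for large $n$; fix this $c$ and invoke Theorem~\ref{t:uncond_hittingset} with it. In Case~1 there is a polynomial-time pseudodeterministic algorithm $A$ whose canonical output $H_n$ is a hitting set for $Q$ at \emph{every} large $n$; in Case~2, for each $\varepsilon>0$ there is a deterministic $O(2^{n^\varepsilon})$-time algorithm $B_\varepsilon$ whose output $H_n'$ is a hitting set for $Q$ on an infinite set $S$ of lengths. In either case I would define the construction algorithm $A_Q$ on input $1^n$ to (a) run \emph{both} the polynomial-time pseudodeterministic algorithm $A(1^n)$ and the deterministic $B_\varepsilon(1^n)$, obtaining candidate sets, (b) take the union of these sets, sort it lexicographically as $y_1,\dots,y_m$, and (c) simulate $M_Q$ on each $y_i$ in order, outputting the first $y_i \in Q_n$ and $\perp$ if none is found. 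Since $A$ runs in polynomial time and $B_\varepsilon$ runs in time $O(2^{n^\varepsilon})$, and the union has size $\poly(n) + 2^{O(n^\varepsilon)}$, the whole procedure runs in time $2^{O(n^\varepsilon)}$.

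It remains to check pseudodeterminism and correctness. The deterministic part $B_\varepsilon(1^n)$ contributes nothing random, so the only randomness comes from $A$; by the analysis in Lemma~\ref{l:cond_discset} / Corollary~\ref{c:cond_construction} the output of $A(1^n)$ equals its canonical value $H_n$ with probability $1-o(1)$, so $A_Q(1^n)$ equals a fixed value (depending only on $H_n$, $H_n'$, and $Q$) with probability $1-o(1)$. To see that $A_Q$ is moreover \emph{zero-error}: whenever $A(1^n)$ produces a legitimate subset of $\{0,1\}^n$ (which it does except with probability $o(1)$, and we can additionally have it output $\perp$ rather than garbage on the bad paths, using that in the relevant Case~1 world $A$ was obtained via Lemma~\ref{l:cond_discset} from a $\mathsf{ZPE}$ lower bound and is itself zero-error), the set scanned contains $H_n'$, hence contains an element of $Q_n$ for $n\in S$, so $A_Q$ outputs the lexicographically first element of $(H_n\cup H_n')\cap Q_n$; on a bad path it outputs $\perp$. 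Thus $A_Q$ is a zero-error pseudodeterministic construction, and it succeeds (has a canonical output in $Q_n$) for every large $n$ in Case~1 and for every $n \in S$ (an infinite set) in Case~2. Either way $A_Q$ is an i.o.~zero-error pseudodeterministic construction for $Q$ running in time $O(2^{n^\varepsilon})$, and since $\varepsilon>0$ was arbitrary we are done.

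The main subtlety to get right is the zero-error guarantee across both worlds: one must make sure that in Case~1 the pseudodeterministic algorithm $A$ supplied by the proof of Theorem~\ref{t:uncond_hittingset} is the zero-error one (which it is, since there it is obtained from Lemma~\ref{l:cond_discset} applied to a hard function in $\mathsf{ZPE}$ arising from the collapse $\mathsf{PSPACE}=\mathsf{ZPP}$), and that on computation paths where $A$ fails one outputs $\perp$ rather than an arbitrary string, so that $A_Q$ never emits a non-canonical $n$-bit string. Everything else — the density argument guaranteeing the hitting set actually hits $Q_n$, the lexicographic-first selection making the output canonical, and the running-time bookkeeping — is routine and parallels Corollary~\ref{c:cond_construction}.
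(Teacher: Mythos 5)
Your underlying strategy is the same as the paper's: invoke Theorem~\ref{t:uncond_hittingset} with the parameter $c$ determined by $Q$, and in each case convert the hitting set family into a construction by scanning it in lexicographic order with the decider for $Q$, exactly as Corollary~\ref{c:cond_construction} follows from Lemma~\ref{l:cond_discset}. You also correctly identify the one subtlety the paper relies on, namely that the algorithm in the first case is genuinely \emph{zero-error} because it comes from Lemma~\ref{l:cond_discset} applied to a hard function in $\mathsf{ZPE}$.

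The one place where your write-up diverges from the paper is also where it has a gap: you package the two cases into a \emph{single} algorithm $A_Q$ that always runs both $A$ and $B_\varepsilon$ and scans the union of their outputs. This is fine in Case~1, but in Case~2 no polynomial-time pseudodeterministic $A$ with the hitting-set property is guaranteed to exist; whatever concrete candidate you substitute for $A$ (e.g., the one built in the proof of the theorem under the collapse $\mathsf{PSPACE}=\mathsf{ZPP}$) may output \emph{different} sets on different computation paths with non-negligible probability each. Then the lexicographically first element of $(S_n \cup H_n')\cap Q_n$ is itself a non-degenerate random variable, and $A_Q$ fails to be pseudodeterministic at all, let alone zero-error. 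Since the statement is purely existential, the repair is immediate and is what the paper does: do not merge the cases. In Case~1 use the zero-error pseudodeterministic $A$ alone; in Case~2 use the deterministic $B_\varepsilon$ alone, noting that a deterministic algorithm that outputs the lex-first element of $H_n'\cap Q_n$ (or $\perp$/a fixed string if that intersection is empty) is trivially a zero-error pseudodeterministic construction on the infinite set $S$. With that adjustment your argument coincides with the paper's proof.
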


To establish Corollary \ref{c:uncond_cons}, note that if the first item of Theorem \ref{t:uncond_hittingset} holds, a zero-error pseudodeterministic construction for $Q$ in polynomial time follows exactly as Corollary \ref{c:cond_construction} follows from Lemma \ref{l:cond_discset}. On the other hand, if the second item of Theorem \ref{t:uncond_hittingset} holds, for every $\varepsilon > 0$, there is an i.o.~deterministic construction for $Q$ in time $O(2^{n^{\varepsilon}})$, just by computing the i.o.~hitting sets for $Q$ and outputting the lexicographically first element of the hitting set in $Q$, if such an element exists, and an arbitrary fixed string otherwise. Thus, in either case, for every $\varepsilon > 0$ there is an i.o.~zero-error pseudodeterministic construction for $Q$ in time $O(2^{n^{\varepsilon}})$.

We could trade off the parameters of items (1) and (2) of Theorem \ref{t:uncond_hittingset} to obtain a stronger bound on the running time of the construction in Corollary \ref{c:uncond_cons} by using a more general version of Theorem \ref{t:IW_generator}, but we do not pursue this direction here, as a polynomial-time bound for the construction does not appear to be provable using such an approach.

Corollary \ref{c:uncond_cons} also has a non-constructive element. We know that for any easy dense $Q$, there is an i.o.~zero-error pseudodeterministic construction, but we are unable to say explicitly what this construction is, as we do not which of the worlds {\bf PSEUDO} or {\bf SPARSE} we live in. Also, similarly to Theorem \ref{t:uncond_hittingset}, we do not have any information on the set of input lengths for which the construction works, except that it is infinite. This can be a somewhat unsatisfactory situation where explicit constructions are concerned, and we show how to address both issues in Section \ref{s:explicit}.

\begin{corollary}[Restatement of Theorem \ref{t:primes}]
\label{c:uncond_primes}
For each $\varepsilon > 0$, there is an i.o.~zero-error pseudodeterministic construction for $\mathsf{Primes}$ in time $O(2^{n^{\varepsilon}})$.
\end{corollary}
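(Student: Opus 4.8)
The plan is to obtain this as an immediate instance of Corollary~\ref{c:uncond_cons}, by verifying that the property $\mathsf{Primes}$ (under the encoding fixed in the introduction, where the top bit of an $n$-bit string is required to be $1$ so that length-$n$ strings correspond to integers in $[2^{n-1}, 2^n)$) is both \emph{easy} and \emph{dense} in the sense of Section~\ref{ss:basic_def_notation}. Once this is checked, Corollary~\ref{c:uncond_cons} applied with $Q = \mathsf{Primes}$ yields, for every $\varepsilon > 0$, an i.o.~zero-error pseudodeterministic construction for $\mathsf{Primes}$ in time $O(2^{n^{\varepsilon}})$, which is exactly the claim.

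First I would recall that $\mathsf{Primes}$ is easy: by the AKS primality test \citep{Agrawal02primesis}, membership in $\mathsf{Primes}$ is decidable in deterministic polynomial time (one reads off the integer from the string and runs AKS), so $\mathsf{Primes} \in \mathsf{DTIME}(n^{c_0})$ for some fixed constant $c_0$. Second, I would check density: the number of $n$-bit primes is $\pi(2^n) - \pi(2^{n-1})$, which by the Prime Number Theorem (indeed already by Chebyshev-type bounds, so no deep analytic input is needed) is $\Theta(2^n / n)$. Since $|\{0,1\}^n| = 2^n$, this gives $|\mathsf{Primes}_n| / 2^n = \Theta(1/n) \geq 1/n^2$ for all sufficiently large $n$; hence $\mathsf{Primes}$ is $(1/n^2)$-dense, and in particular dense in the sense of the definition (there is a $c$ with $\gamma(n) \geq n^{-c}$). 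Taking $c \eqdef \max\{c_0, 2\}$ makes $\mathsf{Primes}$ simultaneously $(1/n^c)$-dense and decidable in time $n^c$, matching the hypotheses needed.

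With these two facts in hand, the corollary follows with no further work: in the world where item~1 of Theorem~\ref{t:uncond_hittingset} holds, one gets a polynomial-time zero-error pseudodeterministic construction for $\mathsf{Primes}$ (run the pseudodeterministic hitting-set algorithm, then output the lexicographically first element of the hitting set that passes the AKS test), and in the world where item~2 holds, one gets an i.o.~deterministic construction in time $O(2^{n^{\varepsilon}})$ (compute the deterministic i.o.~hitting set, then output its lexicographically first prime element); in either case this is an i.o.~zero-error pseudodeterministic construction in time $O(2^{n^{\varepsilon}})$. There is essentially no obstacle here beyond the bookkeeping of the encoding convention: the only mild subtlety is making sure the density estimate is stated for strings of length exactly $n$ (integers in $[2^{n-1}, 2^n)$) rather than for all integers up to $2^n$, and confirming that the factor-of-two loss from the fixed leading bit is harmless since we only need polynomial density. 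The inherited non-constructivity — not knowing which of the two worlds we are in, and hence not being able to exhibit the algorithm explicitly, nor to bound the gaps between the input lengths on which it succeeds — is precisely the phenomenon already discussed for Corollary~\ref{c:uncond_cons}, and is taken up again in Section~\ref{s:explicit}.
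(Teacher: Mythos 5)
Your proposal is correct and follows exactly the paper's route: the paper also derives this corollary by instantiating Corollary~\ref{c:uncond_cons} with $Q = \mathsf{Primes}$, citing the AKS algorithm for easiness and the Prime Number Theorem for polynomial density. Your additional care about the encoding convention and the Chebyshev-level density bound is fine but not needed beyond what the paper states.
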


Corollary \ref{c:uncond_primes} follows from Corollary \ref{c:uncond_cons} because $\mathsf{Primes}$ is $1/\poly(n)$-dense by the Prime Number Theorem, and is in deterministic polynomial time by the Agrawal-Kayal-Saxena algorithm \cite{Agrawal02primesis}.

Using the ideas of the proof of Theorem \ref{t:uncond_hittingset}, we can partially answer a question of the Polymath 4 Project on generating primes. The following question was posed there: does $\mathsf{BPP} = \mathsf{P}$ imply a polynomial-time algorithm for generating primes? We consider a much weaker assumption, namely that $\mathsf{ZPP} \subseteq \io \mathsf{DTIME}(2^{n^c})$ for some fixed constant $c$. Under this assumption, we show that there is a subexponential-time deterministic algorithm for generating infinitely many primes.

\begin{theorem}
\label{t:cond_detprimes}
If there is a $c \geq 1$ such that $\mathsf{ZPP} \subseteq \io \mathsf{DTIME}(2^{n^c})$, then for each $\varepsilon > 0$ there is a deterministic algorithm $A$ running in time $O(2^{n^{\varepsilon}})$ such that for infinitely many $n$, $A(1^n)$ is an $n$-bit prime.
\end{theorem}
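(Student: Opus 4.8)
The plan is to combine the ``win-win-win'' structure from the proof of Theorem~\ref{t:uncond_hittingset} with the extra hypothesis $\mathsf{ZPP} \subseteq \io\,\mathsf{DTIME}(2^{n^c})$ to collapse the pseudodeterministic branch down to a deterministic one. First, recall that Theorem~\ref{t:uncond_hittingset} (applied with a constant $c$ large enough that $\mathsf{Primes}$ is $1/n^c$-dense and decidable in time $n^c$, which it is by the Prime Number Theorem and the AKS algorithm~\citep{Agrawal02primesis}) gives us two cases. If item~2 holds, then for every $\varepsilon>0$ there is a deterministic $O(2^{n^\varepsilon})$-time algorithm $B_\varepsilon$ computing an i.o.\ hitting set family for $\mathsf{Primes}$; outputting the lexicographically first $n$-bit prime in $B_\varepsilon(1^n)$ (testing each candidate with AKS), and an arbitrary string when no prime is found, yields a deterministic $O(2^{n^\varepsilon})$-time algorithm producing an $n$-bit prime for infinitely many $n$, and we are done \emph{unconditionally} in this case.

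The remaining case is when item~1 of Theorem~\ref{t:uncond_hittingset} holds: there is a polynomial-time \emph{zero-error pseudodeterministic} algorithm $A$ whose canonical output $\{A(1^n)\}$ is a hitting set family for every $1/n^c$-dense property in $\mathsf{DTIME}(n^c)$; in particular, composing with AKS exactly as in Corollary~\ref{c:cond_construction}/Corollary~\ref{c:uncond_cons}, there is a zero-error pseudodeterministic construction $A_{\mathsf{Primes}}$ for $\mathsf{Primes}$ running in polynomial time (for all large enough $n$). Now I would invoke the hypothesis. The computational task ``on input $1^n$, simulate $A_{\mathsf{Primes}}$ and output its canonical value'' is a zero-error randomized procedure producing, with probability $1-o(1)$, a fixed $n$-bit prime $p_n$ (and $\perp$ otherwise). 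The key observation is that \emph{deciding bits of the canonical output} is a $\mathsf{ZPP}$ problem: the language $L = \{(1^n,i,b) : \text{the } i\text{-th bit of the canonical output of } A_{\mathsf{Primes}}(1^n) \text{ equals } b\}$ can be decided by a zero-error polynomial-time algorithm (run $A_{\mathsf{Primes}}$, read the $i$-th bit of its output if non-$\perp$, else abort), so $L \in \mathsf{ZPP}$. By the hypothesis $\mathsf{ZPP} \subseteq \io\,\mathsf{DTIME}(2^{n^{c}})$, there is a deterministic algorithm deciding $L$ in time $2^{m^{c}}$ on inputs of length $m$, correctly for infinitely many input lengths $m$. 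A minor technical point to handle is the bookkeeping between the input length $m$ of the $\mathsf{ZPP}$ machine (a query $(1^n,i,b)$ has length $\Theta(n)$) and the infinite set of good lengths; since all queries associated with a fixed $n$ have comparable length, the deterministic simulation is correct on all queries $(1^n,\cdot,\cdot)$ for infinitely many values of $n$. For each such good $n$, querying $b=0,1$ for $i=1,\dots,n$ lets us deterministically reconstruct the canonical output $p_n$ of $A_{\mathsf{Primes}}(1^n)$ bit by bit, which is a genuine $n$-bit prime, in total deterministic time $\poly(n)\cdot 2^{n^{c}}$. Then a standard padding trick finishes: to get running time $O(2^{n^\varepsilon})$ for arbitrary $\varepsilon>0$, pick the constant $c$ in Theorem~\ref{t:uncond_hittingset} (which we are free to choose, as long as $\mathsf{Primes}$ is $1/n^c$-dense and in $\mathsf{DTIME}(n^c)$) so that the whole construction runs in polynomial time, and then apply the deterministic simulation on padded inputs $1^{n^{\varepsilon/c}}$, so that $2^{(n^{\varepsilon/c})^{c}} = 2^{n^\varepsilon}$; one must only check that the canonical prime is still retrievable and that infinitely many padded lengths remain good, which follows because the good set of the $\mathsf{ZPP}$ simulation is infinite and padding is monotone.

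Finally, I would merge the two branches into a single algorithm: run $B_\varepsilon(1^n)$ and extract the lex-first prime; also run the deterministic bit-by-bit reconstruction of $A_{\mathsf{Primes}}$'s canonical output on the appropriately padded length; test both candidates with AKS and output the first one that is an $n$-bit prime (and an arbitrary fixed string if neither is). In world \textbf{SPARSE} the first branch succeeds infinitely often; in world \textbf{PSEUDO}, under the hypothesis, the second branch succeeds infinitely often; so in all cases the combined deterministic algorithm outputs an $n$-bit prime for infinitely many $n$, within time $O(2^{n^\varepsilon})$. The main obstacle I anticipate is not conceptual but in the plumbing: verifying that the i.o.\ nature of the $\mathsf{ZPP}$ simulation survives the translation from query lengths to values of $n$ and the padding step, i.e.\ making sure the infinite set of input lengths on which $\mathsf{DTIME}(2^{m^c})$ correctly captures $L$ actually intersects the padded lengths we care about --- this requires choosing the padding and the reduction so that every $n$ of interest is ``represented'' at a length the simulation handles correctly, and is exactly the kind of place where the claimed result is delicate.
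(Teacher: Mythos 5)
Your proposal is correct in outline but takes a genuinely different route from the paper. The paper's proof never derandomizes the pseudodeterministic branch at all: it does a case analysis on $\mathsf{PSPACE} \stackrel{?}{=} \mathsf{ZPP}$ and $\mathsf{EXP} \stackrel{?}{\subseteq} \mathsf{P}/\mathsf{poly}$. If $\mathsf{PSPACE} \neq \mathsf{ZPP}$, one of the deterministic hitting-set families $\mathcal{H}^{\mathsf{easy}}, \mathcal{H}^{\mathsf{hard}}$ already works i.o.; if $\mathsf{PSPACE} = \mathsf{ZPP}$ but $\mathsf{EXP} \not\subseteq \mathsf{P}/\mathsf{poly}$, Theorem \ref{t:IW_generator} gives a deterministic subexponential hitting set; and in the remaining case Karp--Lipton yields $\mathsf{EXP} = \mathsf{PSPACE} = \mathsf{ZPP}$, which together with $\mathsf{ZPP} \subseteq \io\,\mathsf{DTIME}(2^{n^c})$ contradicts the almost-everywhere time hierarchy. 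So the hypothesis is used only to show the ``bad'' case cannot occur. You instead take the dichotomy of Theorem \ref{t:uncond_hittingset} as a black box and, in the \textbf{PSEUDO} case, extract the canonical prime from the zero-error pseudodeterministic constructor via a $\mathsf{ZPP}$ bit-query language. Your route is more direct and arguably more informative (it shows the hypothesis literally derandomizes the pseudodeterministic construction); the paper's route avoids all of the i.o./encoding bookkeeping that you correctly identify as the delicate point.

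That bookkeeping is where your write-up is not yet a proof, though the gap is fixable by a standard device. Two things go wrong with queries of the form $(1^n,i,b)$: (a) for fixed $n$ they occupy several input lengths, whereas the i.o.\ guarantee of $\mathsf{ZPP} \subseteq \io\,\mathsf{DTIME}(2^{m^c})$ is per-length, so ``comparable length'' does not suffice; and (b) even with a fixed-length encoding $m(n)$, an arbitrary infinite good set $S$ need not intersect the range of $m(\cdot)$, and ``padding is monotone'' does not repair this. Moreover, queries of length $m \approx n$ give simulation time $2^{n^c}$, not $2^{n^{\varepsilon}}$. All three issues are resolved simultaneously by encoding each query as a string of length exactly $m(n) = \lceil n^{\varepsilon/c} \rceil$ (writing $n$, $i$, $b$ in binary plus padding -- note this is a \emph{compression} of the query, not padding of it): the $\mathsf{ZPP}$ algorithm for this language still runs in time polynomial in its input length since $n = m^{O(1)}$; all queries needed to reconstruct $p_n$ sit at the single length $m(n)$; the map $n \mapsto m(n)$ is surjective onto all large integers, so infinitely many $n$ have $m(n) \in S$; and the simulation time becomes $2^{m(n)^c} \leq 2^{O(n^{\varepsilon})}$. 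With that encoding your argument goes through.
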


\begin{proof}
We just give a sketch, as the argument largely relies on the proof of Theorem \ref{t:uncond_hittingset}. The proof of Theorem \ref{t:uncond_hittingset} establishes that if $\mathsf{PSPACE} \neq \mathsf{ZPP}$, there is a subexponential-time algorithm for generating an i.o.~hitting set family for sufficiently easy dense properties, and hence an i.o.~deterministic construction for primes in subexponential time. 

Now consider the case that $\mathsf{PSPACE} = \mathsf{ZPP}$. Either we have that $\mathsf{EXP}$ admits polynomial-size circuits, or it does not. In the latter case, by using Theorem \ref{t:IW_generator}, we again have subexponential-time generatable hitting sets for properties that are easy and dense, and hence an i.o.~deterministic construction for primes in subexponential time. In the former case, by standard Karp-Lipton theorems \citep{DBLP:conf/stoc/KarpL80}, $\mathsf{EXP} = \mathsf{PSPACE}$, and hence $\mathsf{EXP} = \mathsf{ZPP}$. But in this case, the assumption that $\mathsf{ZPP} \subseteq \io \mathsf{DTIME}(2^{n^c})$ gives a contradiction to the almost-everywhere deterministic time hierarchy theorem. This concludes the argument.
\end{proof}

As an example of how the generic nature of Theorem \ref{t:uncond_hittingset} is useful, consider the question of generating incompressible strings. Let $f \colon \{0,1\}^* \to \{0,1\}^*$ be a polynomial time computable function. We say that $f$ is a \emph{compression scheme} if $f$ is an injective function and on every $x \in \{0,1\}^*$, $|f(x)| \leq |x|$. We use $\mathrm{I}^f \eqdef \{x \in \{0,1\}^* : \,|f(x)| \geq |x|-1\}$ to denote the set of $f$-\emph{almost incompressible} strings.

\begin{corollary}
\label{c:incomp}
Let $f \colon \{0,1\}^* \to \{0,1\}^*$ be an arbitrary polynomial time compression scheme.  For every $\varepsilon > 0$, there is a zero-error pseudodeterministic algorithm $A$ running in time $O(2^{n^\varepsilon})$ such that on infinitely many values of $n$, $A(1^n)$ outputs an $f$-almost incompressible string of length $n$.
\end{corollary}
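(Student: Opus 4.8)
The plan is to show that $\mathrm{I}^f$ is an easy dense property, and then invoke Corollary \ref{c:uncond_cons} directly. The bulk of the work is the density estimate; easiness is immediate.

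\textbf{Step 1: Easiness.} Since $f$ is computable in polynomial time and $\mathrm{I}^f = \{x : |f(x)| \geq |x| - 1\}$, the property $\mathrm{I}^f$ is decidable in deterministic polynomial time: on input $x$ of length $n$, compute $f(x)$ in $\poly(n)$ time and check whether $|f(x)| \geq n-1$. Thus $\mathrm{I}^f \in \mathsf{P}$.

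\textbf{Step 2: Density.} I would argue that $\mathrm{I}^f$ is in fact $(1/2)$-dense, hence $(1/n^c)$-dense for every $c$ and large $n$. Fix $n$ and consider the ``compressible'' strings at length $n$, namely $\{0,1\}^n \setminus \mathrm{I}^f = \{x \in \{0,1\}^n : |f(x)| \leq n-2\}$. Because $f$ is injective, the restriction of $f$ to this set is an injection into $\{0,1\}^{\leq n-2}$, a set of size $\sum_{i=0}^{n-2} 2^i = 2^{n-1} - 1$. Hence there are at most $2^{n-1}-1$ compressible strings of length $n$, so $|\mathrm{I}^f \cap \{0,1\}^n| \geq 2^n - (2^{n-1}-1) = 2^{n-1}+1 > 2^n/2$. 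Therefore $\mathrm{I}^f$ is dense in the sense of Section \ref{s:preliminaries}.

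\textbf{Step 3: Conclusion.} Having verified that $\mathrm{I}^f$ is easy and dense, apply Corollary \ref{c:uncond_cons} with the property $Q = \mathrm{I}^f$: for each $\varepsilon > 0$ there is an i.o.\ zero-error pseudodeterministic construction for $\mathrm{I}^f$ in time $O(2^{n^\varepsilon})$, i.e.\ a zero-error pseudodeterministic algorithm $A$ running in time $O(2^{n^\varepsilon})$ whose canonical output $A(1^n)$ lies in $\mathrm{I}^f \cap \{0,1\}^n$ for infinitely many $n$. This is exactly the claimed algorithm. There is no real obstacle here beyond the elementary counting in Step 2; the genuine content has already been packaged into Corollary \ref{c:uncond_cons} (and ultimately Theorem \ref{t:uncond_hittingset}). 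The only point worth flagging is that, as with all the constructions derived from Theorem \ref{t:uncond_hittingset}, the algorithm $A$ is non-constructive in the sense that we do not know which of the worlds \textbf{PSEUDO}/\textbf{SPARSE} supplies it, nor on which infinite set of input lengths it succeeds.
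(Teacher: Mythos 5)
Your proof is correct and follows exactly the paper's route: observe that $\mathrm{I}^f$ is decidable in polynomial time, use injectivity of $f$ to count that at most $2^{n-1}-1$ strings of length $n$ can compress to $\{0,1\}^{\leq n-2}$ (so $\mathrm{I}^f$ is $(1/2)$-dense), and then apply Corollary \ref{c:uncond_cons}. Your Step 2 just spells out the counting argument that the paper leaves as "a simple counting argument."
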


\begin{proof}
Observe that $\mathrm{I}^f$ can be decided in polynomial time, since $f$ is a polynomial time computable function. Let $\mathrm{I}^f_n \eqdef \mathrm{I}^f \cap \{0,1\}^n$. By a simple counting argument that uses the injectivity of $f$, for every $n \in \mathbb{N}$ we have $|\mathrm{I}^f_n|/2^n \geq 1/2$. Consequently, $\mathrm{I}^f$ is a dense language. The result now follows immediately from Corollary  \ref{c:uncond_cons}.
\end{proof}

\subsection{Pseudodeterministically Sampling Distributions}\label{ss:samplable}

In order to state the main result of this section we will need the following additional definition.\\
\vspace{-0.2cm}

\noindent \textbf{Samplable Distributions.} Let $\mathfrak{D} = \{\mathcal{D}_n\}_{n \in \mathbb{N}}$ be an ensemble of probability distributions, where each $\mathcal{D}_n$ is supported over $\{0,1\}^*$. For a string $a \in \{0,1\}^*$, we let $\mathcal{D}_n(a)$ denote the probability of $a$ under $\mathcal{D}_n$. We say that $\mathfrak{D}$ is \emph{polynomial time samplable} if there exists a polynomial time computable function $g \colon 1^* \times \{0,1\}^* \to \{0,1\}^* \cup \{\esymb\}$ for which the following holds:
\begin{itemize}
\item There exists an integer $c \geq 1$ such that for every $n$, $g(1^n, \{0,1\}^{n^c}) \subseteq \mathsf{Support}(\mathcal{D}_n) \cup \{\esymb\}$. 
\item There exists $k \geq 1$ such that for every $n$, $\Pr_{w \sim U_{n^c}}[g(1^n, w) = \,\esymb\,] \leq 1 - 1/n^k$.
\item For every $n$ and $a \in \{0,1\}^*$, $\mathcal{D}_n(a) = \Pr_{w \sim U_{n^c}}[g(1^n, w) = a \mid g(1^n, w) \neq \,\esymb\,]$. In other words, $\mathcal{D}_n$ and $g(1^n, U_{n^c})$ are equally distributed conditioned on the output of $g$ being different from the error symbol ``$\esymb$''.
\end{itemize}

\noindent Observe that the support of each $\mathcal{D}_n$ is not required to be efficiently computable.\\

We define the computational problem of generating a canonical sample from $\mathfrak{D}$ in the natural way. In other words, given $1^n$, the algorithm must produce a ``canonical'' string in the support of $\mathcal{D}_n$. Since every dense and easy property is polynomial time samplable, this setting provides a generalization of the explicit construction problem associated with such properties. (Observe that the difficulty of producing a canonical sample comes from the fact that the polynomial time function $g$ can fail with high probability. Of course, if $g$ never fails, it is enough to output, say, $g(1^n,0^{n^c})$.)

\begin{theorem}[Pseudodeterministic Samplers in $\mathtt{i.o.}$Subexponential Time]\label{t:canonical_sampler}
Let $\mathfrak{D}$ be a polynomial time samplable ensemble. Then $\mathfrak{D}$ admits subexponential time randomized algorithms that output a canonical sample from $\mathcal{D}_n$ for infinitely many values of $n$.
\end{theorem}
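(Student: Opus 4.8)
The plan is to reduce Theorem~\ref{t:canonical_sampler} to the already-established Corollary~\ref{c:uncond_cons} (equivalently, the dichotomy of Theorem~\ref{t:uncond_hittingset}) applied to a suitable auxiliary property built from the sampler $g$. The key observation is that although $\mathsf{Support}(\mathcal{D}_n)$ need not be decidable, the set of \emph{good seeds} is easy and dense: let $c,k \geq 1$ be the constants from the definition of polynomial-time samplability of $\mathfrak{D}$, and put $G_n \eqdef \{ w \in \{0,1\}^{n^c} : g(1^n,w) \neq \esymb \}$. The second bullet of the definition gives $|G_n|/2^{n^c} \geq 1/n^k$, and $G_n$ is decidable in $\poly(n)$ time since $g$ is polynomial-time computable; moreover, by the first bullet, $w \in G_n$ implies $g(1^n,w) \in \mathsf{Support}(\mathcal{D}_n)$. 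Hence a canonical element of $G_n$ yields a canonical sample of $\mathcal{D}_n$ after a single application of $g$, and we never have to test membership in $\mathsf{Support}(\mathcal{D}_n)$ directly.

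First I would repackage the family $\{G_n\}$ as one easy dense property over $\{0,1\}^*$ by padding, so that Corollary~\ref{c:uncond_cons} applies. Define $Q \subseteq \{0,1\}^*$ by declaring, for $u \in \{0,1\}^m$, that $u \in Q$ iff $g(1^{n}, \mathsf{left}_{n^c}(u)) \neq \esymb$, where $n \eqdef \lfloor m^{1/c} \rfloor$ (so $n^c \leq m$). Each good seed $w \in G_n$ has exactly $2^{m-n^c}$ extensions of length $m$ in $Q$, so $|Q_m|/2^m \geq 1/n^k \geq 1/m^k$ for all large $m$; thus $Q$ is dense. And $Q$ is easy, since membership is decided by extracting $\mathsf{left}_{n^c}(u)$, running $g(1^n,\cdot)$, and testing for $\esymb$, all in $\poly(m)$ time. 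Applying Corollary~\ref{c:uncond_cons} to $Q$ with parameter $\varepsilon/c$ (for a target slack $\varepsilon > 0$) yields an infinite set $S \subseteq \mathbb{N}$ and a zero-error pseudodeterministic algorithm $A_Q$ running in time $O(2^{m^{\varepsilon/c}})$ whose canonical output on $1^m$ lies in $Q_m$ for every $m \in S$.

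The final algorithm $A$ is then: on input $1^n$, simulate $A_Q(1^{n^c})$; if it outputs $\perp$, output $\perp$; otherwise let $u$ be its output, compute $g(1^n,u)$, and output it if it is not $\esymb$ and $\perp$ otherwise. Since $\lfloor (n^c)^{1/c} \rfloor = n$ and $\mathsf{left}_{n^c}(u) = u$ when $|u| = n^c$, whenever $n^c \in S$ the canonical output $u$ of $A_Q(1^{n^c})$ lies in $G_n$, so $g(1^n,u) \in \mathsf{Support}(\mathcal{D}_n)$ is a well-defined canonical sample; and $\{n : n^c \in S\}$ is infinite because $S$ is. The running time is $O(2^{(n^c)^{\varepsilon/c}}) + \poly(n) = O(2^{n^{\varepsilon}})$, and zero-error pseudodeterminism is inherited from $A_Q$ together with the fact that $g$ is deterministic. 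As $\varepsilon > 0$ was arbitrary, this gives the claimed subexponential-time pseudodeterministic sampler for infinitely many $n$.

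I do not expect a serious obstacle: the difficulty of ``canonically sampling $\mathcal{D}_n$'' is entirely absorbed into ``canonically constructing a non-failing seed,'' a task to which Corollary~\ref{c:uncond_cons} applies essentially verbatim. The only points needing care are the bookkeeping in the padding step --- ensuring the padded property is dense at \emph{every} large length rather than only at perfect $c$-th powers (handled by the $2^{m-n^c}$-fold extension) and that $A_Q$ at length $n^c$ still runs in time subexponential in $n$ (handled by rescaling $\varepsilon$ to $\varepsilon/c$) --- plus the already-noted fact that membership in $\mathsf{Support}(\mathcal{D}_n)$ is never queried, only the efficiently testable event $g(1^n,\cdot) \neq \esymb$.
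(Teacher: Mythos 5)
Your reduction to Corollary~\ref{c:uncond_cons} via the padded ``good seeds'' property is the right idea and matches the paper's strategy, and your construction of $Q$ (accept $u \in \{0,1\}^m$ iff $g(1^n,\mathsf{left}_{n^c}(u)) \neq \esymb$ with $n = \lfloor m^{1/c}\rfloor$) together with the density and easiness verification is fine. But the last step has a genuine gap: you invoke $A_Q$ only at input lengths of the form $m = n^c$, and you claim that $\{n : n^c \in S\}$ is infinite ``because $S$ is.'' That inference is false. Corollary~\ref{c:uncond_cons} only guarantees that $S$ is \emph{some} infinite set of lengths, with no control whatsoever over which lengths it contains (this is exactly the non-constructivity the paper emphasizes about the world \textbf{SPARSE}); $S$ could well avoid every perfect $c$-th power, in which case your sampler never succeeds.

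The fix --- and this is what the paper does --- is to exploit the fact that your padded property $Q$ is useful at \emph{every} length $m$ with $\lfloor m^{1/c}\rfloor = n$, not just at $m = n^c$. On input $1^n$, run $A_Q(1^m)$ for every $m$ in the interval $[n^c,(n+1)^c)$ and apply $g(1^n,\cdot)$ to the first $n^c$ bits of the first non-$\perp$ output. These intervals partition $\mathbb{N}$, so the infinite set $S$ meets infinitely many of them, giving success for infinitely many $n$; the running time only picks up a $\poly(n)$ factor. One consequence you should then retract: the resulting sampler is no longer zero-error, since $S$ may contain more than one length in the interval and, with negligible probability, the ``first success'' could occur at a different $m$, producing a different sample. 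The theorem only requires a bounded-error pseudodeterministic sampler, so this is harmless, but your claim that zero-error is inherited does not survive the repair.
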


\noindent (We stress that the pseudodeterministic sampler from Theorem \ref{t:canonical_sampler} is not zero-error: it will output on every input $n$ where it succeeds a fixed sample in the support of $\mathcal{D}_n$ with very high probability, but it might output a different sample in $\mathcal{D}_n$ with negligible probability. On the input lengths where it fails, it will never output a sample.)

\begin{proof}
The argument is a reduction to Theorem \ref{t:uncond_hittingset}, or more precisely, to Corollary \ref{c:uncond_cons}. First, we partition $\mathbb{N}^+$ into infinitely many sets $S_i \eqdef \{i^c, \ldots, (i+1)^c - 1\}$, where $c$ is the positive integer provided by the definition of the polynomial time samplable ensemble $\mathfrak{D}$, and $i \in \mathbb{N}^+$. Let $g_\mathfrak{D}$ be the polynomial time computable function associated with $\mathfrak{D}$. For convenience, given a string $x \in \{0,1\}^m$ and an interval $S \subseteq [m]$, we use $x_S$ to denote the substring of $x$ with coordinates in $S$. Next we define a property $Q \subseteq \{0,1\}^*$ via the following polynomial time algorithm $A_Q$. On an input $x \in \{0,1\}^m$, let $i$ be the unique positive integer such that $m \in S_i$. $A_Q$ computes $\alpha_x = g(1^{i},x_{[1, i^c]})$, and accepts $x$ if and only if $\alpha_x \neq \esymb$. 

We claim that $Q$ is dense and efficiently computable. First, observe that $m \geq i^c$, since $m \in S_i$. Consequently, $i \leq m^{1/c}$, and since $g$ is polynomial time computable, so is $A_Q$. Similarly, using that $g$ has a non-$\esymb$ output with inverse polynomial probability and that $i$ and $m$ are polynomially related, it follows that $Q$ is a dense property. 

Using the claim from the previous paragraph and Corollary \ref{c:uncond_cons}, it follows that for every $\varepsilon > 0$, $Q$ admits a zero-error pseudodeterministic constructor $B_\varepsilon$ running in zero-error time $\leq 2^{m^\varepsilon}$ that succeeds on infinitely many values of $m$. We argue next that each $B_\varepsilon$ can be used to sample a canonical string from $\mathcal{D}_n$ for infinitely many values of $n$, where $n$ and $m$ are polynomially related.

Fix an arbitrary algorithm $B_\varepsilon$ as above. Since $B_\varepsilon$ is a zero-error algorithm, we can assume without loss of generality and increase of running time that on every input $1^m$ where it succeeds, its output is $\perp$ with probability at most $2^{-m}$. We define from $B_\varepsilon$ a randomized algorithm $A_\varepsilon$ that computes as follows. Given an input of the form $1^n$, we run $B_\varepsilon(1^m)$ on every $m \in S_n$ using independent random strings of length $\leq 2^{m^\varepsilon}$. Let $m'$ be the smallest element in $S_n$ such that $B_\varepsilon(1^{m'}) \neq\,\perp$ among such executions, if it exists, and let $z_{m'} \in \{0,1\}^{n^c}$ be the random variable denoting the first $n^c$ bits of the output of $B_\varepsilon$ for $m'$. We set the output of $A_\varepsilon$ to $g(1^n, z_{m'})$ in this case, and to $\perp$ if no such $m'$ exists. This completes the construction of $A_\varepsilon$ from $B_\varepsilon$ and $\mathfrak{D}$.

First, observe that $A_\varepsilon$ is a randomized algorithm that runs in time $\leq 2^{O(n^{c\varepsilon})}$, since each $m \in S_n$ is of size $O(n^{c})$, there are polynomially many simulations of $B_\varepsilon$ on input $1^n$, each running in time at most $2^{O(m^\varepsilon)}$, and $g$ is polynomial time computable. In order to argue the correctness of $A_\varepsilon$, recall that each $B_\varepsilon$ succeeds on infinitely many values of $m$, and that the sets $S_i$ form a partition of $\mathbb{N}^+$ into infinitely many classes of finite size. Therefore, for infinitely many values of $n$, there will be some $m \in S_n$ where $B_\varepsilon$ succeeds. Furthermore, using that whenever $B_\varepsilon$ succeeds it does so except with an exponentially small failure probability, we have that with very high probability $m'(n)$ will be the smallest such $m$ on each interval $S_n$ where $B_\varepsilon$ succeeds somewhere. Consequently, $A_\varepsilon$ either fails with probability $1$ on a bad interval $S_n$ (where $B_\varepsilon$ fails everywhere), or it outputs with high probability a fixed sample from $\mathcal{D}_n$. The last conclusion relies also on the definition of $A_\varepsilon$ and on the fact that $B_\varepsilon$ is a (infinitely often) zero-error pseudodeterministic constructor for $Q$. (We observe that $A_\varepsilon$ is not zero-error because $B_\varepsilon$ may succeed on more than one input length in $S_n$, and it can happen with negligible probability that $A_\varepsilon$ will use a different substring of length $n^c$ during its computation.)

Finally, by taking $\varepsilon > 0$ arbitrarily small and using that $c$ is constant, it follows that $\mathfrak{D}$ admits subexponential time randomized algorithms that output a canonical sample from $\mathcal{D}_n$ for infinitely many values of $n$.
\end{proof}

We describe here an immediate application of Theorem \ref{t:canonical_sampler} that might be of independent interest in complexity theory, in certain situations where one has to select a hard function among a family of efficiently representable functions.

Suppose there is a polynomial time randomized algorithm that on input $1^n$ outputs some Boolean circuit with non-negligible probability, and fails otherwise (the algorithm is allowed to output different circuits among different executions). Let $C_n$ be the random variable that denotes either ``$\esymb$'' or the polynomial size circuit output by this algorithm, and $f_{C_n}$ be the corresponding Boolean function whenever $C_n \neq \esymb$.

It follows from Theorem \ref{t:canonical_sampler} that there is a subexponential time randomized algorithm that selects a canonical circuit $D_n$ for infinitely many values of $n$, and that outputs a circuit computing, say, the constant $0$ function on the remaining values of $n$. In particular, if every circuit $E_n$ in the support of the original algorithm computes a Boolean function $f_{E_n}$ with a certain desired property, there is a \emph{fixed} Boolean function $h \colon \{0,1\}^* \to \{0,1\}$ in $\mathsf{BPSUBEXP}$ that has the same property infinitely often. (Observe though that if on every $n$ only most functions $f_{C_n}$ have some property,  but not all of them, we cannot guarantee that $h$ will share this property on infinitely many input lengths.)

\subsection{Unconditional Constructions that are Explicit}
\label{s:explicit}

In this section, we show how to make our constructions for easy dense properties explicit, in the sense that the algorithm implementing the construction is an explicit algorithm. Consider for instance the problem of pseudodeterministically generating primes. Since Theorem \ref{t:primes} establishes that some algorithm runs in sub-exponential time and outputs a canonical prime infinitely often, a natural approach would be to employ a universal search procedure that runs all algorithms with short descriptions until a prime is produced. Unfortunately, this idea does not seem to work when the algorithms involved are randomized and we would like to maintain pseudodeterminism. 

We will employ a different strategy which will actually give us a bit more. In addition to addressing the issue of explicitness, we also control the gaps between input lengths on which the construction succeeds. However, this comes at the cost of tailoring the construction to a specific easy dense property, and the proof becomes more intricate. For simplicity, we will focus on bounded-error pseudodeterministic constructions for $\mathsf{Primes}$. This corresponds to a simpler version of Theorem \ref{t:uncond_hittingset}, where we not consider the hitting set family $\mathcal{H}^{\mathsf{easy}}$ obtained using the easy witness method, and do a win-win analysis based on the hitting set family $\mathcal{H}^{\mathsf{hard}}$ rather than a win-win-win analysis. 
In this variant setting, we will consider \emph{bounded-error} pseudodeterministic polynomial time constructions for $\mathsf{Primes}$, rather than \emph{zero-error} ones.

Recall that a fundamental issue with obtaining an explicit algorithm using the proof of Theorem \ref{t:generic} is that we do not know which of the worlds {\bf PSEUDO} and {\bf SPARSE} we live in (Section \ref{s:discussion}). There is an explicit algorithm corresponding to the world {\bf SPARSE}, but we only obtain an explicit algorithm corresponding to the world {\bf PSEUDO} if the algorithm for {\bf SPARSE} fails on all large enough input lengths, and we do not know a priori if this is the case.

Imagine the following ideal situation: the win-win analysis we carry out works input length by input length. Namely, for each large enough input length $n$, a given candidate hitting set $H_n$ constructible in deterministic subexponential time works, or else a different candidate hitting set $H'_n$ constructible in pseudodeterministic subexponential time works. If we were in this ideal world, we would get an explicit construction for each large enough length as follows. We first test each element in $H_n$ for primality, in some fixed order. If at least one of the tests succeed, we output the first element satisfying a test. If not, we generate $H'_n$ and again test the elements in some fixed order for primality. Now we are guaranteed to succeed by the assumption that the win-win analysis succeeds on each large enough input length, and as $H'_n$ is generated pseudodeterministically, we will output a fixed prime with high probability.

However, we are quite far from being in this ideal situation. Indeed, our argument that a pseudodeterministic algorithm succeeds relies on the hitting set family failing for all large enough input lengths, rather than on a single input length $n$. This enables us to obtain the complexity collapse $\mathsf{PSPACE} = \mathsf{BPP}$ and apply Lemma \ref{l:cond_discset}.

If we are to have any hope of controlling the set of input lengths on which the construction succeeds using such an argument, we need to mitigate this issue. Note that if we are only interested in a pseudodeterministic construction in subexponential time, the collapse $\mathsf{PSPACE} = \mathsf{BPP}$ is overkill; it is enough to have $\mathsf{PSPACE} \subseteq \mathsf{BPSUBEXP}$.

Consider the $\mathsf{PSPACE}$-complete language $L^{\star}$ in the statement of Theorem \ref{t:uniform_hard_rand}.
The first element of our new argument is a refined version of Theorem \ref{t:uniform_hard_rand}, which for any $\delta > 0$, yields a probabilistic algorithm solving $L^{\star}$ correctly on inputs of length $n$ in time $2^{n^{\delta}}$ {\it assuming} that the hitting set family $\{H_{\ell}\}$ fails at all input lengths $\ell \in [n^{1/D}, n^D]$, where $D$ is some constant depending on $\delta$. Thus we now only need the failure of the hitting set family on some polynomially bounded range of input lengths to obtain a complexity collapse consequence, albeit a milder one than before.

We also observe that this refined version can be used in an alternative argument for generating primes pseudodeterministically, by reducing the search version of $\mathsf{Primes}$ on input length $n$ to the $\mathsf{PSPACE}$-complete language $L^{\star}$ on some polynomially larger input length $n^k$. Hence, if we knew that the probabilistic algorithm based on the failure of the hitting set family for a polynomially bounded range of input lengths solved $L^{\star}$ correctly at some fixed input length $n^k$, we would be able to construct primes pseudodeterministically at length $n$ in subexponential time.

However, we have no easy way of knowing this. The straightforward method would be to explicitly test the success of the hitting set family on the appropriate range of input lengths, but this could take more than exponential time. 

Imagine instead the pseudodeterministic algorithm we wish to define being given a single {\it advice bit} per input length. If this advice bit is $0$ at length $n$, it indicates to the algorithm $A(1^n)$ that the hitting set family does indeed fail on all input lengths in $[n^{k/D}, n^{kD}]$; if the advice bit is $1$, it indicates that the hitting set family succeeds somewhere on that range. The point is that the requisite information is just a single bit depending on the input length $n$. The advice bit can be thought of as information for the algorithm about whether the world looks {\it locally} like {\bf PSEUDO} or {\bf SPARSE}, even if we do not know what the global picture is.

If the algorithm somehow had access to this advice bit, it could act as follows: if the advice bit were $0$, it would know that the probabilistic algorithm given by the refined version of Theorem \ref{t:uniform_hard_rand} solves $L^{\star}$ correctly at input length $n^k$, and by using the reduction from the search version of $\mathsf{Primes}$ to $L^{\star}$ and simulating the probabilistic algorithm when needed, it could pseudodeterministically output a prime in subexponential time. If the advice bit were $1$, ``all bets are off'', and the algorithm simply halts without an output.

For those readers familiar with the work on hierarchies for probabilistic polynomial time with advice \citep{Barak02, FortnowSanthanam04}, the use of a advice bit here might be reminiscent of that work. The similarity is that the advice bit is a way around constructibility issues, but the details are different.

An advantage in our setting is that while the advice bit might be conceptually useful, it is {\it not really needed}. The reason is that while the algorithm might not have the time to check if the hitting set family fails on all input lengths in a polynomially large range around $n$, it certainly can check if $H_n$ is a hitting set for $\mathsf{Primes}_n$ in deterministic subexponential time. If it is, the algorithm outputs the first prime in $H_n$, and we are done. If not, then the algorithm behaves {\it as if} the advice bit were $0$. The algorithm with this behaviour will not always be correct, but it will always succeed on some input length in any polynomially large enough interval of input lengths. Moreover, the algorithm is \emph{explicit}. We are exploiting here the fact that the world {\bf SPARSE} is a deterministic world, and that we can check \emph{deterministically} and not too inefficiently whether a given hitting set works at an input length.

We now give details, but first some extra notation. 

\vspace{0.2cm}
\noindent \textbf{Polynomial Gaps.} We call a set $S \subseteq \mathbb{N}$ {\it polynomially gapped} if $S$ is non-empty and there is a constant $k > 1$ such that for any $n \in S$, there is $m \in S$, $n < m \leq n^k$. 
\vspace{0.2cm}

We require the following refinement of Theorem \ref{t:uniform_hard_rand}, which holds for the same language $L^\star$ discussed before.

\begin{theorem}
 \label{t:uniform_hard_quant}
For any integers $b, c \geq 1$, there exists an integer $d \geq 1$ and a function $G^\star \colon \{0,1\}^* \to \{0,1\}^*$ with restrictions 
$$G_\ell^\star \colon \{0,1\}^\ell \to \{0,1\}^{m(\ell)},\quad \text{where}\;\;\;m(\ell) = \ell^b,$$ such that $G^\star$ can be computed in time $O(m(\ell)^d) = \mathsf{poly}(\ell)$ when given oracle access to $L^\star_{\leq \ell}$, and the following holds.  For every $\delta > 0$, there is a $\delta' > 0$ and a probabilistic algorithm $B_{\delta}$ such that for any large enough $n \in \mathbb{N}$ for which the output of $G^\star_\ell$ can be $(1/m(\ell)^c)$-distinguished from random for every $\ell \in [n^{\delta'},n^3]$ by an algorithm $A$ running in time $O(m(\ell)^c)$, $B_{\delta}$ when given access to $1^n$, $x \in \{0,1\}^{\leq n}$ and to the description of $A$, runs in time $O(2^{n^{\delta}})$, and computes $L^{\star}(x)$ with error at most $1/n^2$ over its internal randomness.
\end{theorem}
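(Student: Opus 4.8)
\noindent The plan is to reopen the proof of Theorem~\ref{t:uniform_hard_rand} and extract from it a \emph{local}, quantitative form. Recall that the language $L^\star$ of \citep{DBLP:journals/cc/TrevisanV07} is $\mathsf{PSPACE}$-complete, downward self-reducible (computing $L^\star$ on an input of length $j$ reduces in polynomial time to $\mathsf{poly}(j)$ queries to $L^\star$ at length $j-1$), and randomly self-correctable (from any oracle agreeing with $L^\star_j$ on a $\tfrac12+1/\mathsf{poly}(j)$ fraction of length-$j$ inputs one computes $L^\star_j$ exactly with high probability in $\mathsf{poly}(j)$ time). The proof of Theorem~\ref{t:uniform_hard_rand} builds a randomized algorithm for $L^\star$ \emph{level by level}: from an algorithm correct on all inputs of length $<j$ one obtains an algorithm also correct at length $j$ by taking the generator seed length $\ell=\ell(j)$ at which $G^\star_\ell$ certifies hardness of $L^\star$ at length $j$ --- here $\ell(\cdot)$ is polynomially related to the certified length, say $\ell(j)=\Theta(j^{c_0})$ for a fixed constant $c_0$, and may be taken surjective onto the scales we use --- feeding the distinguisher into the uniform Nisan--Wigderson reconstruction, which (using only the distinguisher together with the already-built algorithm for $L^\star$ at lengths $<j$ to answer the downward-self-reduction queries that arise internally) produces in $\mathsf{poly}(\ell(j))$ time a circuit $C_j$ agreeing with $L^\star_j$ on a $\tfrac12+1/\mathsf{poly}(j)$ fraction of inputs, and then self-correcting $C_j$ to compute $L^\star_j$ exactly. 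The crucial point is that this step at length $j$ consults the distinguisher only at the single scale $\ell(j)$ and uses $L^\star$ only at lengths strictly below $j$, so the induction is well-founded.

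Given this, $B_\delta$ is defined as follows. Fix $a_1$ with $L^\star\in\mathsf{DSPACE}(N^{a_1})$, let $c_0$ be as above, and set $\delta'\eqdef\min\{1,\ \delta c_0/(2a_1)\}$; this is the claimed $\delta'$. On input $(1^n,x,\langle A\rangle)$, algorithm $B_\delta$ first picks the base length $n_0\eqdef\lceil n^{\delta'/c_0}\rceil=n^{\Theta(\delta')}$ and computes $L^\star$ on \emph{all} inputs of length $\le n_0$ by brute force, in time $2^{O(n_0^{a_1})}=2^{O(n^{\delta'a_1/c_0})}=2^{O(n^{\delta/2})}\le 2^{n^\delta}$ for $n$ large. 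It then runs the level-by-level procedure above for $j=n_0+1,\dots,|x|$ in increasing order; for each such $j$ the seed length $\ell(j)=\Theta(j^{c_0})$ lies in $[n^{\delta'},n^3]$ --- the lower endpoint because $j\ge n_0$, and the upper endpoint because $j\le n$ and $n^3$ is chosen to dominate $\Theta(n^{c_0})$ (i.e.\ $c_0\le 3$ for this construction) --- so the hypothesis supplies an algorithm $A$ that $(1/m(\ell(j))^c)$-distinguishes $G^\star_{\ell(j)}$, making the reconstruction applicable; each level then costs $\mathsf{poly}(n)$ time, the internal self-reduction queries being answered by the algorithm already built for lengths $<j$. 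Finally $B_\delta$ self-corrects $C_{|x|}$ and outputs $L^\star(x)$. The running time is $2^{n^\delta}$ for the base case plus $|x|\cdot\mathsf{poly}(n)\le\mathsf{poly}(n)$ for the at most $n$ levels and the final query, i.e.\ $O(2^{n^\delta})$; $B_\delta$ is a single fixed algorithm that merely \emph{invokes} $A$ as a subroutine, always at seed lengths $\ell\le n^3$, where $A$ runs in time $O(m(\ell)^c)=\mathsf{poly}(n)$.

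For the error bound, amplify each reconstruction to failure probability $\le n^{-10}$ --- run it $O(\log n)$ times and keep a candidate that passes a test comparing it, on a $\mathsf{poly}(n)$-size random sample, against values of $L^\star_j$ obtained through the level-$(j{-}1)$ algorithm --- and likewise amplify each self-correction to failure probability $\le n^{-10}$; a union bound over the at most $n$ levels and the $\mathsf{poly}(n)$ self-corrections and self-reduction queries made while answering the single input $x$ then yields total error $\le n^{-2}$ for $n$ large. The main obstacle I anticipate is that this argument cannot use Theorem~\ref{t:uniform_hard_rand} as a black box: one has to open up the Trevisan--Vadhan construction to (a) identify the scale-to-length function $\ell(j)=\Theta(j^{c_0})$ and confirm that target length $n$ is reached using seed lengths at most $n^3$, and (b) verify that the Nisan--Wigderson reconstruction at length $j$ genuinely consults $L^\star$ only at lengths $<j$, so that the level-by-level induction never becomes circular. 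The remaining points --- the choice $\delta'=\Theta(\delta)$ dictated by the space complexity of $L^\star$, and the error amplification --- are routine.
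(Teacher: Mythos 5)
Your proposal is correct and follows essentially the same route as the paper's (sketched) proof: start the Impagliazzo--Wigderson/Trevisan--Vadhan level-by-level reconstruction not at constant length but at a base length $n^{\Theta(\delta')}$ handled by brute force using $L^\star\in\mathsf{PSPACE}$, then use the distinguisher only at seed lengths in $[n^{\delta'},n^3]$ to climb to length $n$, with routine amplification for the error bound. Your write-up is in fact more explicit than the paper's sketch about the choice of $\delta'$, the non-circularity of the induction, and the error accounting.
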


\begin{proof}
We give only a sketch, as the proof refines the proof of Theorem \ref{t:uniform_hard_rand}. The proof of Theorem \ref{t:uniform_hard_rand} proceeds by showing that a distinguisher for the output of $G^\star_\ell$ can be used to learn circuits for $L^{\star}$ on input length $n(\ell)$ polynomially related to $\ell$. By using the random self-reducibility and downward self-reducibility properties of $L^{\star}$ and hardness amplification, a distinguisher implies a polynomial-time probabilistic oracle algorithm that outputs circuits for $L^{\star}_{n(\ell)}$, where the oracle algorithm only makes $L^{*}$-queries of length $< n(\ell)$. By using the distinguishing property for each length $r \in [1, \ell]$, circuits for $L^{\star}$ can be learned iteratively in probabilistic polynomial time for input lengths from $1$ to $n(\ell)$, and thus $L^{\star}$ can be decided in probabilistic polynomial time on any input of length $n(\ell)$.

Suppose that we wish to compute $L^{\star}$ on inputs of length $|x| \leq n$. The main idea in our refinement here is to begin the iteration at input length $n^{\delta'}$, where $\delta'$ is chosen depending on $\delta$, so that a circuit of size $2^{n^{\delta'}}$ for $L^{\star}$ at length $n^{\delta'}$ can be computed in time $2^{n^{O(\delta)}}$ using brute-force search and the fact that $L^{\star}$ is in polynomial space. Now we use the distinguishing property for each length $r \in [n^{\delta'}, n^3]$ to obtain learners for corresponding input lengths for $L$, and thus iteratively build circuits for $L^{\star}$ for all input lengths up to $n$. Then it is enough to run the circuit for length $|x|$ to evaluate $L^{\star}$ on any input of that length. The total time taken is $O(2^{n^{\delta}})$, if we choose $\delta'$ sufficiently small as a function of $\delta$.
\end{proof}

\begin{theorem}[Restatement of Theorem \ref{t:explicit}]
\label{t:uncond_explicit}
For every $\varepsilon > 0$, there is a polynomially gapped set $S$ and a \emph{(}bounded-error\emph{)} pseudodeterministic construction for $\mathsf{Primes}$ on $S$, running in time $O(2^{n^{\varepsilon}})$.
\end{theorem}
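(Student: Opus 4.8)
The plan is to run a win--win analysis using \emph{only} the single candidate family $\mathcal{H}^{\mathsf{hard}}$ (we may drop $\mathcal{H}^{\mathsf{easy}}$ since we only aim for a bounded-error construction), now leaning on two extra ingredients: the refined tradeoff of Theorem~\ref{t:uniform_hard_quant}, and the fact that a candidate hitting set at length $n$ can be \emph{tested deterministically} for intersecting $\mathsf{Primes}_n$ in subexponential time. Fix $\varepsilon>0$. First I would rebuild $\mathcal{H}^{\mathsf{hard}}=\{H^{\mathsf{hard}}_n\}$ exactly as in the proof of Theorem~\ref{t:uncond_hittingset}, from the generator $G^\star$ of Theorem~\ref{t:uniform_hard_quant}, choosing a seed exponent $\delta$ small enough (with $b\eqdef 1/\delta$ an integer) that $H^{\mathsf{hard}}_n$ has $2^{O(n^{\delta})}$ elements and is printable in time $2^{O(n^{\varepsilon/5})}$, and choosing the distinguisher parameter $c$ large enough that $\mathsf{Primes}$ is $1/n^c$-dense and is decided in time $n^c$ by a fixed algorithm $V$. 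I would also fix once and for all a polynomial-time reduction $\rho$ witnessing $\mathsf{PSPACE}$-completeness of $L^\star$ so that the $i$-th bit of the lexicographically first $n$-bit prime $p_n$ equals $L^\star(\rho(n,i))$ with $|\rho(n,i)|\le n^{k}$ for a fixed constant $k$ (bits of $p_n$ are computable in $\mathsf{PSPACE}$, e.g.\ testing $n$-bit numbers in order). Finally I would apply Theorem~\ref{t:uniform_hard_quant} with a runtime parameter $\delta_0$ small enough that $k\delta_0<\varepsilon$ and the associated $\delta'$ satisfies $kb\delta'\le 1$, obtaining the probabilistic algorithm $B_{\delta_0}$.

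Next I would define $A(1^n)$ with two branches. Branch one computes $H^{\mathsf{hard}}_n$, tests each element for primality with $V$, and if some element is prime outputs the lexicographically first such one; this branch is deterministic. Branch two runs only when branch one finds no prime: for $i=1,\dots,n$ it computes $\rho(n,i)$ and runs $B_{\delta_0}$ on it --- feeding it $1^{n^k}$ and the description of $V$ regarded as a $1/m(\ell)^c$-distinguisher against $G^\star_\ell$ on output length $m(\ell)=\ell^b$ --- for at most $2^{(n^k)^{\delta_0}}$ steps; it assembles the candidate string $\tilde p$ from the returned bits and outputs $\tilde p$ if it is prime and a fixed string otherwise. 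A routine accounting of the parameters shows $A$ halts within $O(2^{n^{\varepsilon}})$ on every input. Let $S$ be the set of lengths on which $A(1^n)$ produces a canonical output in $\mathsf{Primes}_n$ with probability $\ge 2/3$; branch one already places every length with $H^{\mathsf{hard}}_n\cap\mathsf{Primes}_n\neq\emptyset$ into $S$.

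The crux is to show that $S$ meets every window $[L,L^C]$ for all large $L$, where $C>3/\delta'$ is a fixed constant; this immediately makes $S$ polynomially gapped. Suppose not: some large $L$ has $A$ failing throughout $[L,L^C]$. Set $N\eqdef\lceil L^{1/(kb\delta')}\rceil$; by the choice of $C$ and the constraints $kb\delta'\le 1\le 3kb$, the range $[N^{kb\delta'},N^{3kb}]$ lies inside $[L,L^C]$ and contains $N$. Since failure of $A$ at a length $m$ forces $H^{\mathsf{hard}}_m\cap\mathsf{Primes}_m=\emptyset$ (branch one would otherwise succeed), $H^{\mathsf{hard}}_m$ misses $\mathsf{Primes}_m$ for every $m$ in that range; translating through the seed function exactly as in Lemma~\ref{l:A2_fails} (using that $b=1/\delta$ is an integer, so $H^{\mathsf{hard}}_{\ell^b}=G^\star_\ell(\{0,1\}^\ell)$ as subsets of $\{0,1\}^{\ell^b}$), we get $G^\star_\ell(\{0,1\}^\ell)\cap\mathsf{Primes}_{\ell^b}=\emptyset$ for every $\ell\in[N^{k\delta'},N^{3k}]$, so $V$ --- which decides $\mathsf{Primes}$ on length $\ell^b=m(\ell)$ in time $m(\ell)^c$ --- is a $1/m(\ell)^c$-distinguisher of $G^\star_\ell$ for all such $\ell$. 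Theorem~\ref{t:uniform_hard_quant}, applied with its ``$n$'' equal to $N^k$, then yields that $B_{\delta_0}$ computes $L^\star$ correctly on every input of length $\le N^k$ in time $2^{(N^k)^{\delta_0}}$ with per-call error $\le 1/N^{2k}$. A union bound over the $N$ calls shows branch two of $A(1^N)$ recovers all bits of $p_N$, hence outputs $p_N$, with probability $\ge 1-1/N$. Since $A$ fails at $N$, branch one missed, so branch two did run --- contradiction. Thus $S$ is polynomially gapped, and on each $n\in S$, $A$ is a bounded-error pseudodeterministic construction for $\mathsf{Primes}$ running in time $O(2^{n^{\varepsilon}})$.

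The step I expect to be the main obstacle is the invocation of Theorem~\ref{t:uniform_hard_quant}: extracting a genuinely useful (subexponential-time, bounded-error) algorithm for the $\mathsf{PSPACE}$-complete $L^\star$ out of the hitting set family failing on merely a \emph{polynomially bounded} band of input lengths, rather than on cofinitely many --- this is exactly what the iteration-from-$n^{\delta'}$ refinement of Theorem~\ref{t:uniform_hard_rand} is built to supply. The other conceptual point, that $A$ can decide \emph{locally} and without any advice whether to trust branch one or branch two, rests entirely on $\mathcal{H}^{\mathsf{hard}}$ being deterministically generatable and deterministically checkable at a single length, i.e.\ on the world \textbf{SPARSE} being recognizable one input length at a time.
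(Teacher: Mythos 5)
Your proposal is correct and follows essentially the same route as the paper's proof: the same single-family win--win analysis on $\mathcal{H}^{\mathsf{hard}}$, the same reduction of the bits of the lexicographically first $n$-bit prime to the $\mathsf{PSPACE}$-complete $L^{\star}$, the same two-phase algorithm whose deterministic first phase certifies locally whether the hitting set works, and the same use of Theorem~\ref{t:uniform_hard_quant} to extract a subexponential-time algorithm for $L^{\star}$ from failure on a polynomially bounded band of lengths. The only cosmetic difference is that you argue by contradiction over windows $[L,L^{C}]$ where the paper partitions $\mathbb{N}$ into explicit intervals and exhibits a success length in each; these are equivalent.
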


\begin{proof}
Let $\varepsilon > 0$ be any constant. We show that there is a polynomially gapped set $S$ and a pseudodeterministic algorithm $A_{\varepsilon}$ on $S$ such that for each $n \in S$, the canonical output of $A_{\varepsilon}(1^n)$ is an $n$-bit prime, and moreover $A_{\varepsilon}$ always halts in time $O(2^{n^{\varepsilon}})$.

Define the language $\mathsf{LexFirstPrime}$ to consist of all tuples $<1^n,i>$ such that the $i$'th bit of the lexicographical first $n$-bit prime is $1$, where $n \geq 2$. By Bertrand's Postulate, $\mathsf{LexFirstPrime}$ is well-defined. It is easy to see that this language is decidable in polynomial space, as follows. Enumerate the $n$-bit integers in order and check each one for primality until an integer $p_n$ is found that passes the primality test. Accept on input $<1^n,i>$ iff the $i$'th bit of $p_n$ is $1$. Since $\mathsf{LexFirstPrime}$ is in $\mathsf{PSPACE}$ and $L^{\star}$ is $\mathsf{PSPACE}$-complete, there is a constant $k \geq 1$ such that $\mathsf{LexFirstPrime}$ reduces to $L^{\star}$ in deterministic time $n^k$.

Let $C > 0$ be an integer to be determined later. We partition $\mathbb{N}$ into intervals $I_i$, where $I_i = (2^{C^{i-1}}, 2^{C^i}]$ for $i \geq 1$, and $I_0 = [1,2]$. We define the algorithm $A_{\varepsilon}$ and show that it satisfies the required properties for at least one input length in each $I_i$, when $i$ is large enough. The algorithm operates in two phases, the first of which is deterministic and the second probabilistic.

Let $a > 0$ be a constant such that $L^{\star}$ is computable in deterministic time $2^{\ell^a}$ on inputs of length $\ell$, and let $c > 1$ be a constant such that the Primality algorithm of \cite{Agrawal02primesis} runs in deterministic time $n^c$. $A_{\varepsilon}$ operates as follows on input $1^n$. It first invokes the generator $G^{\star}$ from Theorem \ref{t:uniform_hard_quant} using parameters $b = \lceil 2a/\varepsilon \rceil$ and $c$ as chosen above, on input length $\ell = \lceil n^{\varepsilon/2a}\rceil$. It computes $H_n = \{ \mathsf{left}_n(u) \mid u \in G^{\star}(\{0,1\}^{\ell})\}$ in time $O(2^{n^{\varepsilon}})$, exploiting the efficiency guarantee for $G^{\star}$ from Theorem \ref{t:uniform_hard_quant} and the fact that $L^{\star}$ is computable in deterministic time $2^{\ell^a}$. It checks each element of $H_n$ in lexicographic order for primality, outputting the first $n$-bit prime in $H_n$ that it finds, if such a prime exists. Given $H_n$, the total time required for this testing is $O(2^{\ell} \cdot \mathsf{poly}(n))$, which is $O(2^{n^{\varepsilon}})$.

If no element of $H_n$ is prime, $A_{\varepsilon}$ commences its probabilistic phase. It sets $\delta = \varepsilon/k$ in the second part of Theorem \ref{t:uniform_hard_quant}; let $\delta' < \delta$ be the corresponding constant given by the theorem. $A_{\varepsilon}$ attempts to compute an $n$-bit prime in probabilistic time $O(2^{n^{\varepsilon}})$ as follows. It tries to determine for each $i$ satisfying $1 \leq i \leq n$, whether $<1^n,i>\;\in \mathsf{LexFirstPrime}$ by using the reduction from $\mathsf{LexFirstPrime}$ to $L^{\star}$, which produces instances of length $\leq n^k$. It answers each query $x$ to $L^{\star}$ by assuming that $\mathsf{Primes}$ $1/\ell^{bc}$-distinguishes $G^{\star}_{\ell}$ from random for each $\ell \in [n^{k \delta'}, n^{3k}]$, where $b$ is as defined in the previous para, and running the corresponding algorithm $B_{\delta}$ on $1^{n^k}$, $x$, and the code of the AKS primality algorithm. By Theorem \ref{t:uniform_hard_quant}, if $\mathsf{Primes}$ does indeed distinguish the output of the generator from random for the given range of input lengths, the algorithm $B_{\delta}$ decides $L^{\star}(x)$ correctly with error at most $1/n^{2k}$, since $x \in \{0,1\}^{\leq n^{k}}$. Hence, in this case, by a simple union bound, all $n$ queries of $A_{\varepsilon}$ to $B_{\delta}$ are answered correctly with probability at least $1-1/n$, using the fact that $k \geq 1$, and hence $A_{\varepsilon}$ correctly determines all the bits of the lexicographically first $n$-bit prime $p_n$ with error at most $1/n$. Thus, in this case, a fixed prime $p_n$ is output with probability at least $1-1/n$, which fulfils the bounded-error pseudodeterministic guarantee for $A_{\varepsilon}$. Using that $\delta = \varepsilon / k$ and the bound on the running time of $B_{\delta}$ given by Theorem \ref{t:uniform_hard_quant}, it follows that $A_{\varepsilon}$ halts in time $O(2^{n^{\varepsilon}})$.

We argue that for each interval $I_i$ of input lengths for $i$ large enough, there is $n_i \in I_i$ such that either one of the elements of $H_{n_i}$ is prime, or $A_{\varepsilon}$ outputs a fixed prime with high probability using the reduction to $L^{\star}$ as in the previous para. Note that in the first case, the deterministic phase of the algorithm has an output and the algorithm does not enter its probabilistic phase, while in the second case, the probabilistic phase has a fixed output with high probability. In either case, $A_{\varepsilon}$ operates pseudodeterministically on input $1^{n_i}$ and outputs a prime.

We set $C$ to be $\lceil 3k/\delta' \rceil$, where $k$ and $\delta'$ are as above. If there is no $n_i$ in $I_i$ such that at least one of the elements of $H_{n_i}$ is prime, and if $i$ is large enough, then it is indeed the case that $\mathsf{Primes}$ $1/\ell^{bc}$-distinguishes $G^{\star}_{\ell}$ from random for each $\ell \in [n_i^{k \delta'}, n_i^{3k}]$, where $n_i = 2^{C^i/3k}$, just using the fact that $\mathsf{Primes}$ is $1/n^c$-dense for large enough $n$. Hence, in this case, $A_{\varepsilon}$ does output the lexicographically first prime on $n_i$ bits with probability $1-o(1)$, which concludes the argument.
\end{proof}

To compare Theorem \ref{t:uncond_explicit} to Corollary \ref{c:uncond_primes}, the advantages of the former are that the algorithm is explicit, and that the input lengths for which it is guaranteed to produce primes are not too far apart. However, a somewhat subtle advantage of Corollary \ref{c:uncond_cons} is that the construction is guaranteed never to output two different primes on any input length -- it either outputs a fixed prime with high probability, or does not output a prime at all. With the construction of Theorem \ref{t:uncond_explicit}, this might not be the case. The algorithm has the bounded-error pseudodeterministic guarantee on at least one input length in each large enough interval, but there is no guarantee on the behaviour of the algorithm for other input lengths in the interval. This situation can be improved using Proposition \ref{p:purifying}.

Finally, we remark that one of the difficulties involved in proving a zero-error version of Theorem \ref{t:uncond_explicit} is that it seems one would need to convert a \emph{sub-exponential time} bounded-error randomized computation into a zero-error computation, as opposed to the relevant simulation behind the proofs of Theorem \ref{t:uncond_hittingset} and Corollary \ref{c:uncond_primes}, which is concerned with polynomial-time computations. 

\section{Pseudodeterminism and Derandomization}\label{s:pseudo_derand}

In order to state the results of this section we will need a few additional definitions. Here we work with ensembles $\mathfrak{D} = \{\mathcal{D}_n\}$ of distributions, where  we assume that each $\mathcal{D}_n$ is supported over $\{0,1\}^n$. Moreover, we say that such a sequence of distributions is polynomial-time samplable if there is a randomized polynomial-time algorithm $B$ (the sampler) such that for each $n \in \mathbb{N}$ and each $y \in \{0,1\}^n$, $\Pr[B(1^n) = y] = \mathcal{D}_n(y)$, where $\mathcal{D}_n(y) \eqdef \Pr[y \in \mathcal{D}_n]$. As usual, we use $U_n$ to refer to the uniform distribution on $n$-bit strings, which is clearly polynomial-time samplable. In some cases we view elements of $\{0,1\}^n$ as descriptions of Boolean circuits of size at most $n$, under some natural encoding. We may informally refer to $\mathfrak{D}$ as a distribution instead of as an ensemble of distributions.

We define various notions of derandomization on average over polynomial-time samplable distributions $\mathfrak{D}$. Our setting closely mirrors that of Impagliazzo-Wigderson \citep{DBLP:journals/jcss/ImpagliazzoW01}, and our proofs are inspired by their ideas.\\

\noindent \textbf{Average-Case Definitions.} Let $\ell\colon \mathbb{N} \rightarrow \mathbb{N}$ be a function. We say that a sequence $\mathfrak{G} = \{G_n\}$, where each $G_n\colon \{0,1\}^{\ell(n)} \rightarrow \{0,1\}^n$, is a PRG (resp.~i.o.PRG) on average over a distribution $\mathfrak{D}$ of Boolean circuits if for each $c > 0$ and for large enough $n$ (resp.~for infinitely many $n$), $G_n$ $(1/10)$-fools $C_n$ with probability at least $1-1/n^c$ over $C_n \sim \mathcal{D}_n$. We call $\ell(n)$ the seed length of the PRG.

Let $T\colon\mathbb{N} \rightarrow \mathbb{N}$ be a time bound. We say that the Circuit Acceptance Probability Problem ($\mathsf{CAPP}$) \emph{is solvable in time} $T$ (resp.~solvable infinitely often in time $T$) on average over $\mathfrak{D}$ if for all $c > 0$ there is a \emph{deterministic} algorithm $A$ running in time $T(n)$ such that for all $n \in \mathbb{N}$ (resp.~for infinitely many $n$), $ \Pr_{C_n \sim \mathcal{D}_n} [|A(C_n) - \Pr_{x \sim U_n}[C_n(x)=1]| < 1/10] \geq 1-1/n^c$. 

Let $L \subseteq \{0,1\}^{*}$ a language. We say that $L$ \emph{is solvable in time} $T$ (resp.~solvable infinitely often in time $T$) on average over $\mathfrak{D}$ if for all $c > 0$ there is a \emph{deterministic} algorithm running in time $T(n)$ which for all $n$ (resp.~infinitely many $n$) solves $L_n$ with success probability at least $1-1/n^c$ over $D_n$. Given a function $a\colon\mathbb{N} \rightarrow \mathbb{N}$, we also use the notion of being solvable in time $T$ with $a(n)$ bits of advice on average over $\mathfrak{D}$ -- here the algorithm solving $L$ gets access to an auxiliary advice string of length $a(n)$ which depends only on the input length.

We say that $\mathsf{CAPP}$ \emph{is solvable pseudodeterministically in time} $T$ (resp.~solvable infinitely often pseudodeterministically in time $T$) on average over $\mathfrak{D}$ if for all $c > 0$ there is a \emph{randomized} algorithm $A$ running in time $T(n)$ such that for all $n$ (resp.~infinitely many $n$), with probability at least $1-1/n^c$ over $C_n \sim D_n$, $A(C_n)$ outputs the same number $f(C_n)$ with probability $1-o(1)$ over its internal randomness, and $f(C_n)$ is a $(1/10)$-additive approximation to the acceptance probability of $C_n$.\\

The lemma below is implicit in \citep{DBLP:journals/cc/TrevisanV07}, which itself uses a variation of the argument in \citep{DBLP:journals/jcss/ImpagliazzoW01}. We omit the proof because it is almost identical to the proof of Theorem \ref{t:uniform_hard_rand}.

\begin{lemma} \emph{\citep{DBLP:journals/jcss/ImpagliazzoW01, DBLP:journals/cc/TrevisanV07}}
\label{l:PRGUnif}
For each $\varepsilon > 0$ there is a sequence $\mathfrak{G} = \{G_n\}$, where $G_n\colon \{0,1\}^{n^{\varepsilon}} \rightarrow \{0,1\}^n$ and $\mathfrak{G}$ is computable in time $2^{O(n^{\varepsilon})}$, such that if there is a polynomial-time samplable distribution $\mathfrak{D} = \{\mathcal{D}_n\}$ of Boolean circuits and a constant $c > 0$ for which for all large enough $n$, with probability $\geq 1/n^c$ over $C_n \sim \mathcal{D}_n$, $C_n$ is a $(1/10)$-distinguisher for $G_n$, then $\mathsf{PSPACE} = \mathsf{BPP}$.
\end{lemma}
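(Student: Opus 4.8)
The plan is to run the proof of Theorem~\ref{t:uniform_hard_rand} essentially verbatim, modifying only the step where a distinguisher is consumed, so as to handle the fact that the hypothesis supplies a polynomial-time samplable \emph{distribution} over circuits rather than a single uniform distinguishing algorithm. First I would fix the generator. Let $G^\star$ be the Trevisan--Vadhan generator of Theorem~\ref{t:uniform_hard_rand}, built from the downward-self-reducible and self-correctable $\mathsf{PSPACE}$-complete language $L^\star$ and instantiated with stretch parameter $b \eqdef \lceil 1/\varepsilon \rceil$, and set $\ell(n) \eqdef \lceil n^{\varepsilon} \rceil$ and $G_n(w) \eqdef \mathsf{left}_{n}(G^\star_{\ell(n)}(w))$; since $\ell(n)^b \geq n$ this gives $G_n \colon \{0,1\}^{\ell(n)} \to \{0,1\}^n$. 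As $L^\star \in \mathsf{PSPACE}$, the oracle queries to $L^\star_{\leq \ell(n)}$ needed to evaluate $G^\star_{\ell(n)}$ can be resolved by brute force in time $2^{O(\ell(n))} = 2^{O(n^{\varepsilon})}$, so $\mathfrak{G} = \{G_n\}$ is computable in time $2^{O(n^{\varepsilon})}$, as demanded.

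Now suppose the hypothesis holds: a polynomial-time sampler $B$ for $\mathfrak{D} = \{\mathcal{D}_n\}$ and a constant $c > 0$ are given such that, for all large enough $n$, a circuit $C_n \sim \mathcal{D}_n$ is a $(1/10)$-distinguisher for $G_n$ with probability $\geq 1/n^c$. Since $L^\star$ is $\mathsf{PSPACE}$-complete and $\mathsf{BPP}$ is closed under polynomial-time reductions, it is enough to place $L^\star$ in $\mathsf{BPP}$. Recall that the proof of Theorem~\ref{t:uniform_hard_rand} (sketched in the proof of Theorem~\ref{t:uniform_hard_quant}) builds circuits for $L^\star$ iteratively over input lengths $r = 1, 2, \dots$: at stage $r$, a $(1/10)$-distinguisher for $G^\star_r$ together with a trusted circuit for $L^\star$ on the shorter lengths is converted, by a probabilistic polynomial-time procedure using hardness amplification and the downward self-reducibility and self-correctability of $L^\star$, into a circuit for $L^\star$ on input length $n(r)$, where $n(\cdot)$ is a fixed polynomial. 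The only change I would make is the following. At stage $r$, let $n_r$ be an integer with $\ell(n_r) = r$ (so $n_r = \Theta(r^{1/\varepsilon})$ is large and, by ignoring the output bits of $G^\star_r$ beyond the first $n_r$, any $(1/10)$-distinguisher for $G_{n_r}$ extends to one for $G^\star_r$); draw $t \eqdef n_r^{c+2}$ independent samples $D^{(1)}, \dots, D^{(t)}$ from $\mathcal{D}_{n_r}$ via $B(1^{n_r})$. As each sample is a genuine $(1/10)$-distinguisher for $G_{n_r}$ with probability $\geq n_r^{-c}$, with probability $1 - 2^{-\Omega(n_r^2)}$ at least one $D^{(j)}$ is one. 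Run the reconstruction once for each (extended) $D^{(j)}$, obtaining candidate circuits $\Gamma^{(j)}_r$, and then \emph{test} each candidate against the trusted circuit from stage $r-1$: using self-correctability and downward self-reducibility of $L^\star$, a candidate circuit for length $n(r)$ can be probabilistically verified for correctness given a correct circuit on the shorter lengths. Keep any candidate that passes; since a genuine distinguisher is present, some $\Gamma^{(j)}_r$ is correct and passes, whereas no incorrect candidate passes except with exponentially small probability.

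Iterating for $r$ from a constant up to $\ell^\star \eqdef \poly(N)$ chosen large enough that $n(\ell^\star) \geq N$ yields, with high probability, a polynomial-size circuit for $L^\star$ on input length $N$, which we then evaluate on the input $x$. The stages $r$ below the (unknown, constant) threshold past which the hypothesis takes effect are handled by brute force, which costs only $2^{O(r)} = \poly(N)$ for $r = O(\log N)$; the finitely many input lengths $N$ too small for even this are absorbed non-uniformly into the $\mathsf{BPP}$ machine, harmless for the mere \emph{existence} claim $\mathsf{PSPACE} = \mathsf{BPP}$ and in keeping with the non-constructive flavour of the paper. Amplifying each stage's failure probability below $1/\poly(N)$ and union-bounding over the $\poly(N)$ stages keeps the total error and running time under control.

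The crux, and the only substantially new point relative to Theorem~\ref{t:uniform_hard_rand}, is guarding against a bad draw: feeding the reconstruction a circuit that is \emph{not} a distinguisher must not cause us to commit to an incorrect circuit for $L^\star$ at some stage and thereby poison all later stages. This is exactly why the testing step is indispensable, and why one needs $L^\star$ to be downward self-reducible and self-correctable: these let us check a candidate circuit for the current length with high confidence \emph{relative only to the already-certified circuit for the shorter lengths}, so spurious candidates are rejected and the iteration never advances on a false hypothesis. The remaining ingredients --- oversampling by a factor $n_r^2$ beyond the $n_r^c$ samples that suffice in expectation, standard error amplification, and a union bound over $\poly(N)$ stages --- are routine, which is precisely why the full argument is so close to that of Theorem~\ref{t:uniform_hard_rand}.
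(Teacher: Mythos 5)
Your proposal is correct and matches the argument the paper intends: the paper omits the proof of Lemma~\ref{l:PRGUnif}, noting only that it is ``almost identical'' to that of Theorem~\ref{t:uniform_hard_rand}, and the one genuinely new ingredient you supply --- oversampling from $\mathcal{D}_{n_r}$ at each stage and verifying every reconstructed candidate circuit against the trusted shorter-length circuit via downward self-reducibility and self-correction, so that a bad draw cannot poison the iteration --- is exactly how the distributional hypothesis is handled in \citep{DBLP:journals/jcss/ImpagliazzoW01, DBLP:journals/cc/TrevisanV07}. The only nit is a parameter constant: resolving the $L^\star_{\leq \ell}$ oracle queries by brute force costs $2^{O(\ell^a)}$ rather than $2^{O(\ell)}$ (where $L^\star$ is decidable in space $O(m^a)$), so the seed length should be taken as roughly $n^{\varepsilon/a}$, as in the paper's construction of $\mathcal{H}^{\mathsf{hard}}$ --- harmless given the universal quantification over $\varepsilon$.
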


We prove the following unconditional result on the (randomized) pseudodeterministic complexity of $\mathsf{CAPP}$.

\begin{theorem}[Restatement of Theorem \ref{t:capp}]
\label{t:pseudoCAPP}
For each $\varepsilon > 0$ and polynomial-time samplable distribution $\mathfrak{D}$, $\mathsf{CAPP}$ is solvable infinitely often pseudodeterministically in time $2^{O(n^{\varepsilon})}$ on average over $\mathfrak{D}$.
\end{theorem}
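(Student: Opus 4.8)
The plan is a win--win analysis driven by whether the generator of Lemma~\ref{l:PRGUnif} fools circuits drawn from $\mathfrak{D}$ on average. Fix $\varepsilon > 0$ and a polynomial-time samplable ensemble $\mathfrak{D} = \{\mathcal{D}_n\}$ of Boolean circuits, and let $\mathfrak{G} = \{G_n\}$ with $G_n \colon \{0,1\}^{n^{\varepsilon}} \to \{0,1\}^n$ be the generator from Lemma~\ref{l:PRGUnif}, which is computable in time $2^{O(n^{\varepsilon})}$. The first ingredient, the standard ``white-box'' use of a PRG, is that on input a circuit $C$ on $n$ inputs the quantity $\widehat{p}(C) \eqdef 2^{-n^{\varepsilon}} \sum_{s \in \{0,1\}^{n^{\varepsilon}}} C(G_n(s))$ is computable deterministically in time $2^{O(n^{\varepsilon})}$, and whenever $C$ is $(1/10)$-fooled by $G_n$ it is a $(1/10)$-additive approximation to $\Pr_{x \sim U_n}[C(x)=1]$.

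If $\mathfrak{G}$ is an i.o.PRG on average over $\mathfrak{D}$, then for every $c > 0$ there are infinitely many $n$ for which $G_n$ $(1/10)$-fools $C_n$ with probability $\geq 1 - 1/n^c$ over $C_n \sim \mathcal{D}_n$; on each such $n$ the deterministic algorithm $C \mapsto \widehat{p}(C)$ solves $\mathsf{CAPP}$ with probability $\geq 1 - 1/n^c$ over $\mathcal{D}_n$. Thus $\mathsf{CAPP}$ is solvable infinitely often in deterministic time $2^{O(n^{\varepsilon})}$ on average over $\mathfrak{D}$, which trivially yields the pseudodeterministic conclusion.

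Otherwise $\mathfrak{G}$ is not an i.o.PRG on average over $\mathfrak{D}$, which unwound means that there is a constant $c > 0$ such that for all large $n$, with probability $> 1/n^c$ over $C_n \sim \mathcal{D}_n$ the circuit $C_n$ is a $(1/10)$-distinguisher for $G_n$ --- precisely the hypothesis of Lemma~\ref{l:PRGUnif}. Hence $\mathsf{PSPACE} = \mathsf{BPP}$. A standard padding argument upgrades this to $\mathsf{ESPACE} = \mathsf{BPE}$, and combined with a direct diagonalization producing a language in $\mathsf{DSPACE}(2^{O(m)})$ of circuit complexity $\geq 2^{m/2}$ on all large input lengths, I obtain a Boolean function $h \in \mathsf{BPE}$ requiring circuits of size $\geq 2^{m/2}$ almost everywhere. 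Plugging $h$ into Lemma~\ref{l:cond_discset} with hardness parameter $1/2$, and invoking Remark~\ref{r:nonuniform} so that the conclusion applies to properties given by small circuits, I get --- for any fixed $c_0$ with $n^{c_0}$ an upper bound on the size of circuits in the support of $\mathcal{D}_n$ --- a bounded-error pseudodeterministic polynomial-time algorithm $A'$ whose canonical output $H_n = A'(1^n)$ is a $(1/n^{c_0})$-discrepancy set for every property at length $n$ computed by circuits of size $\leq n^{c_0}$. The $\mathsf{CAPP}$ algorithm is then: on input $C_n$, run $A'(1^n)$ to get $H_n$ and output $\frac{1}{|H_n|}\sum_{y \in H_n} C_n(y)$. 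Applying the discrepancy guarantee to the property $\{x : C_n(x)=1\}$ (whose characteristic function has a circuit of size $\leq n^{c_0}$) shows this value is within $1/n^{c_0} < 1/10$ of $\Pr_x[C_n(x)=1]$ for \emph{every} such $C_n$ and all large $n$; since $H_n$ attains its canonical value with probability $1-o(1)$ and the output is a deterministic function of $H_n$ and $C_n$, the algorithm is pseudodeterministic and runs in time $\poly(n)$. This is in fact worst-case over circuits and holds for all large $n$, hence a fortiori gives the claimed infinitely-often pseudodeterministic $2^{O(n^{\varepsilon})}$-time algorithm on average over $\mathfrak{D}$; since one of the two cases must occur, the theorem follows.

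I expect the only delicate points to be bookkeeping: massaging the quantifier negation of ``not an i.o.PRG on average'' into the exact shape of the hypothesis of Lemma~\ref{l:PRGUnif}, and the routine but essential appeal to Remark~\ref{r:nonuniform}, needed because the distinguishing property $\{x : C_n(x)=1\}$ is handed to us as a circuit rather than as a uniform machine. All of the substantive work is already carried by Lemmas~\ref{l:cond_discset} and~\ref{l:PRGUnif}, so there is no real additional obstacle.
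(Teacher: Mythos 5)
Your proposal is correct and follows essentially the same win--win argument as the paper's own proof: the i.o.PRG case yields a deterministic $2^{O(n^{\varepsilon})}$-time average-case algorithm, and the failure case invokes Lemma~\ref{l:PRGUnif} to collapse $\mathsf{PSPACE}=\mathsf{BPP}$, pads to $\mathsf{ESPACE}=\mathsf{BPE}$, and applies Lemma~\ref{l:cond_discset} with Remark~\ref{r:nonuniform} to get a worst-case pseudodeterministic polynomial-time discrepancy-set algorithm for $\mathsf{CAPP}$. The quantifier negation and the appeal to the nonuniform version of the discrepancy-set lemma are handled exactly as in the paper.
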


\begin{proof}
Let $\mathfrak{D}$ be any polynomial-time samplable distribution and $\varepsilon > 0$ be any constant. We show that at least one of the following holds: (1) $\mathsf{CAPP}$ is solvable pseudodeterministically in time $\poly(n)$ in the worst case, or (2) $\mathsf{CAPP}$ is solvable infinitely often deterministically in time $2^{O(n^{\varepsilon})}$ on average over $\mathfrak{D}$.

Let $\mathfrak{G} = \{G_n\}$ be the sequence of generators  given by Lemma \ref{l:PRGUnif}, and consider the algorithm $A$ that works as follows given a input circuit $C$ represented by a string of length $n$. $A$ counts the fraction of outputs of $G_n$ accepted by $C$, and outputs this fraction. If $G_n$ is an i.o.PRG on average over $\mathfrak{D}$, the algorithm $A$ solves $\mathsf{CAPP}$ infinitely often on average over $\mathfrak{D}$, since $A$ does not output a $1/10$-approximation for $C$ if and only if $C$ $(1/10)$-distinguishes the output of $G_n$ from a random $n$-bit string. The algorithm $A$ can be implemented in time $2^{O(n^{\varepsilon})}$ using the fact that $G_n$ is computable in time $2^{O(n^{\varepsilon})}$. Thus item (2) holds in this case.

If $\mathfrak{G}$ is not an i.o.PRG on average over $\mathfrak{D}$, then we can apply Lemma \ref{l:PRGUnif} to get $\mathsf{PSPACE} = \mathsf{BPP}$. In this case, by a simple translation argument, we have that $\mathsf{ESPACE} = \mathsf{BPE}$, and since $\mathsf{ESPACE}$ requires circuits of size $\geq 2^{m/2}$ on all large enough $m$ by direct diagonalization, we have that $\mathsf{BPE}$ requires circuits of size $2^{\Omega(m)}$ on all large enough $m$. Now by Lemma \ref{l:cond_discset} and Remark \ref{r:nonuniform}, we have a pseudodeterministic polynomial time algorithm which outputs a discrepancy set family which works for any circuit $C$ of size $n$ for large enough $n$, and hence again by outputting the fraction of elements in the discrepancy set which belong to $C$, we get a worst-case pseudodeterministic polynomial time algorithm solving $\mathsf{CAPP}$.  
\end{proof}

Finally, we establish the following equivalences. Note that all algorithms mentioned below are \emph{deterministic}.

\begin{theorem}[Restatement of Theorem \ref{t:equiv}]
\label{t:deranequivalence}
The following statements are equivalent\emph{:}
\begin{enumerate}
\item[\emph{1.}] For each polynomial-time samplable distribution $\mathfrak{D}$ of Boolean circuits and each $\varepsilon > 0$, there
is an \emph{i.o.PRG} $\mathfrak{G}$ on average over $\mathfrak{D}$ with seed length $n^{\varepsilon}$ that is computable in time $2^{O(n^{\varepsilon})}$.

\item[\emph{2.}] For each polynomial-time samplable distribution $\mathfrak{D}$ over Boolean circuits and each $\varepsilon > 0$, $\mathsf{CAPP}$ is solvable infinitely often in time $2^{O(n^{\varepsilon})}$ on average over $\mathfrak{D}$.

\item[\emph{3.}] For each polynomial-time samplable distribution $\mathfrak{D}$ over input strings and each $\varepsilon > 0$, $\mathsf{BPP}$ is solvable infinitely often in time $2^{O(n^{\varepsilon})}$ with $O(\log(n))$ bits of advice on average over $\mathfrak{D}$.

\item[\emph{4.}] For each $\varepsilon > 0$, $\mathsf{BPP}$ is solvable infinitely often in time $2^{O(n^{\varepsilon})}$ on average over $U_n$.

\end{enumerate}

\end{theorem}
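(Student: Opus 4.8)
The plan is to close the cycle of implications $1 \Rightarrow 2 \Rightarrow 3 \Rightarrow 4 \Rightarrow 1$, which suffices for the equivalence.

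\textbf{$1\Rightarrow 2$ and $2\Rightarrow 3$.} For $1\Rightarrow 2$, fix a samplable distribution $\mathfrak{D}$ of circuits and $\varepsilon>0$, take the i.o.PRG $\mathfrak{G}=\{G_n\}$ on average over $\mathfrak{D}$ with seed length $n^\varepsilon$ from item~1, and let the $\mathsf{CAPP}$ algorithm on a size-$n$ circuit $C$ cycle through all $2^{n^\varepsilon}$ seeds and output the empirical acceptance rate of $C$ on $\{G_n(s)\}_s$; this runs in time $2^{O(n^\varepsilon)}$, and on each length where $G_n$ $(1/10)$-fools a $(1-1/n^c)$-fraction of $C\sim\mathcal{D}_n$ it returns a $(1/10)$-approximation for that fraction of circuits, which by definition of an i.o.PRG on average happens for infinitely many $n$ and every $c$. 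For $2\Rightarrow 3$, take $L\in\mathsf{BPP}$ decided by a randomized algorithm of running time $\le n^d$ with error $\le 1/3$, and $\mathfrak{D}$ samplable over inputs; for $x\in\{0,1\}^n$ let $C_x$ be the circuit of size $\le n^d$ obtained by hardwiring $x$ into the verifier, so a $(1/10)$-approximation to $\Pr[C_x(\cdot)=1]$ decides $L(x)$. Partition $\mathbb{N}$ into disjoint intervals $J_n=[n^d,(n+1)^d)$; for each $m$ with $m\in J_{n(m)}$, the (injective) map $x\mapsto(C_x$ padded with a copy of $x$ to size $m)$, for $x\sim\mathcal{D}_{n(m)}$, is a samplable distribution $\mathcal{D}'_m$ of size-$m$ circuits. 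Apply item~2 to each $\mathcal{D}'_m$ with a parameter $\varepsilon_0<\varepsilon/d$: for infinitely many $m$ one gets a deterministic $2^{O(m^{\varepsilon_0})}$-time on-average $\mathsf{CAPP}$ algorithm, and since each $J_n$ is finite, infinitely many good $m$ land in distinct intervals. The $O(\log n)$ bits of advice at length $n$ name a good $m\in J_n$ when one exists; running item~2's algorithm on the padded $C_x$ then solves $L$ on average over $\mathcal{D}_n$ in time $2^{O(n^{d\varepsilon_0})}\le 2^{O(n^\varepsilon)}$ for infinitely many $n$ (injectivity of $x\mapsto C_x$ transfers the ``fraction of circuits'' bound to a ``fraction of inputs'' bound), giving item~3.

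\textbf{$3\Rightarrow 4$.} First show item~3 forces $\mathsf{EXP}\neq\mathsf{BPP}$ by diagonalization: build $L\in\mathsf{DTIME}(2^{O(n)})$ that, at every sufficiently large length $n$, disagrees on a $\ge 1/n$-fraction of inputs with $M_i(\cdot,a)$ for every machine $M_i$ of index $i\le \sqrt n$ running in time $2^{\sqrt n}$ and every advice string $a$ of length $O(\log n)$ -- possible by reserving, at length $n$, a block of size $2^n/\mathrm{poly}(n)$ per pair $(M_i,a)$ (there are only $\mathrm{poly}(n)$ such pairs) and committing $L$ to disagree on half of each block; each $M_i$ is then killed at all $n\ge i^2$. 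Such an $L$ lies in $\mathsf{EXP}$ but is not infinitely-often solvable on a $(1-1/n)$-fraction over $U_n$ by any subexponential-time algorithm with $O(\log n)$ advice, so if $\mathsf{EXP}=\mathsf{BPP}$ then $L\in\mathsf{BPP}$ contradicts item~3. Given $\mathsf{EXP}\neq\mathsf{BPP}$, invoke the uniform hardness--randomness theorem of Impagliazzo--Wigderson \citep{DBLP:journals/jcss/ImpagliazzoW01}, which yields for every $\varepsilon>0$ an infinitely-often deterministic $2^{O(n^\varepsilon)}$-time simulation of $\mathsf{BPP}$ correct on a $(1-1/n^c)$-fraction under $U_n$ -- exactly item~4.

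\textbf{$4\Rightarrow 1$.} The same diagonalization (now without advice, hence simpler) shows item~4 implies $\mathsf{EXP}\neq\mathsf{BPP}$. Now do a win-win on whether $\mathsf{EXP}\subseteq\mathsf{P}/\mathsf{poly}$. If $\mathsf{EXP}\not\subseteq\mathsf{P}/\mathsf{poly}$, there is a language in $\mathsf{EXP}$ that for every $k$ has circuit complexity $>\ell^k$ for infinitely many input lengths $\ell$; instantiate the Nisan--Wigderson--style generator of \citep{DBLP:journals/cc/BabaiFNW93} with (the low-degree extension of) this function, choosing $\ell$ as a slowly growing function of the output length $n$ so that $\ell^k$ beats the reconstruction overhead $\mathrm{poly}(n)$; this gives, for every $\varepsilon$, a generator with seed length $n^{o(1)}\le n^\varepsilon$ computable in time $2^{n^{o(1)}}$ that fools all size-$n$ circuits infinitely often, and fooling all small circuits is a fortiori an i.o.PRG on average over any samplable $\mathfrak{D}$ -- item~1. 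If instead $\mathsf{EXP}\subseteq\mathsf{P}/\mathsf{poly}$, the Karp--Lipton-style collapse for $\mathsf{EXP}$ \citep{DBLP:conf/stoc/KarpL80, DBLP:journals/cc/BabaiFNW93} gives $\mathsf{EXP}=\mathsf{MA}$, hence $\mathsf{PSPACE}=\mathsf{EXP}=\mathsf{MA}$, and since $\mathsf{EXP}\neq\mathsf{BPP}$ we get $\mathsf{PSPACE}\neq\mathsf{BPP}$; applying the contrapositive of Lemma~\ref{l:PRGUnif}, for every samplable $\mathfrak{D}$ of circuits, every $\varepsilon>0$ and every $c$ there are infinitely many $n$ on which fewer than a $1/n^c$-fraction of $C\sim\mathcal{D}_n$ distinguish the generator of that lemma (seed length $n^\varepsilon$, time $2^{O(n^\varepsilon)}$), which is precisely an i.o.PRG on average over $\mathfrak{D}$ -- item~1 again.

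\textbf{Main obstacle.} The bookkeeping in $1\Rightarrow 2$ and $2\Rightarrow 3$ (padding, the advice bits, and lining up the ``infinitely often'' quantifier over circuit sizes with the one over input lengths) is routine. The genuinely delicate points are: (i) the diagonalizations that extract $\mathsf{EXP}\neq\mathsf{BPP}$ from the weak average-case hypotheses of items~3 and~4 -- these must simultaneously achieve constant (or $1/\mathrm{poly}$) agreement error, handle the infinitely-often quantifier at \emph{all} large lengths for \emph{every} machine, and (for item~3) absorb the $O(\log n)$ advice; and (ii) in the first branch of $4\Rightarrow 1$, checking that the Nisan--Wigderson reconstruction still goes through with only the \emph{superpolynomial} hardness supplied by $\mathsf{EXP}\not\subseteq\mathsf{P}/\mathsf{poly}$ while keeping the seed length below $n^\varepsilon$ and the running time $2^{O(n^\varepsilon)}$, matching parameters with the remaining branch where Lemma~\ref{l:PRGUnif} is used.
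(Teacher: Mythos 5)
Your proposal is correct and follows essentially the same route as the paper: the same cycle $1\Rightarrow 2\Rightarrow 3\Rightarrow 4\Rightarrow 1$, the same padding-plus-advice construction for $2\Rightarrow 3$, advice elimination in $3\Rightarrow 4$ via $\mathsf{EXP}\neq\mathsf{BPP}$ and Impagliazzo--Wigderson (which the paper black-boxes by citing their Corollaries 7 and 9, where you spell out the diagonalization), and the identical win-win on $\mathsf{EXP}\subseteq\mathsf{P}/\mathsf{poly}$ combining \citep{DBLP:journals/cc/BabaiFNW93} with Lemma \ref{l:PRGUnif} for $4\Rightarrow 1$. The only nitpick is quantitative: with $\mathrm{poly}(n)$ blocks your diagonal language disagrees on a $1/\mathrm{poly}(n)$ fraction rather than $1/n$, which is harmless since item~3 supplies error below $1/n^c$ for every $c$.
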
 

\begin{proof} The equivalence is established by the following chain of implications.

\vspace{0.2cm}
$(1) \Rightarrow (2)$: Fix a distribution $\mathfrak{D}$, and let $\mathfrak{G}$ be some PRG for $\mathfrak{D}$ that is guaranteed to exist by item (1). The algorithm for $\mathsf{CAPP}$ on a circuit $C_n$ of size $n$ simply runs $G_n$ on all seeds of size $n^{\varepsilon}$ and counts the fraction of seeds for which $C_n$ accepts. It outputs this fraction. Clearly, the algorithm can be implemented in time $2^{O(n^{\varepsilon})}$, as $G_n$ is computable in that amount of time. The correctness of the algorithm follows immediately from the guarantee on the PRG $\mathfrak{G}$ given by (1).

\vspace{0.2cm}
$(2) \Rightarrow (3)$: Fix a distribution $\mathfrak{D}$, let $L \in \mathsf{BPTIME}(n^k)$, where $k \geq 1$ is a fixed constant, and let $M$ be a bounded-error probabilistic Turing machine solving $L$ in time $n^k$ with error $\leq 1/3$ for $n$ large enough. To solve $L$ infinitely often in time $2^{O(n^{\varepsilon})}$ on average over $\mathfrak{D}$, we invoke the algorithm $A = A_{\mathfrak{D}', \varepsilon'}$ given by item (2) for the distribution $\mathfrak{D'}$ specified below, and parameter $\varepsilon' = \varepsilon/2k$. 

To sample from the distribution $\mathfrak{D'}$ on $n'$-bit strings, we first determine $n = \lfloor (n')^{1/2k} \rfloor$. We sample an input $x$ of length $n$ from $\mathcal{D}_n$, then compute the randomized circuit $C^x_{n^{2k}}$ obtained by applying the standard translation of randomized algorithms to circuits to the computation of $M$ on $x$. We then pad $C^x_{n^{2k}}$ in a standard way to an equivalent circuit $C^x_{n'}$ of size $n'$. $C^x_{n'}$ has input of length $n$, but we can simply pad the input length to $n'$ using dummy input bits. Clearly $\mathfrak{D'}$ is polynomial-time samplable, and hence there is an algorithm $A$ as above solving $\mathsf{CAPP}$ infinitely often on average over $\mathfrak{D'}$ in time $2^{O((n')^{\varepsilon'})}$, which is at most $2^{O(n^{\varepsilon})}$. 

We show how to solve $L$ infinitely often in time $2^{O(n^{\varepsilon})}$ with $O(\log(n))$ bits of advice on average over $\mathfrak{D}$. We define a deterministic machine $N$ taking $O(\log(n))$ bits of advice as follows. On input $x$ of length $n$, $N$ uses its advice to determine a length $n'$ such that $n^{2k} \leq n' < (n+1)^{2k}$. $N$ finds a randomized circuit $C^x_{n^{2k}}$ corresponding to the computation of $M$ on $x$ by performing the standard translation, and then pads this circuit in the standard way to a circuit $C^x_{n'}$ on $n'$ input bits. It applies the algorithm $A$ to $C^x_{n'}$, accepting if and only if $A$ outputs a number greater than $1/2$.

We are given that $A$ solves $\mathsf{CAPP}$ infinitely often in subexponential time on average over $\mathfrak{D'}$, and we would like to conclude that $N$ solves $L$ infinitely often in subexponential time on average over $\mathfrak{D}$. Indeed, let $\{n_i\}$ be an infinite sequence of input lengths on which $A$ solves $\mathsf{CAPP}$ in subexponential time on average over $\mathfrak{D'}$. It is not hard to see that $\{\lfloor (n_i)^{1/2k} \rfloor\}$ is an infinite sequence of input lengths on which $N$ solves $L$ in subexponential time with logarithmic advice on average over $\mathfrak{D}$.

\vspace{0.2cm}
$(3) \Rightarrow (4)$: It follows immediately from item (3) that for each $\varepsilon > 0$, $\mathsf{BPP}$ is solvable infinitely often in time $2^{O(n^{\varepsilon})}$ with $O(\log(n))$ advice on average over $U_n$, simply because $U_n$ is polynomial-time samplable. Corollary 7 and Corollary 9 of \citep{DBLP:journals/jcss/ImpagliazzoW01} then imply that for each $\varepsilon > 0$, $\mathsf{BPP}$ is solvable infinitely often in time $2^{O(n^{\varepsilon})}$ on average over $U_n$.

\vspace{0.2cm}
$(4) \Rightarrow (1)$: Here we use Lemma \ref{l:PRGUnif}. By diagonalization (cf.~Theorem 6 in \citep{DBLP:journals/jcss/ImpagliazzoW01}), we have that (4) implies $\mathsf{EXP} \neq \mathsf{BPP}$. Now there are two cases: either $\mathsf{EXP}$ does not have polynomial-size circuits, or $\mathsf{PSPACE} \neq \mathsf{BPP}$. Indeed, if both were false, we would have that $\mathsf{EXP} = \mathsf{PSPACE}$ (by the Karp-Lipton theorem for $\mathsf{EXP}$) and that $\mathsf{PSPACE} = \mathsf{BPP}$, which would together imply $\mathsf{EXP} = \mathsf{BPP}$, contradicting our assumption.

In the first case, by the hardness-randomness tradeoff of \citep{DBLP:journals/cc/BabaiFNW93}, it follows that for each $\varepsilon > 0$ there is an i.o.PRG with seed length $n^{\varepsilon}$, which is computable in time $2^{O(n^{\varepsilon})}$. Note that this i.o.PRG works even in the worst case, without a distributional assumption on circuits it fools. In the second case, we use Lemma \ref{l:PRGUnif} to conclude that for each polynomial-time samplable sequence $\mathfrak{D}$ of distributions  and for each $\varepsilon > 0$, there is an i.o.PRG with seed length $n^{\varepsilon}$ on average against $\mathfrak{D}$, computable in time $2^{O(n^{\varepsilon})}$. Hence in either case (1) follows, concluding our proof.
\end{proof}

\section{Further Directions}

We propose some directions for further research:

\begin{enumerate}

\item Theorem \ref{t:primes} is proved using general complexity-theoretic considerations, using no information about the set $\mathsf{Primes}$ apart from its polynomial density and its  decidability by a polynomial-time algorithm. The primes have been intensively studied, and a lot is known about their structure. Can this structural information be leveraged to prove stronger results about generating primes? Perhaps the technique of using complexity-theoretic pseudorandomness applied here could be combined with earlier ideas for generating primes deterministically to show stronger results.

\item Can other kinds of probabilistic algorithms be made pseudodeterministic? In very recent work, we use our ideas to give such algorithms unconditionally for various approximate counting problems. 

\item Are black-box derandomization, white-box derandomization and $\mathsf{BPP} = \mathsf{P}$ equivalent in the standard setting? Here, by the standard setting, we mean that we are interested in worst-case polynomial-time simulations that work for all large enough input lengths. As a first step toward this goal, it would be interesting to get the equivalence for average-case simulations, where we are even prepared to relax the ``works almost everywhere'' condition on the simulation to ``works infinitely often''.

\end{enumerate}

\section{Acknowledgements}

 We acknowledge useful discussions with Scott Aaronson, Valentine Kabanets, Jan Kraj\'{\i}\v{c}ek, James Maynard, Toni Pitassi and Amnon Ta-Shma.  Part of this work was done while the first author was visiting the second, funded by the second author's ERC Consolidator Grant no.~615075.

\bibliographystyle{alpha}	
\bibliography{refs}	

\begin{thebibliography}{BFNW93}

\bibitem[AB09]{Arora-Barak09}
Sanjeev Arora and Boaz Barak.
\newblock {\em Complexity Theory: {A} Modern Approach}.
\newblock Cambridge University Press, 2009.

\bibitem[AKS02]{Agrawal02primesis}
Manindra Agrawal, Neeraj Kayal, and Nitin Saxena.
\newblock {PRIMES} is in {P}.
\newblock {\em Ann. of Math.}, 2:781--793, 2002.

\bibitem[ASS01]{DBLP:journals/jcss/AllenderSS01}
Eric Allender, Michael~E. Saks, and Igor~E. Shparlinski.
\newblock A lower bound for primality.
\newblock {\em J. Comput. Syst. Sci.}, 62(2):356--366, 2001.

\bibitem[Bar02]{Barak02}
Boaz Barak.
\newblock A probabilistic-time hierarchy theorem for "slightly non-uniform"
  algorithms.
\newblock In {\em 6th International Workshop on Randomization and Approximation
  Techniques}, pages 194--208, 2002.

\bibitem[BFNW93]{DBLP:journals/cc/BabaiFNW93}
L{\'{a}}szl{\'{o}} Babai, Lance Fortnow, Noam Nisan, and Avi Wigderson.
\newblock {BPP} has subexponential time simulations unless {EXPTIME} has
  publishable proofs.
\newblock {\em Computational Complexity}, 3:307--318, 1993.

\bibitem[BHP01]{Baker-Harman-Pintz01}
Roger Baker, Glyn Harman, and Janos Pintz.
\newblock The difference between conseuctive primes, {II}.
\newblock {\em Proceedings of the London Mathematical Society}, 83(3):532--562,
  2001.

\bibitem[Coh16]{DBLP:conf/stoc/Cohen16}
Gil Cohen.
\newblock Two-source dispersers for polylogarithmic entropy and improved
  {R}amsey graphs.
\newblock In {\em Symposium on Theory of Computing \emph{(STOC)}}, pages
  278--284, 2016.

\bibitem[CZ16]{DBLP:conf/stoc/ChattopadhyayZ16}
Eshan Chattopadhyay and David Zuckerman.
\newblock Explicit two-source extractors and resilient functions.
\newblock In {\em Symposium on Theory of Computing \emph{(STOC)}}, pages
  670--683, 2016.

\bibitem[For01]{DBLP:conf/coco/Fortnow01}
Lance Fortnow.
\newblock Comparing notions of full derandomization.
\newblock In {\em Conference on Computational Complexity \emph{(CCC)}}, pages
  28--34, 2001.

\bibitem[FS05]{FortnowSanthanam04}
Lance Fortnow and Rahul Santhanam.
\newblock Hierarchy theorems for probabilistic polynomial time.
\newblock In {\em Proceedings of 45th Symposium on Foundations of Computer
  Science}, pages 316--324, 2005.

\bibitem[GG11]{DBLP:journals/eccc/GatG11}
Eran Gat and Shafi Goldwasser.
\newblock Probabilistic search algorithms with unique answers and their
  cryptographic applications.
\newblock {\em Electronic Colloquium on Computational Complexity
  \emph{(ECCC)}}, 18:136, 2011.

\bibitem[GG15]{DBLP:journals/eccc/GoldwasserG15}
Shafi Goldwasser and Ofer Grossman.
\newblock Perfect bipartite matching in pseudo-deterministic {RNC}.
\newblock {\em Electronic Colloquium on Computational Complexity
  \emph{(ECCC)}}, 22:208, 2015.

\bibitem[GGR13]{DBLP:conf/innovations/GoldreichGR13}
Oded Goldreich, Shafi Goldwasser, and Dana Ron.
\newblock On the possibilities and limitations of pseudodeterministic
  algorithms.
\newblock In {\em Innovations in Theoretical Computer Science \emph{{(ITCS)}}},
  pages 127--138, 2013.

\bibitem[Gol08]{DBLP:books/daglib/0019967}
Oded Goldreich.
\newblock {\em Computational Complexity - A Conceptual Perspective}.
\newblock Cambridge University Press, 2008.

\bibitem[Gol11]{DBLP:books/sp/goldreich2011/Goldreich11g}
Oded Goldreich.
\newblock In a world of {P = BPP}.
\newblock In {\em Studies in Complexity and Cryptography}, pages 191--232.
  2011.

\bibitem[Gro15]{DBLP:journals/eccc/Grossman15}
Ofer Grossman.
\newblock Finding primitive roots pseudo-deterministically.
\newblock {\em Electronic Colloquium on Computational Complexity
  \emph{{(ECCC)}}}, 22:207, 2015.

\bibitem[IKW02]{DBLP:journals/jcss/ImpagliazzoKW02}
Russell Impagliazzo, Valentine Kabanets, and Avi Wigderson.
\newblock In search of an easy witness: exponential time vs. probabilistic
  polynomial time.
\newblock {\em J. Comput. Syst. Sci.}, 65(4):672--694, 2002.

\bibitem[Imp95]{Impagliazzo95}
Russell Impagliazzo.
\newblock A personal view of average-case complexity.
\newblock In {\em Structure in Complexity Theory Conference \emph{(CCC)}},
  pages 134--147, 1995.

\bibitem[IW97]{DBLP:conf/stoc/ImpagliazzoW97}
Russell Impagliazzo and Avi Wigderson.
\newblock P = {BPP} if {E} requires exponential circuits: Derandomizing the
  {XOR} lemma.
\newblock In {\em Symposium on Theory of Computing \emph{(STOC)}}, pages
  220--229, 1997.

\bibitem[IW01]{DBLP:journals/jcss/ImpagliazzoW01}
Russell Impagliazzo and Avi Wigderson.
\newblock Randomness vs time: Derandomization under a uniform assumption.
\newblock {\em J. Comput. Syst. Sci.}, 63(4):672--688, 2001.

\bibitem[Juk12]{DBLP:books/daglib/0028687}
Stasys Jukna.
\newblock {\em Boolean Function Complexity - Advances and Frontiers}.
\newblock Springer, 2012.

\bibitem[Kab01]{DBLP:journals/jcss/Kabanets01}
Valentine Kabanets.
\newblock Easiness assumptions and hardness tests: Trading time for zero error.
\newblock {\em J. Comput. Syst. Sci.}, 63(2):236--252, 2001.

\bibitem[KL80]{DBLP:conf/stoc/KarpL80}
Richard~M. Karp and Richard~J. Lipton.
\newblock Some connections between nonuniform and uniform complexity classes.
\newblock In {\em Symposium on Theory of Computing \emph{(STOC)}}, pages
  302--309, 1980.

\bibitem[LO87]{DBLP:journals/jal/LagariasO87}
J.~C. Lagarias and Andrew~M. Odlyzko.
\newblock Computing pi(x): An analytic method.
\newblock {\em J. Algorithms}, 8(2):173--191, 1987.

\bibitem[OS16]{1611.01190}
Igor~C. Oliveira and Rahul Santhanam.
\newblock Conspiracies between learning algorithms, circuit lower bounds and
  pseudorandomness.
\newblock {\em \emph{Available on arXiv:cs.CC/1611.01190}}, 2016.

\bibitem[San12]{DBLP:journals/mst/Santhanam12}
Rahul Santhanam.
\newblock The complexity of explicit constructions.
\newblock {\em Theory Comput. Syst.}, 51(3):297--312, 2012.

\bibitem[Sho90]{Shoup90}
Victor Shoup.
\newblock New algorithms for finding irreducible polynomials over finite
  fields.
\newblock {\em Mathematics of Computation}, 54(189):435--447, 1990.

\bibitem[TCH12]{MR2869058}
Terence Tao, Ernest Croot, III, and Harald Helfgott.
\newblock Deterministic methods to find primes.
\newblock {\em Math. Comp.}, 81(278):1233--1246, 2012.

\bibitem[Tre06]{MR2275721}
Luca Trevisan.
\newblock Pseudorandomness and combinatorial constructions.
\newblock In {\em International {C}ongress of {M}athematicians. {V}ol. {III}},
  pages 1111--1136. Eur. Math. Soc., Z\"urich, 2006.

\bibitem[TV07]{DBLP:journals/cc/TrevisanV07}
Luca Trevisan and Salil~P. Vadhan.
\newblock Pseudorandomness and average-case complexity via uniform reductions.
\newblock {\em Computational Complexity}, 16(4):331--364, 2007.

\bibitem[Vad12]{TCS-010}
Salil~P. Vadhan.
\newblock Pseudorandomness.
\newblock {\em Foundations and Trends in Theoretical Computer Science},
  7(1–3):1--336, 2012.

\end{thebibliography}

\end{document}